\newcommand{\DTV}{\!{TV}}
\newcommand{\DKL}[2]{\!{KL}\tp{#1\,\|\,#2}}
\newcommand{\OU}{\!{OU}}
\newcommand{\Id}{\!{Id}_d}
\newcommand{\RGO}{\!{RGO}}
\newcommand{\Rmnum}[1]{\uppercase\expandafter{\romannumeral #1}}
\crefname{assumption}{Assumption}{Assumptions}
\Crefname{assumption}{Assumption}{Assumptions}
\crefname{condition}{Condition}{Conditions}
\Crefname{condition}{Condition}{Conditions}
\title{Improved sampling algorithms and functional inequalities for non-log-concave distributions}
\author{Yuchen He \\ Shanghai Jiao Tong University \\ \textsf{yuchen\_he@sjtu.edu.cn} \and Zhehan Lei \\ Shanghai Jiao Tong University \\ \textsf{hsfzlzh1@sjtu.edu.cn} \and Jianan Shao \\ Shanghai Jiao Tong University \\ \textsf{sjtu13362955623@sjtu.edu.cn}  \and Chihao Zhang\\ Shanghai Jiao Tong University \\ \textsf{chihao@sjtu.edu.cn}}
\begin{document}
\maketitle
 
\begin{abstract}

We study the problem of sampling from a distribution $\mu$ with density proportional to $e^{-V}$ for a potential function $V:\bb R^d\to \bb R$ with query access to $V$ and $\grad V$. We begin with two standard assumptions on $\mu$:
\begin{itemize}
    \item[](1) The potential function $V$ is $L$-smooth.
    \item[](2) The second moment of $\mu$ is finite, \IE, $\E[X\sim \mu]{\|X\|^2}\leq M$ for some $M<\infty$.
\end{itemize}
Recently, He and Zhang (COLT, 2025) showed that the  query complexity for sampling from this family of distributions can be as large as $\tp{\frac{LM}{d\eps}}^{\Omega(d)}$, where $\eps$ is the target accuracy in total variation distance, and this indicates that the \Poincare constant can be arbitrarily large.

On the other hand, another common assumption in the literature on diffusion-based sampling algorithms (see \EG, the work of Chen, Chewi, Li, Li, Salim and Zhang (ICLR, 2023)) strengthens the smoothness condition (1) on $V$ to the following:
\begin{itemize}
    \item [](1*) The potential function of \emph{every} distribution along the Ornstein-Uhlenbeck process starting from $\mu$ is $L$-smooth.
\end{itemize}
We show that under the assumptions (1*) and (2), the query complexity of sampling from $\mu$ can be reduced to $\!{poly}(L,d)\cdot \tp{\frac{Ld+M}{\eps^2}}^{\+O(L+1)}$, which is polynomial in $d$ and $\frac{1}{\eps}$ when $L=\+O(1)$ and $M=\!{poly}(d)$. This improves the algorithm with quasi-polynomial query complexity developed by Huang, Zou, Dong, Ma and Zhang (COLT, 2024). Our results imply that the seemingly moderate strengthening  of the smoothness condition (1) to (1*) leads to an exponential separation in the query complexity for sampling.

Furthermore, we show that together with the assumption (1*) and the stronger moment assumption that $\norm{X}$ is $\lambda$-sub-Gaussian for $X\sim\mu$, the \Poincare constant of $\mu$ is at most $\+O\tp{\lambda}^{2(L+1)}$. We can also establish a modified log-Sobolev inequality for $\mu$ under these conditions. As an application of our technique, we obtain a new estimate of the modified log-Sobolev constant for a specific class of mixtures of strongly log-concave distributions.


\end{abstract}

\newpage
\setcounter{tocdepth}{2}
\tableofcontents

\newpage

\section{Introduction}
We study the problem of sampling from a probability distribution $\mu$ over $\bb R^d$ with density $\propto e^{-V}$, given query access to the value and gradient of its potential function $V:\bb R^d\to \bb R$. This is a fundamental computational task across various fields, including theoretical computer science, machine learning, and statistical physics. The problem of sampling has been studied extensively, resulting in numerous proposed algorithms (see~\cite{Che25} for a modern and comprehensive treatment).

A long line of research established that these algorithms converge within polynomial-many queries (with regard to $d$) under various metrics when $\mu$ is log-concave (e.g., \cite{CB18, SL19,ZCL+23,CCSW22})
, or satisfies good isoperimetric inequalities (e.g., \cite{VW19,MFH+23,CCSW22,CEL+24}). However, the problem becomes significantly more challenging beyond these well-behaved settings, and much less is known. For the lower bounds in the most general case, the results of~\cite{LRG18, HZ25} indicate that any sampling algorithm requires $\tp{\frac{LM}{\eps d}}^{\Omega(d)}$ number of queries in general for the class of distributions that satisfy only the following two minimal assumptions, where $\eps$ is the desired accuracy in total variation distance:


\begin{assumption}\label{assump:smooth}
    The potential function $V$ is differentiable and $L$-smooth, \IE, for any $x,y\in \bb R^d$, $\|\grad V(x)-\grad V(y)\| \leq L \|x-y\|$.
\end{assumption}

\begin{assumption}\label{assump:moment}
    The second moment $\E[X\sim \mu]{\|X\|^2}\leq M$ for some $M<\infty$.
\end{assumption}

These distributions do not necessarily satisfy the log-concavity property or good isoperimetric inequalities, leading to a growing interest in understanding the extent to which efficient sampling algorithms exist under these minimal conditions.
Some previous works show that efficient sampling is still possible when the target distributions possess specific structures, such as being a mixture of Gaussians or having a similar shape (\EG, see~\cite{LRG18}). Other works derive convergence bounds in terms of intricate parameters of the target distribution, such as the action of a curve (\cite{GTC25}). These works primarily reply on annealing or tempering methods.

Another parallel line of research shows that denoising diffusion probabilistic models (DDPMs) can sample from the target distribution $\mu$ with polynomial queries under very weak conditions, as long as the score functions along the Ornstein-Uhlenbeck process (OU process) can be estimated efficiently (e.g., \cite{CCLLSZ23, CCLLLS23, CLL23, GCC24}).

Following this line, the work of \cite{HDHMZ24,HZD24,HRT24} attempts to further provide a direct implementation of score function estimation without relying on neural networks. Let $\set{X^{\OU}(t)}_{t\geq 0}$ be the OU process starting from $\mu$ and $\xi^{\OU}_t$ be the law of $X^{\OU}(t)$. The work \cite{HZD24} designs a recursive procedure to estimate the score functions and shows that a total query complexity of at most $\exp\tp{\+O(\ol{L}^3)\cdot \!{polylog}(Ld+M)}$ can be achieved with $\ol L = \sup_{t\geq 0} L_t^{\OU}$. Crucially, this result relies on the regularity of $\set{\xi_t^{\OU}}_{t\geq 0}$, formulated as follows:
\begin{assumption}\label{assump:smoothplus}
    The potential function $-\log \xi^{\OU}_t$ is twice differentiable and $L^{\OU}_t$-smooth for any $t\in [0,\infty)$.
\end{assumption}

While \Cref{assump:smoothplus} is a prevalent technical condition in the analysis of diffusion-based algorithms (e.g., \cite{CCLLSZ23, CCLLLS23, CLL23, LLT23, GCC24}), its precise nature and implications remain a subject of discussion. On one hand, some studies refer to it as a standard or relaxible assumption (e.g., \cite{CDD23,HZD24}). On the other hand, several works have noted that this condition is non-trivial and difficult to verify, potentially excluding important classes of distributions (e.g., \cite{CLL23, BBDD24}).

Furthermore, it is worth noting that the dependence of the bound in \cite{HZD24} on the dimension $d$ is quasi-polynomial, whereas the established lower bound for general distributions is exponential (\cite{LRG18, HZ25}). The diverging interpretations of \Cref{assump:smoothplus} with this stark contrast in query complexity, naturally raises the following question:
\begin{quote}
	\emph{What is the role of \Cref{assump:smoothplus} in efficient sampling? In particular, could it enable a sampler with polynomial query complexity?}	
\end{quote}

In this work, we answer the above question by proposing a variant of the restricted Gaussian dynamics to sample from $\mu$ under \Cref{assump:smoothplus} with a query complexity polynomial in $d$. Furthermore, we can directly establish a \Poincare inequality and a modified log-Sobolev inequality when the target distribution satisfies \Cref{assump:smoothplus} and stronger moment bounds. Our results demonstrate that \Cref{assump:smoothplus} indeed plays an important role in enabling efficient sampling. It implicitly captures certain favorable structural properties of the distribution that \Cref{assump:smooth} fails to reflect. Our results are summarized in \Cref{sec:main-results}.

Technique-wise, our proofs imply that the recursive paradigm designed in~\cite{HZD24} is unnecessary. Our analysis of the restricted Gaussian dynamics differs from previous analyses for well-behaved distributions (\EG, \cite{LST21,CCSW22,LC23,MW25}) and is based on the recently developed path-wise technique for analyzing high-dimensional distributions (see~\cite{Eld22} for an introduction). The main ingredients of our analysis are the concatenation of localization schemes and the approximate conservation of variance and entropy for the stochastic localization processes introduced and popularized in~\cite{CE25}. Similar techniques or ideas have been used in sampling from discrete distributions, such as the Gibbs distribution of the Ising model and the hardcore model (see \EG,~\cite{CE25,CCTY25}). We will present an overview of our techniques in \Cref{sec:overview}.


\subsection{Main results}\label{sec:main-results}

Our first main result states that any distribution satisfying \Cref{assump:moment} and \ref{assump:smoothplus} can be approximated with an error of $\eps>0$ in total variation distance using polynomially many queries.

\begin{theorem}[A simplified version of \Cref{thm:main-ub2}]\label{thm:main-ub}
    There exists an algorithm which, for any target distribution $\mu$ satisfying \Cref{assump:moment} and \ref{assump:smoothplus}, outputs a sample from a distribution $\tilde \mu$ such that $\DTV(\tilde{\mu},\mu)\leq \eps$, with expected number of queries bounded by
    \[
        N=\wt{\+O}\tp{\ol{L}d\cdot (V(0)-\min V + d^2)}\cdot \tp{\frac{\ol{L}d+M}{\eps^2}}^{\+O(\ol L+1)},
    \]
    where $\ol L= \sup_{t\geq 0}L_t^{\OU}$.\footnote{Here the notation $\wt{\+O}$ subsumes some negligible logarithmic terms with regard to $\ol L,d,M, \eps^{-1}$ and the locations of particles appearing in the implementation of RGO. Further discussion on the dependence of the particles in RGO can be found in \Cref{sec:RGO} and the discussions in Appendix A of \cite{LST21}.}
\end{theorem}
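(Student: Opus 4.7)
The plan is to analyse a variant of the restricted Gaussian dynamics that alternates between a Gaussian convolution step $Y\mid X=x\sim \+N(x,h\Id)$ and a restricted Gaussian oracle (RGO) call that samples $X\mid Y=y$ from the tilted measure $\mu^y(x)\propto \mu(x)\exp\tp{-\|x-y\|^2/(2h)}$. The bound decomposes into (i) an outer-loop mixing-time estimate and (ii) a per-iteration implementation cost for the RGO.

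For (ii) I would pick $h=\Theta(1/\ol L)$, which together with \Cref{assump:smoothplus} at $t=0$ renders $-\log\mu^y$ strongly convex with condition number $\+O(1)$. Each RGO call thus reduces to sampling a well-conditioned log-concave distribution, and standard tools (MALA or rejection sampling against a Gaussian proposal) deliver a per-call cost of $\wt{\+O}\tp{\ol L d}$ queries from a warm start. The additive $V(0)-\min V+d^2$ factor in the stated bound accounts for initialising the first warm start, e.g.\ by a short Langevin run from the origin, after which successive RGO invocations can be warm-started by the previous output.

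The core contribution is (i), which I would establish through a stochastic localization scheme driven by the OU process. For $t\ge 0$, let $\mu_t$ be the random measure with density proportional to $\mu(x)\exp\tp{-t\|x\|^2/2+\langle z_t,x\rangle}$, with $(z_t)$ a Brownian driver calibrated so that the law of $\mu_t$ averages to $\mu$ and $\mu_t$ concentrates to a point mass as $t\to\infty$. For a bounded test function $f$, It\^o calculus applied to $f_t:=\int f\,\mathrm d\mu_t$ produces a martingale whose quadratic variation is governed by the covariance of $\mu_t$. A reweighting identifies $\mu_t$ (up to a shift) with the OU marginal $\xi^{\OU}_{s(t)}$ at a rescaled time $s(t)$, so \Cref{assump:smoothplus} yields a covariance bound of order $1/\ol L$ and hence the \emph{approximate conservation of variance}
\begin{equation*}
\E{\!{Var}_{\mu_t}(f)}\geq \exp\tp{-\+O(\ol L)\cdot t}\cdot \!{Var}_{\mu}(f)
\end{equation*}
on an $\+O(1/\ol L)$-window. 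The next ingredient is \emph{concatenation}: partitioning the localization horizon into $\+O(\ol L)$ such windows and iterating the estimate delivers, via the duality between localization and the Gaussian-convolution step of the dynamics, an effective spectral-gap-type constant of order $\tp{(\ol L d+M)/\eps^2}^{\+O(\ol L+1)}$ for the one-step Markov kernel. Translating this into TV mixing and multiplying by the RGO cost and warm-start overhead reproduces the stated complexity.

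I expect the main obstacle to lie in the approximate-conservation step. \Cref{assump:smoothplus} supplies Hessian control only on the OU \emph{marginals}, while the It\^o analysis of $f_t$ appears to require information along the entire localization tilt. The delicate task is to show that marginal smoothness already bounds the covariance of $\mu_t$ tightly enough that $\!{Var}_{\mu_t}(f)$ loses only an $\exp\tp{\ol L\,\mathrm dt}$ factor per infinitesimal step; this is what allows concatenation to contribute only an \emph{additive} $\+O(\ol L+1)$ in the exponent, as opposed to the multiplicative blow-up responsible for the quasi-polynomial bound of \cite{HZD24}. Once this step is in place, the remaining work is largely bookkeeping: identifying the correct time-change $s(t)$, bounding the warm-start $\chi^2$ gap using \Cref{assump:moment}, and collecting the logarithmic factors hidden in the $\wt{\+O}$.
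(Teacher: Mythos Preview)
Your outline has the right architecture (restricted Gaussian dynamics, RGO inner loop, stochastic localization for the mixing analysis) but the core estimate is wrong, and the missing ingredient is exactly what makes the paper work.

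The point you flag as the obstacle---``marginal smoothness already bounds the covariance of $\mu_t$''---is in fact automatic: by the identity $\nabla_y^2\log p_{X(s)}(y)=\cov{\nu_s(\cdot\mid y)}-\tfrac{1}{s}\Id$, smoothness of the OU marginal at the appropriate time directly controls the covariance of the tilt. But the bound it gives is not ``of order $1/\ol L$'' as you assert; it is
\[
\|\cov{\nu_s}\|_{\!{op}}\le \frac{1}{s}+\frac{L_s}{s(1+s)},
\]
which blows up as $s\to 0$. Consequently your conservation-of-variance estimate $\E{\Var_{\mu_t}(f)}\geq \exp(-\+O(\ol L)t)\Var_\mu(f)$ is false near $t=0$: the integrated loss is $\exp\bigl(\int_0^T(\tfrac{1}{s}+\tfrac{L_s}{s(1+s)})\,\dd s\bigr)=\infty$. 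Partitioning the horizon into $\+O(\ol L)$ equal windows does nothing about this singularity, and the base $(\ol L d+M)/\eps^2$ you quote cannot arise from any such partition---indeed your mixing analysis as written contains no mechanism that introduces $M$ or $\eps$ at all.

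What the paper does instead is a \emph{late initialization}: rather than run the SL from $s=0$, it first draws $X(s_0)$ from $\xi_{s_0}$, approximated by $\+N(0,s_0(1+s_0)\Id)$ since $s_0$ is tiny, and then runs the restricted Gaussian dynamics targeting the posterior $\nu_{s_0}(\cdot\mid X(s_0))$ rather than $\mu$. The conservation-of-variance integral now starts at $s_0$ and is finite, giving $C^{\!{PI}}\lesssim \tfrac{T}{s_0}\exp\bigl(\int_{s_0}^{T}\tfrac{L_s}{s(1+s)}\,\dd s\bigr)\lesssim s_0^{-(\ol L+1)}$. The quantity $(\ol L d+M)/\eps^2$ enters precisely because $s_0$ must be chosen so that $\DTV(\xi_{s_0},\+N(0,s_0(1+s_0)\Id))\le \eps/2$, which via OU convergence forces $s_0\sim\bigl(\eps^2/(\ol L d+M)\bigr)^{\+O(1)}$; this is the source of the exponent $\+O(\ol L+1)$ in the final bound. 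Your proposal skips this entirely, and without it the mixing-time bound is vacuous.
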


Without loss of generality, we can regard $V(0)-\min V = \!{poly}(M,\ol{L},d)$. When $\ol L$ is bounded, the query complexity given in \Cref{thm:main-ub} is polynomial in $d$. This improves the $\exp\tp{\+O(\ol{L}^3)\cdot \!{polylog}(\ol{L}d+M)}$ bound in \cite{HZD24} by reducing the query complexity from quasi-polynomial to polynomial, which solves an open problem in \cite{HZ25}. We further remark that our analysis readily extends to KL divergence. By tracking entropy rather than variance in the proof of \Cref{thm:main-ub}, one can establish KL convergence with a comparable query complexity.

Although \Cref{thm:main-ub} only bounds the expected query complexity, we can obtain a worst-case query complexity bound of at most $\frac{N}{\eps}$ by imposing a hard cutoff after the $\frac{N}{\eps}$-th query. By a simple application of the Markov inequality, we can demonstrate that the sampler still maintains its accuracy guarantee. Comparing this bound with the $\tp{\frac{LM}{d\eps}}^{\Omega(d)}$ lower bound in \cite{HZ25} --- achieving under \Cref{assump:moment} and \ref{assump:smooth} --- the exponential gap reveals that the additional assumption, \Cref{assump:smoothplus}, which frequently appears in diffusion-based works, is indeed crucial for enabling efficient sampling.


\bigskip
\Cref{thm:main-ub} shows that distributions satisfying \Cref{assump:moment} and \ref{assump:smoothplus} admit a polynomial-time sampling algorithm. Our next result demonstrates that further strengthening the moment condition to \Cref{assump:momentplus} allows us to directly establish a \Poincare inequality and a modified log-Sobolev inequality for the target distribution.

\begin{assumption}\label{assump:momentplus}
    With $X\sim \mu$, the distribution of $\|X\|$ is a $\lambda$-sub-Gaussian distribution for some constant $\lambda>0$, i.e., $\E[X\sim \mu]{e^{r^2 \|X\|^2}}\leq e^{r^2 \lambda^2}$ for any $r\in [-1/\lambda, 1/\lambda]$.
\end{assumption}
Note that \Cref{assump:momentplus} is stronger than assuming $X\sim\mu$ to be $\lambda$-sub-Gaussian and weaker than assuming $X$ satisfies $T_1$-transportation inequality with constant $\lambda^2$ (see \EG,  \cite[Theorem 4.8]{Van16}).

\begin{theorem}[A simplified version of \Cref{thm:PI,thm:mLSI}]\label{thm:main-PI}
    If \Cref{assump:smoothplus} and \ref{assump:momentplus} hold, then the distribution satisfies a \Poincare inequality with constant 
    \[
        C^{\!{PI}}_{\mu}\leq \min_{s\in \left(0,\frac{\log 2}{4\lambda^2}\right]} \frac{2}{2-e^{4s\lambda^2}}\cdot \tp{\frac{s+1}{s}}^{\ol L +1},
    \] 
    and satisfies a modified log-Sobolev inequality with constant
    \[
        C^{\!{mLSI}}_{\mu}\le \min_{s_0\in \left(0,\frac 1{12\lambda^2}\right]} \tp{3+\frac{12s_0\lambda^2+1}{2-e^{4s_0\lambda^2}}}\cdot \tp{\frac{s+1}{s}}^{\ol L +1},
    \]
    where $\ol L= \sup_{t\geq 0}L_t^{\OU}$.
\end{theorem}

It is worth noting that \Cref{assump:smoothplus} plays a crucial role in establishing \Cref{thm:main-PI}. If it were replaced by the weaker \Cref{assump:smooth}, the conclusion would no longer hold. For example, consider the distributions constructed in Section 3 of \cite{HZ25}, which satisfies \Cref{assump:momentplus} with parameter $\lambda = \+O\tp{\sqrt{M}}$ (we choose $\eps=\+O(1)$ in their setting). Suppose, for the sake of contradiction, that the \Poincare inequality in \Cref{thm:main-PI} still holds under only \Cref{assump:smooth}. Then it implies a polynomial-time sampler for those distributions, which contradicts the exponential lower bound in \cite{HZ25}. This indicates that compared to \Cref{assump:smooth}, \Cref{assump:smoothplus} implicitly captures some additional geometric properties of the distribution.

We also apply our method to typical multimodal distributions. Using analytical techniques similar to those in \Cref{thm:main-PI}, we explicitly compute the modified log-Sobolev constant for a specific class of mixture of strongly log-concave distributions. 
\begin{theorem}\label{thm:main-mix}
    Suppose $\rho$ is an $m$-strongly log-concave and $L$-log-smooth distribution and $\nu$ is supported inside a Euclidean ball with radius $R>0$. The modified log-Sobolev constant of mixture distribution $\mu=\nu*\rho$ satisfies
    \[
      C^{\!{mLSI}}_{\mu}\leq \frac 1{2m}\cdot e^{LR^2}.
    \]
\end{theorem}

As a direct corollary, when $\rho$ is the Gaussian distribution $\+N(0,\Sigma)$, \Cref{thm:main-mix} recovers the result in \cite{MS23}, which is the best known bound on the modified log-Sobolev constant for this class of mixture distributions. Specifically, letting $\lambda_{\min}(\Sigma)$ be the minimum eigenvalue of $\Sigma$, we have $C^{\!{mLSI}}_\mu\leq \frac 12\norm{\Sigma}_{\!{op}}\cdot e^{\lambda_{\min}(\Sigma)^{-1}\cdot R^2}$.

\smallskip
Moreover, the concatenation argument (see \Cref{sec:overview} for a brief introduction) used in our proof of \Cref{thm:main-PI,thm:main-mix} is of independent interest, and might be extended to establish functional inequalities for certain distributions beyond \Cref{assump:momentplus}, as long as the evolution of the smoothness along the OU process can be effectively analyzed.

\subsection{Technical overview}\label{sec:overview}
Our algorithm is a variant of the restricted Gaussian dynamics, which is also known as the proximal sampler (see \EG \cite{LST21}). The key step in the convergence analysis is to bound the \Poincare constant of the restricted Gaussian dynamics. This not only implies the rapid mixing of the algorithm, but also allows us to use a concatenation argument to establish the functional inequalities in \Cref{thm:main-PI,thm:main-mix}. 

\paragraph{\Cref{assump:smoothplus} and the covariance bound of the stochastic localization process} Before delving into the details, we first clarify the relationship between the OU process and stochastic localization process, to identify the condition that \Cref{assump:smoothplus} corresponds to in the stochastic localization framework.

Consider the stochastic differential equation of the OU process
\[
    \dd X^{\OU}(t) = - X^{\OU}(t) \dd t + \sqrt{2} \dd B(t),\ X^{\OU}(0)\sim \mu, 
\]
and the scheme that induces the stochastic localization process
\[
    X(s) = s\cdot X + B(s),\ X\sim \mu.
\]
Intuitively, the random variables $X^{\OU}(t)$ and $X(s)$ are related through a scaling transformation. Indeed, via direct calculations, we can prove that $X^{\!{OU}}(t)$, and $\sqrt{\frac{1}{s(1+s)}}\cdot X(s)$ with $s = \frac{e^{-2t}}{1-e^{-2t}}$, have the same distribution. Let $\nu_s(\cdot\,|\,z)$ be the conditional law of $X$ given $X(s)=z$, and when the information of $X(s)$ is clear, we may write it as $\nu_s$ for brevity. Building on this correspondence between the two processes, we can examine \Cref{assump:smoothplus} from the perspective of the stochastic localization process. To be specific, it translates into the following condition on the covariance of $\+T_v \nu_s(\cdot\,|\,z)$, where $\+T_v \nu_s(\cdot\,|\,z)$ is the exponentially tilted distribution with $\+T_v \nu_s(x\,|\,z) \propto e^{\inner{v}{x}}\cdot \nu_s(x\,|\,z)$.

\begin{condition}\label{cond:cov}
    For any $s\in \left[0, \infty \right)$, any $v,z\in \bb R^d$, 
    \[
       \tp{ \frac{1}{s}-\frac{ L_s}{s(1+s)}}\cdot \Id \preceq  \cov{\+T_v\nu_s(\cdot\,|\,z)} \preceq \tp{\frac{1}{s} + \frac{ L_s }{s(1+s)}}\cdot \Id,
    \]
    where $L_s=L^{\OU}_{t}$ with $t=\log\sqrt{\frac{s+1}{s}}$.
\end{condition}

\paragraph{The challenges to prove \Cref{thm:main-ub,thm:main-PI} under \Cref{cond:cov}} 
To prove \Cref{thm:main-ub}, our target is to bound the \Poincare constant of the restricted Gaussian dynamics. This is a discrete-time Markov chain $\set{Y_k}_{k\geq 0}$ induced by the stochastic localization process. For a fixed $T>0$, the restricted Gaussian dynamics with transition kernel $\*P^{(T)}$ executes as follows at each iteration  $k$:
\begin{itemize}
    \item draw $\hat Y_k\sim \+N\tp{T\cdot Y_{k-1}, T\cdot \Id}$;
    \item then draw $Y_k\sim \nu_T\tp{\cdot\,|\,\hat Y_k}$.
\end{itemize}
Let $\xi_s$ be the law of $X(s)$. From Proposition~\ref{prop:PIofSL}, the \Poincare constant of this chain is
\[
    C^{\!{PI}}_{\mu}\tp{\*P^{(T)}} = \sup_{f\colon \bb R^d\to \bb R} \frac{\Var[\mu]{f}}{\E[X(s)\sim \xi_T]{\Var[\nu_T]{f}}}.
\]
On the other hand, to prove \Cref{thm:main-PI},\footnote{For the sake of brevity, we only illustrate the proof strategy for the \Poincare inequality in \Cref{thm:main-PI} in this section. The modified log-Sobolev inequality can be derived via a parallel framework. See \Cref{sec:concatenation} for details.} we aim to directly bound the \Poincare constant of the target distribution $\mu$, which satisfies
\begin{align}
    C^{\!{PI}}_{\mu} &= \sup_{f:\bb R^d \to \bb R} \frac{\Var[\mu]{f}}{\E[\mu]{\|\grad f\|^2}} = \sup_{f:\bb R^d \to \bb R} \frac{\E[\xi_T]{\Var[\nu_T]{f}}}{\E[\mu]{\|\grad f\|^2}}\cdot \frac{\Var[\mu]{f}}{\E[\xi_T]{\Var[\nu_T]{f}}} \notag \\
    &\leq C^{\!{PI}}_{\mu}\tp{\*P^{(T)}}\cdot \sup_{f:\bb R^d \to \bb R} \frac{\E[\xi_T]{\Var[\nu_T]{f}}}{\E[\xi_T]{\E[\nu_T]{\|\grad f\|^2}}}. \label{eq:overview-1}
\end{align}
From the result on the approximate conservation of variance (see \Cref{thm:conservation}), given \Cref{cond:cov}, $C^{\!{PI}}_{\mu}\tp{\*P^{(T)}} \leq \exp\set{\int_0^T \tp{\frac{1}{s} + \frac{ L_s }{s(1+s)}} \dd s}$. However, two key technical challenges remain:
\begin{itemize}
    \item the integral $\int_0^T \tp{\frac{1}{s} + \frac{ L_s }{s(1+s)}} \dd s$ diverges even for small $T$;
    \item one needs to sample from $\nu_T\tp{\cdot\,|\,\hat Y_k}$ in the algorithm efficiently (\Cref{thm:main-ub}), or bound the \Poincare constant for $\nu_T$ (\Cref{thm:main-PI}).
\end{itemize}

\paragraph{Late initialization and three-phase concatenation}
To address the two issues discussed above, we analyze the evolvement of the variance along the stochastic processes in three phases. Note that in the stochastic localization process $\set{\nu_s}_{s\geq 0}$, 
\[
    \nu_s(x)\propto \mu(x)\cdot \exp\set{-\frac{\norm{X(s) - sx}^2}{2s}}.
\] 
Then we have the following three observations:
\begin{enumerate}
    \item $\nu_{s_0}\approx \nu_0$ and $\xi_{s_0} \approx \+N(0,s_0(s_0+1)\Id)$ for small enough $s_0$;
    \item $\int_{s_0}^T \tp{\frac{1}{s} + \frac{ L_s }{s(1+s)}} \dd s$ is bounded for any $0<s_0\leq T$;
    \item if $\mu$ satisfies \Cref{assump:smooth}, then $\nu_T$ is $(T-L)$-strongly log-concave for $T>L$.
\end{enumerate}

\begin{figure}[H]
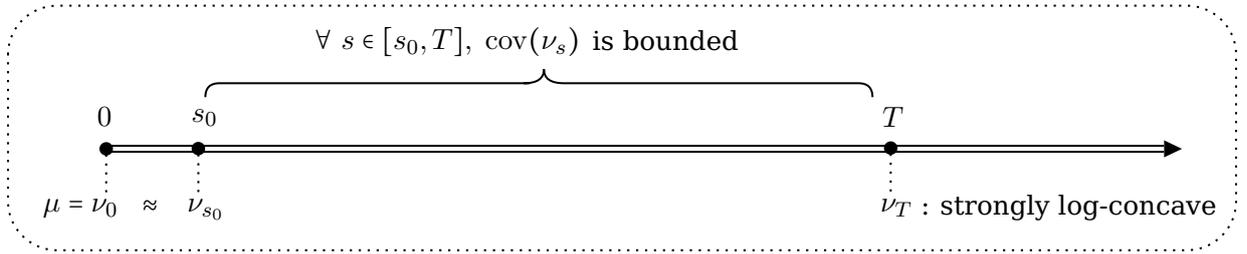

	\centering
    \setlength{\abovecaptionskip}{0pt}
    \ExecuteMetaData[figure.tex]{SLfigure}
  \caption{The stochastic localization process of $\mu$}
  \label{fig:SL}
\end{figure}
As shown in \Cref{fig:SL}, we divide the stochastic localization process into three phases and analyze $\set{\nu_s}_{s\in [0,s_0)}$, $\set{\nu_s}_{s\in [s_0,T)}$ and $\nu_T$ respectively.

We first explain how to resolve the challenges in the proof of \Cref{thm:main-ub}. For the sake of clarity, some notation is simplified and full technical details appear in the formal proofs in \Cref{sec:algo}. Note that to sample from $\mu$, it suffices to draw $X(s_0)\sim \xi_{s_0}$ and then sample from $\nu_{s_0}(\cdot\,|\,X(s_0))$. By the first observation, $\xi_{s_0}$ can be approximated by a Gaussian distribution $\+N(0,s_0(s_0+1)\cdot\Id)$. Thus, the target is reduced to simulate $\nu_{s_0}$. We call this a \emph{late initialization}. Then the second observation implies that, although bounding the \Poincare constant of the restricted Gaussian dynamics with respect to $\mu$ might be difficult, it is possible to control the \Poincare constant of the dynamics associated with the new target $\nu_{s_0}$ as the integral $\int_{s_0}^T \tp{\frac{1}{s} + \frac{ L_s }{s(1+s)}} \dd s$ is bounded. Of course, implementing this restricted Gaussian dynamics with respect to $\nu_{s_0}$ again involves the issue of sampling from $\nu_T$. By the third observation, with large $T$, $\nu_T$ becomes strongly log-concave, allowing efficient sampling using algorithms like rejection sampling. Combining all these, we can prove the rapid mixing of the restricted Gaussian dynamics with this late initialization scheme.

For the difficulty in the proof of \Cref{thm:main-PI}, we handle it in a similar way. We view the \Poincare constant of $\mu$ as the result of a three-segment concatenation:
\begin{align*}
    C^{\!{PI}}_{\mu}&= \sup_{f:\bb R^d \to \bb R}  \underbrace{\frac{\Var[\mu]{f}}{\E[\xi_{s_0}]{\Var[\nu_{s_0}]{f}}}}_{(\Rmnum{1})} \cdot \underbrace{\frac{\E[\xi_{s_0}]{\Var[\nu_{s_0}]{f}}}{\E[\xi_T]{\Var[\nu_T]{f}}}}_{(\Rmnum2)} \cdot \underbrace{\frac{\E[\xi_T]{\Var[\nu_T]{f}}}{\E[\mu]{\|\grad f\|^2}}}_{(\Rmnum3)}.
\end{align*}
For part $(\Rmnum{1})$, as $s_0$ is small, we can derive an explicit bound of this ratio under \Cref{assump:momentplus}. For part $(\Rmnum2)$, the second observation ensures that the integral $\int_{s_0}^T \tp{\frac{1}{s} + \frac{ L_s }{s(1+s)}} \dd s$ is bounded, which allows us to control this term accordingly. For part $(\Rmnum{3})$, the third observation implies that $\nu_T$ is $(T-L)$-strongly log-concave. As a result, its \Poincare constant is bounded almost surely by $\frac{1}{T-L}$ (see \EG,~\cite[Section 4.8]{BGL14}), which provides a direct upper bound for $(\Rmnum{3})$. Finally, we can derive the desired bound on the \Poincare constant $C^{\!{PI}}_{\mu}$ by concatenate the three parts.

We remark that the bound in part $(\Rmnum{1})$ is trivial in some other localization schemes such as the one associated to the field dynamics for sampling from hardcore model as well as the F\"ollmer process for sampling from the Ising model (\cite{CCTY25}) since an $\+O(\sqrt{d})\cdot \Id$ upper bound for the covariances in that part always holds (see also~\cite{CJ25}). However, in our case, a uniform upper bound (over every $y$) for $\cov{\nu_s(\cdot\,|\, y)}$ does not exist and therefore our special treatment of the first phase is necessary.

\smallskip
Finally, we prove \Cref{thm:main-mix} by applying the same concatenation argument to the mixture of strongly log-concave distributions. In order to obtain an upper bound for the covariance of distributions along the stochastic localization process, we again relate the quantity to smoothness of the potential function along the OU process, which can be bounded under the conditions in \Cref{thm:main-mix}.

\subsection{Organization}
We begin by introducing some preliminaries in \Cref{sec:prelim}. In \Cref{sec:SLvsOU}, we establish a connection between the stochastic localization process and the OU process, which allows us to derive \Cref{cond:cov} from \Cref{assump:smoothplus}. Building on this foundation, we present our main algorithm and its analysis in \Cref{sec:algo} to prove \Cref{thm:main-ub}. Finally, we prove the functional inequalities stated in \Cref{thm:main-PI} and \Cref{thm:main-mix} in  \Cref{sec:concatenation}.


\section{Preliminaries}\label{sec:prelim}
\subsection{Notations}
Throughout this paper, all distributions are assumed to be absolutely continuous with respect to the Lebesgue measure. For simplicity, we slightly abuse the notation and use the same symbol to denote both the distribution and its density. 
For random variables $X\in\bb R^d$ and $Y\in\bb R^d$, we use $p_X$ to denote the distribution of $X$, $p_{X,Y}$ to denote the joint distribution and $p_{X|Y}(\cdot |y)$ to represent the conditional distribution of $X$ given $Y=y$. 
Given a distribution $\mu$, we define $\m{\mu} = \E[X\sim \mu]{X}$, $\var{\mu} = \Var[X\sim \mu]{X}$ (or $\cov{\mu}=\Cov[X\sim \mu]{X}$ in high-dimensional cases) to denote its expectation and variance (or covariance) respectively. 

For a measure $\mu$ over $\bb R^d$ and a vector $v\in \bb R^d$, define the exponentially tilted distribution $\+T_v \mu$ as
$$
    \forall x\in \bb R^d,\ \+T_v \mu(x) \propto e^{\inner{v}{x}} \mu(x).
$$

We use $\+N(u, \Sigma)$ to denote the multivariate Gaussian distribution with mean $u\in \bb R^d$ and covariance $\Sigma\in \bb R^{d\times d}$. The notation $\+N(x; u, \Sigma)$ represents the density of $\+N(u, \Sigma)$ at $x$. Denote the Poisson distribution with mean $\lambda$ as $\!{Pois}(\lambda)$.

In this paper, $\log$ refers to the natural logarithm with base $e$. For two distributions $\mu_1$ and $\mu_2$ over $\bb R^d$, their total variation distance is $\DTV(\mu_1,\mu_2) = \tfrac{1}{2}\cdot \int_{\bb R^d} \abs{\mu_1(x)-\mu_2(x)} \dd x$.  Assuming $\mu_1 \ll \mu_2$, the Kullback-–Leibler divergence (KL divergence), $\chi^2$ divergence and \Renyi divergence with parameter $q> 1$ are respectively defined as
\begin{itemize}
    \item $\DKL{\mu_1}{\mu_2} = \int_{\bb R^d} \mu_1(x)\log \frac{\mu_1(x)}{\mu_2(x)} \dd x$;
    \item $\chi^2(\mu_1\,\|\,\mu_2) = \int_{\bb R^d} \frac{\tp{\mu_1(x)-\mu_2(x)}^2}{\mu_2(x)} \dd x$;
    \item $\+R_q (\mu_1\,\|\,\mu_2) = \frac{1}{q-1}\log \tp{\int_{\bb R^d} \tp{\frac{\mu_1(x)}{\mu_2(x)}}^{q} \mu_2(x) \dd x}$.
\end{itemize}
Specifically, $\+R_\infty (\mu_1\|\mu_2) = \log\tp{\sup_{x\in \bb R^d} \frac{\mu_1(x)}{\mu_2(x)}}$.
By definition, $\+R_2 (\mu_1\|\mu_2) = \log\tp{1+\chi^2(\mu_1\|\mu_2)}$. For any $p\geq q>1$, $\+R_q (\mu_1\|\mu_2) \leq \+R_p (\mu_1\|\mu_2)$.

For a matrix $A\in \bb R^{d\times d}$, its operator norm $\|A\|_{\!{op}}$ is defined as $\|A\|_{\!{op}} = \sup_{y\in \bb R^d\atop \|y\|=1} \|Ay\|_2$.


\subsection{The Markov chain, \Poincare inequality and modified log-Sobolev inequality}
 Consider a Markov chain with state space $\bb R^d$, transition kernel $\*P$ and stationary distribution $\mu$. 
 In this work, we only consider those distributions that admit strictly positive densities with respect to the Lebesgue measure. Therefore, we slightly abuse the notation by using $\*P(x,y)$ to denote the probability density function corresponding to the transition kernel $\*P(x, \dd y)$. Then we say $\*P$ is  reversible with regard to $\mu$ if for all $x,y\in \bb R^d$,
 \[
  \mu(x)\cdot \*P(x,y) = \mu(y)\cdot \*P(y,x).
\] 
 Define the Dirichlet form
\[
    \+E_{\*P}(f,g) = \frac{1}{2}\cdot \int_{\Omega\times \Omega} \tp{f(x)-f(y)}\tp{g(x)-g(y)} \mu(x)\*P(x,\dd y) \dd x .
\]
We say $\mu$ satisfies a \Poincare inequality (PI) with regard to the Markov chain $\*P$ with constant $C$ if for all functions $f:\bb R^d \to \bb R$ with $\+E_{\*P}(f,f)\neq 0$,
\[
    \Var[\mu]{f}\leq C \cdot \+E_{\*P}(f,f).
\]

Similarly, $\mu$ is said to satisfy a modified log-Sobolev inequality (mLSI) with regard to $\*P$ with constant $C'$ if for all positive functions $f:\bb R^d \to \bb R_{> 0}$ with $\+E_{\*P}(f,\log f)\neq 0$,
\[
    \Ent[\mu]{f}\leq C' \cdot \+E_{\*P}(f,\log f),
\]
where the entropy
\[
    \Ent[\mu]{f} \defeq \E[\mu]{f\log f} - \E[\mu]{f}\cdot \log \E[\mu]{f}.
\]

Denote the \Poincare constant and modified log-Sobolev constant as $C^{\!{PI}}_{\mu}\tp{\*P} \defeq \sup_{f\colon \bb R^d\to \bb R} \frac{\Var[\mu]{f}}{\+E_{\*P}(f,f)}$ and $C^{\!{mLSI}}_{\mu}\tp{\*P} \defeq \sup_{f\colon \bb R^d\to \bb R_{> 0}} \frac{\Ent[\mu]{f}}{\+E_{\*P}(f,\log f)}$ respectively. Specifically, when the Markov chain is the Langevin dynamics with trajectory $\dd X(t) = - \grad V(X(t))\dd t + \sqrt{2}\dd B(t)$, it recovers the classical \Poincare inequality:
\begin{equation}
    \Var[\mu]{f}\leq C \cdot \E[\mu]{\norm{\grad f}^2} \label{eq:PI}
\end{equation}
and the classical modified log-Sobolev inequality:
\begin{equation}
    \Ent[\mu]{f}\leq C'\cdot \E[\mu]{\inner{\grad f}{\grad\log f}} = C' \cdot \E[\mu]{f^{-1}\norm{\grad f}^2}. \label{eq:mLSI}
\end{equation}

In this work, when the Markov chain is not explicitly specified, the terms \emph{\Poincare inequality} and \emph{modified log-Sobolev inequality} refer to \Cref{eq:PI,eq:mLSI}. Define $C^{\!{PI}}_{\mu} \defeq \sup_{f\colon \bb R^d\to \bb R} \frac{\Var[\mu]{f}}{\E[\mu]{\norm{\grad f}^2}}$ and $C^{\!{mLSI}}_{\mu} \defeq \sup_{f\colon \bb R^d\to \bb R_{>0}} \frac{\Ent[\mu]{f}}{\E[\mu]{f^{-1}\norm{\grad f}^2}}$. 

\subsection{The stochastic localization process}\label{sec:prelim-SL}
Let 
\begin{equation}
    X(s) = s\cdot X + B(s),\ X\sim \mu \label{eq:SL}
\end{equation}
with $\set{B(s)}_{s\geq 0}$ being a standard Brownian motion. Let $\xi_s$ be the law of $X(s)$ and $\nu_s(\cdot\,|\,y)$ be the conditional distribution of $X$ given $X(s)=y$. When the information of $y$ is clear, we will omit $y$ and use $\nu_s$ for simplicity. We can regard $\nu_s$ as a random distribution due to the randomness of $X(s)$. When $s=0$, $\nu_0=\mu$ and when $s=\infty$, $\nu_{\infty}$ is the Dirac distribution $\delta_{X(s)}$. The process $\set{\nu_s}_{s\geq 0}$ is the well-known stochastic localization process (SL process). 

For each $T\geq 0$, the SL process induces a natural discrete-time Markov chain $\set{Y_k}_{k\geq 0}$ with transition kernel $\*P^{(T)}$, named as the restricted Gaussian dynamics. 
The transition kernel is defined as follows. In the $k$-th iteration,
\begin{itemize}
    \item (down-walk) draw $\hat Y_k\sim \+N\tp{T\cdot Y_{k-1}, T\cdot \Id}$;
    \item (up-walk) then draw $Y_k\sim \nu_T$ with $X(T)=\hat Y_k$.
\end{itemize}
The walk can be pleasantly interpreted as the \emph{down-up walk} along the SL process. The down-step walks from $\nu_\infty$ to $\nu_T$ by drawing $\hat Y_k\sim p_{X(T)\,|\,X(\infty)}(\cdot\,|\,Y_{k-1})$; and the up-step is simply its adjoint operator, or equivalently drawing $Y_k\sim p_{X(\infty)\,|\,X(T)}(\cdot\,|\,\hat Y_k)$.

From direct calculation, we know for any $x\in \bb R^d$ and any Borel set $A\subseteq \bb R^d$,
\begin{equation}
    \*P^{(T)}(x,A) = \E[X(T)\sim \xi_T]{\frac{\nu_T(x)\cdot \nu_T(A)}{\mu(x)}}. \label{eq:MC}
\end{equation}


The propositions and theorem below are well-known results for localization schemes.
\begin{proposition}[Fact 2.9 in \cite{CE25}]
The Markov chain with transition kernel $\*P^{(T)}$ is a reversible Markov chain with stationary measure $\mu$.
\end{proposition}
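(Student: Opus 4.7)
The plan is to write the transition kernel of $\*P^{(T)}$ explicitly as a density and verify detailed balance by inspection; stationarity then falls out of reversibility by integration. Since the down-step is a Gaussian convolution and the up-step is the conditional law $\nu_T(\cdot\,|\,\hat Y_k) = p_{X\,|\,X(T)}(\cdot\,|\,\hat Y_k)$, Bayes' formula for the joint law $X(T)=T\cdot X+B(T)$ gives, for the intermediate point $z$,
\[
    \nu_T(y\,|\,z) \;=\; \frac{\mu(y)\cdot \+N(z;\,Ty,\,T\cdot\Id)}{\xi_T(z)}.
\]

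First I would combine the two steps into an explicit density. Writing the Gaussian down-kernel as $\+N(z;\,Tx,\,T\cdot\Id)$ and using the identity above for the up-kernel, the transition density becomes
\[
    \*P^{(T)}(x,y) \;=\; \int_{\bb R^d} \+N(z;\,Tx,\,T\cdot\Id)\cdot \frac{\mu(y)\cdot \+N(z;\,Ty,\,T\cdot\Id)}{\xi_T(z)}\,\dd z.
\]
Multiplying by $\mu(x)$ yields
\[
    \mu(x)\cdot \*P^{(T)}(x,y) \;=\; \mu(x)\mu(y)\int_{\bb R^d} \frac{\+N(z;\,Tx,\,T\cdot\Id)\cdot \+N(z;\,Ty,\,T\cdot\Id)}{\xi_T(z)}\,\dd z,
\]
an expression that is manifestly symmetric in $x$ and $y$. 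This is exactly detailed balance, so $\*P^{(T)}$ is reversible with respect to $\mu$.

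Finally, stationarity follows immediately: by reversibility, $\int_{\bb R^d} \mu(x)\*P^{(T)}(x,y)\,\dd x = \int_{\bb R^d}\mu(y)\*P^{(T)}(y,x)\,\dd x = \mu(y)\cdot \int_{\bb R^d} \*P^{(T)}(y,x)\,\dd x = \mu(y)$. There is no real obstacle here; the only thing to be slightly careful about is justifying the use of densities (which is allowed by the assumption that $\mu$ admits a strictly positive Lebesgue density and that all Gaussians do as well, making every conditional law well-defined) and ensuring the Bayes formula for $\nu_T$ is applied correctly. The argument is a verbatim specialization of the general fact that a ``down-up walk'' along a localization scheme is always reversible with respect to the starting measure, as observed in \cite{CE22}.
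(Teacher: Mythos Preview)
Your proposal is correct. The paper does not actually give its own proof of this proposition (it defers to \cite{CE22}), but just above the statement it records the identity
\[
    \*P^{(T)}(x,A) = \E[X(T)\sim \xi_T]{\frac{\nu_T(x)\cdot \nu_T(A)}{\mu(x)}},
\]
from which detailed balance $\mu(x)\*P^{(T)}(x,y)=\E[\xi_T]{\nu_T(x)\nu_T(y)}=\mu(y)\*P^{(T)}(y,x)$ is immediate; your computation is exactly the density-level derivation of this same identity, so the approaches coincide.
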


\begin{proposition}[Proposition 2.21 in \cite{CE25}]\label{prop:PIofSL}
    The distribution $\mu$ satisfies a \Poincare inequality with regard to the Markov chain $\*P^{(T)}$ with constant $C^{\!{PI}}_{\mu}\tp{\*P^{(T)}} = \sup_{f\colon \bb R^d\to \bb R} \frac{\Var[\mu]{f}}{\E[X(T)\sim \xi_T]{\Var[\nu_T]{f}}}$.
\end{proposition}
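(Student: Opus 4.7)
The plan is to compute the Dirichlet form $\+E_{\*P^{(T)}}(f)$ in closed form and identify it with $\E[X(T)\sim \xi_T]{\Var[\nu_T]{f}}$; the stated formula for $C^{\!{PI}}_{\mu}\tp{\*P^{(T)}}$ then follows immediately from the definition of the \Poincare constant.

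First I would unfold a single step of the chain started from stationarity. When $Y_{k-1}\sim \mu$, the down-step draws $\hat Y_k \sim \+N\tp{T\cdot Y_{k-1}, T\cdot \Id}$, which is precisely the conditional law of $X(T)$ given $X=Y_{k-1}$ under the stochastic localization scheme $X(T)=T\cdot X + B(T)$. Hence $(Y_{k-1},\hat Y_k)$ has the same joint distribution as $(X, X(T))$, and in particular, conditionally on $\hat Y_k = y$, the endpoint $Y_{k-1}$ is distributed as $\nu_T\tp{\cdot\,|\,y}$. By construction the up-step draws $Y_k \sim \nu_T\tp{\cdot\,|\,\hat Y_k}$ independently of $Y_{k-1}$ given $\hat Y_k$, so conditionally on $\hat Y_k = y$, the endpoints $Y_{k-1}$ and $Y_k$ are two i.i.d.\ samples from $\nu_T\tp{\cdot\,|\,y}$.

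Given this identification, I would write the Dirichlet form as
\[
    \+E_{\*P^{(T)}}(f) = \frac{1}{2}\E{(f(Y_{k-1})-f(Y_k))^2} = \frac{1}{2}\E[X(T)\sim \xi_T]{\E{(f(Y_{k-1})-f(Y_k))^2 \mid \hat Y_k = X(T)}},
\]
and then apply the elementary identity $\frac{1}{2}\E{(f(Z)-f(Z'))^2} = \Var[\nu]{f}$ valid for any two i.i.d.\ copies $Z, Z' \sim \nu$ to conclude $\+E_{\*P^{(T)}}(f) = \E[X(T)\sim \xi_T]{\Var[\nu_T]{f}}$. The argument is essentially a direct consequence of the conditional i.i.d.\ structure of the down-up walk, and I do not expect a serious obstacle; the one point worth verifying carefully is the distributional identification of $(Y_{k-1},\hat Y_k)$ with $(X, X(T))$, which rests on the Gaussian form of the down-step matching the noise $B(T)\sim \+N(0, T\cdot \Id)$ in the localization scheme.
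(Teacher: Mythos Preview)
Your argument is correct: the down--up walk structure makes $Y_{k-1}$ and $Y_k$ conditionally i.i.d.\ with law $\nu_T(\cdot\,|\,\hat Y_k)$ once $\hat Y_k$ is fixed, and the elementary identity $\tfrac12\E{(f(Z)-f(Z'))^2}=\Var[\nu]{f}$ immediately yields $\+E_{\*P^{(T)}}(f)=\E[\xi_T]{\Var[\nu_T]{f}}$. The distributional identification of $(Y_{k-1},\hat Y_k)$ with $(X,X(T))$ is exactly right, since the down-step noise $\+N(0,T\cdot\Id)$ matches $B(T)$.

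Note, however, that the paper does \emph{not} supply its own proof of this proposition: it is quoted directly as Proposition~19 of \cite{CE22}. What you have written is precisely the standard computation of the Dirichlet form for a localization-induced down--up walk, which is the content of that cited result. So there is nothing to compare; your proof is the natural (and essentially the only) argument, and it is complete as stated.
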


The following theorem is a direct result in Sec 3.1.1 of \cite{CE25}. For the completeness of this paper, we also provide the proof in \Cref{sec:prelim-pf}.
\begin{theorem}[Consequence of approximate conservation of variance for SL process]\label{thm:conservation}
    Suppose for any $s\in[0,T]$, there exists some fixed $\theta_s>0$ such that $\|\cov{\nu_s}\|_{\!{op}} \leq \theta_s$ holds almost surely. Then
    \[
        C^{\!{PI}}_{\mu}\tp{\*P^{(T)}} \leq e^{\int_0^T \theta_s \d s}.
    \]
\end{theorem}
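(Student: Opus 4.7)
}

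The plan is to reduce the statement, via \Cref{prop:PIofSL}, to the inequality
\[
    \Var_{\mu}[f] \leq e^{\int_0^T \theta_s\dd s}\cdot \E_{X(T)\sim \xi_T}[\Var_{\nu_T}[f]]
\]
for every test function $f\colon \bb R^d\to \bb R$, and to obtain it by tracking the evolution of the conditional variance $\Phi(s)\defeq \E[\Var_{\nu_s}[f]]$ along the SL process. By the law of total variance, writing $F_s(y)\defeq \E_{\nu_s(\cdot\,|\,y)}[f]$, we have $\Phi(s)=\Var_{\mu}[f]-\Var[F_s(X(s))]$, so $\Phi(0)=\Var_\mu[f]$ and $\Phi(T)=\E_{\xi_T}[\Var_{\nu_T}[f]]$, and the task becomes controlling $\Phi(0)/\Phi(T)$.

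The next step is to differentiate $\Phi$. The tower property makes $F_s(X(s))$ a martingale in $s$, and the innovation representation of the SL process $\dd X(s)=\m_{\nu_s}\dd s+\dd B(s)$ yields, via It\^o's formula, the clean identity
\[
    \dd F_s(X(s)) = \grad F_s(X(s))\cdot \dd B(s),
\]
so $\frac{\dd}{\dd s}\Var[F_s(X(s))] = \E[\,\|\grad F_s(X(s))\|^2\,]$ and hence $\Phi'(s) = -\E[\,\|\grad F_s(X(s))\|^2\,]$. The key algebraic step is to identify $\grad F_s$ with a covariance: from $\nu_s(x\,|\,y)\propto \mu(x)\exp(-\|y-sx\|^2/(2s))$ one computes $\grad_y \log \nu_s(x\,|\,y) = x-\m_{\nu_s(\cdot|y)}$, which after the standard $\grad_y F_s(y)=\E_{\nu_s}[f\cdot \grad_y\log\nu_s]$ manipulation gives
\[
    \grad_y F_s(y) \;=\; \Cov_{\nu_s(\cdot\,|\,y)}[f(X),X].
\]
Cauchy--Schwarz applied to $v^\top \Cov_{\nu_s}[f,X]=\Cov_{\nu_s}[f,v^\top X]$ for unit $v$ then yields
\[
    \|\grad F_s(y)\|^2 \;\leq\; \|\cov{\nu_s(\cdot\,|\,y)}\|_{\!{op}} \cdot \Var_{\nu_s(\cdot\,|\,y)}[f] \;\leq\; \theta_s \cdot \Var_{\nu_s(\cdot\,|\,y)}[f]
\]
almost surely, using the hypothesis. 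Combining gives $\Phi'(s)\geq -\theta_s \Phi(s)$, and Gr\"onwall's inequality delivers $\Phi(0)\leq \exp\bigl(\int_0^T \theta_s\dd s\bigr)\Phi(T)$, which is exactly what we wanted.

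The main obstacle, and really the only subtle point, is the derivation of $\Phi'(s)=-\E[\|\grad F_s\|^2]$: one needs the correct SDE formulation of the SL process (the innovation form with drift $\m_{\nu_s}$) to make $F_s(X(s))$ a driftless It\^o martingale, and one needs enough regularity on $f$ and on the tilted densities $\nu_s$ to justify differentiating under the expectation and interchanging $\grad_y$ with the $\nu_s$-integral. Both can be handled by a standard approximation argument (first for bounded, smooth, compactly supported $f$, then extending to the general case by density in $L^2(\mu)$), and the covariance bound $\|\cov{\nu_s}\|_{\!{op}}\leq \theta_s$ together with the sub-Gaussian tails of the Brownian component provide the integrability needed for Fubini and dominated convergence.
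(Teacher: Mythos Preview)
Your proposal is correct and follows essentially the same route as the paper: both arguments compute $\frac{\dd}{\dd s}\E[\Var_{\nu_s}[f]]=-\E\bigl[\|\Cov_{\nu_s}[f(X),X]\|^2\bigr]$, bound this via Cauchy--Schwarz by $-\theta_s\,\E[\Var_{\nu_s}[f]]$, and apply Gr\"onwall. The only cosmetic difference is that the paper works directly with the linear-tilt SDE for the likelihood ratio $\dd L_s(x)=L_s(x)\inner{x-\m_{\nu_s}}{\dd W(s)}$ (\Cref{lem:lineart}) to differentiate $\Var_{\nu_s}[f]$, whereas you use the law of total variance and the quadratic variation of the martingale $F_s(X(s))$ under the innovation representation $\dd X(s)=\m_{\nu_s}\dd s+\dd W(s)$; these are dual views of the same computation.
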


Similar results also hold for the mLSI constant. The proof of the following proposition is provided in \Cref{sec:prelim-pf}.

\begin{proposition}\label{prop:mLSIofSL}
    The distribution $\mu$ satisfies a modified log-Sobolev inequality with regard to the Markov chain $\*P^{(T)}$ with constant $C^{\!{mLSI}}_{\mu}\tp{\*P^{(T)}} \leq  \sup_{f\colon \bb R^d\to \bb R_{>0}} \frac{\Ent[\mu]{f}}{\E[X(T)\sim \xi_T]{\Ent[\nu_T]{f}}}$.
\end{proposition}

The following result is a direct corollary of Proposition 3.19 and Lemma 3.20 in \cite{CE25}.
\begin{theorem}[Consequence of approximate conservation of entropy for SL process]\label{thm:mLSIconcervation}
    Suppose for any $s\in[0,T]$, there exists some fixed $\theta_s>0$ such that for any $v\in \bb R^d$, $\|\cov{\+T_v \nu_s}\|_{\!{op}} \leq \theta_s$ holds almost surely. Then
    \[
        \sup_{f\colon \bb R^d\to \bb R_{>0}} \frac{\Ent[\mu]{f}}{\E[X(T)\sim \xi_T]{\Ent[\nu_T]{f}}} \leq e^{\int_0^T \theta_s \d s}.
    \]
\end{theorem}

\subsection{The Ornstein-Uhlenbeck process}
The Ornstein-Uhlenbeck process (OU process) is a continuous-time stochastic process $\set{X^{\OU}(t)}_{t\geq 0}$ with the following trajectory:
\begin{equation}\label{eq:OU}
    \dd X^{\OU}(t) = - X^{\OU}(t) \dd t + \sqrt{2} \dd B(t).
\end{equation}
For any $t\geq 0$, $X^{\OU}(t)$ equals $e^{-t} X_{\OU}(0) + \sqrt{1-e^{-2t}}\zeta$ in distribution, where $\zeta\sim \+N(0,\Id)$ is independent with $X^{\OU}(0)$. Assume the process start at $X^{\OU}(0)\sim \mu$. Let $\xi^{\OU}_t$ be the law of $X^{\OU}(t)$ and $\nu^{\OU}_t(\cdot |y)$ be the conditional distribution of $X^{\OU}(0)$ given $X^{\OU}(t)=y$. When the information of $y$ is clear, we will omit $y$ and use $\nu^{\OU}_t$ for simplicity.


\subsection{Restricted Gaussian oracle}\label{sec:prelim-RGO}
For some fixed $y\in \bb R^d$, $\sigma^2\in \bb R$, let $\mu_{y,\sigma^2}$ be the distribution with density $\propto \exp\set{-V(x) - \frac{\norm{x-y}^2}{2\sigma^2}}$.
The restricted Gaussian oracle (RGO) $\RGO(y,\sigma^2)$ takes as input a point $y\in \bb R^d$, a variance parameter $\sigma^2\in \bb R$. It outputs sample from $\mu_{y,\sigma^2}$. 

A widely used method to implement the RGO is simply rejection sampling. In this work, we directly apply the rejection sampling algorithm and use the following result in \cite{LC23} as a black box. The details of the algorithm is provided in \Cref{sec:RGO}.

\begin{theorem}[A corollary of Propositions 3.2, 3.4 and D.4 in \cite{LC23}]\label{thm:rejection}
    Assume \Cref{assump:smooth} holds and $\sigma^2 \leq \frac{1}{Ld}$. For any $y\in \bb R^d$, there exists an algorithm that generates a sample from $\mu_{y,\sigma^2}$ with $\wt{\+O}(1)$ queries to $V$ and $\grad V$ in expectation.
\end{theorem}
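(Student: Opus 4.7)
The plan is to implement $\RGO(y,\sigma^2)$ by rejection sampling against a Gaussian proposal centered near the mode of $\mu_{y,\sigma^2}$. The key observation is that the negative log-density (up to an additive constant)
\[
    U(x) = V(x) + \frac{\|x-y\|^2}{2\sigma^2}
\]
has Hessian in $[(\sigma^{-2}-L)\Id,\,(\sigma^{-2}+L)\Id]$ by \Cref{assump:smooth}, and the hypothesis $\sigma^2 \leq \tfrac{1}{Ld}$ forces $L\sigma^2 \leq 1/d$. Hence $U$ is $\alpha$-strongly convex and $\beta$-smooth with $\alpha \geq \sigma^{-2}(1-1/d)$ and $\beta \leq \sigma^{-2}(1+1/d)$, so the condition number $\beta/\alpha \leq (d+1)/(d-1)$ is bounded by an absolute constant uniformly in $d$.

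First, I would approximately locate the unique minimizer $x^{\ast}$ of $U$. Since $U$ is strongly convex and smooth with $O(1)$ condition number, a handful of gradient descent iterations initialized at $y$ converge geometrically and return a point $\hat x$ with $\alpha\|\hat x - x^{\ast}\|^2 = O(1/d)$ using $\wt{\+O}(1)$ queries to $\grad V$. Next, take the rejection proposal $q = \+N(\hat x,\alpha^{-1}\Id)$. Using the sandwich $U(x^{\ast}) + \tfrac{\alpha}{2}\|x-x^{\ast}\|^2 \leq U(x) \leq U(x^{\ast}) + \tfrac{\beta}{2}\|x-x^{\ast}\|^2$, a direct calculation (where the shift of center from $x^{\ast}$ to $\hat x$ is absorbed into an $O(1)$ multiplicative slack thanks to the accuracy chosen for $\hat x$) yields an envelope $M \leq O(1)\cdot e^{-U(x^{\ast})}\cdot (2\pi/\alpha)^{d/2}$, while the partition function satisfies $\int e^{-U} \geq e^{-U(x^{\ast})}\cdot (2\pi/\beta)^{d/2}$. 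By the standard rejection-sampling analysis, the expected number of trials is
\[
    \frac{M}{\int e^{-U}} \;\leq\; O(1)\cdot \tp{\frac{\beta}{\alpha}}^{d/2} \;\leq\; O(1)\cdot \tp{\frac{d+1}{d-1}}^{d/2} = O(1),
\]
and each trial costs only a single evaluation of $V$. Combining the two phases gives total expected query complexity $\wt{\+O}(1)$.

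The main obstacle, and the reason the hypothesis must take the form $\sigma^2\leq 1/(Ld)$ rather than $\sigma^2\leq 1/L$, is the $d$-th power in the determinant ratio $(\beta/\alpha)^{d/2}$: the per-coordinate slack $1+O(1/d)$ between $\alpha$ and $\beta$ is exponentiated across $d$ dimensions, and it is precisely the $1/d$ factor in the variance bound that makes $(1+O(1/d))^{d/2}$ collapse to a constant instead of blowing up exponentially. A secondary but routine concern is threading the approximation error $\|\hat x - x^{\ast}\|$ through the envelope inequality, which is why the first step must drive $\alpha\|\hat x - x^{\ast}\|^2$ down to $O(1/d)$; this overhead is logarithmic in the relevant parameters and hidden inside the $\wt{\+O}$.
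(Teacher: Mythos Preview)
Your high-level strategy --- locate an approximate minimizer, then run rejection sampling with a Gaussian proposal of precision $\alpha=\sigma^{-2}-L$, and control the acceptance probability via $(\beta/\alpha)^{d/2}\le\bigl(\tfrac{d+1}{d-1}\bigr)^{d/2}=O(1)$ --- is exactly the mechanism the paper (following \cite{LC23}) uses. But the envelope step, as you wrote it, is broken. With proposal $q=\+N(\hat x,\alpha^{-1}\Id)$ and the sandwich $U(x)\ge U(x^\ast)+\tfrac{\alpha}{2}\|x-x^\ast\|^2$, the log-ratio
\[
\log\frac{e^{-U(x)}}{q(x)}\;\le\;-U(x^\ast)+\tfrac{\alpha}{2}\bigl(\|x-\hat x\|^2-\|x-x^\ast\|^2\bigr)+\text{const}
\;=\;-U(x^\ast)+\alpha\langle x,\,x^\ast-\hat x\rangle+\text{const}
\]
contains a \emph{linear} term in $x$ whenever $\hat x\neq x^\ast$. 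This is unbounded above, so no finite envelope $M$ exists; the ``$O(1)$ multiplicative slack'' you claim for the center shift does not hold for any positive accuracy $\|\hat x-x^\ast\|>0$, and shrinking $\alpha\|\hat x-x^\ast\|^2$ to $O(1/d)$ does nothing to fix it.

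The paper (Appendix~\ref{sec:RGO}) repairs exactly this point. Instead of centering the proposal at $\hat x$, it uses the density $\propto\exp(-h_y^w)$ with
\[
h_y^w(x)=V(w)+\langle\nabla V(w),x-w\rangle-\tfrac{L}{2}\|x-w\|^2+\tfrac{1}{2\sigma^2}\|x-y\|^2,
\]
which is the first-order Taylor lower bound of $V_y^\sigma$ at the approximate stationary point $w$. By $L$-smoothness, $h_y^w\le V_y^\sigma$ holds \emph{pointwise}, so $e^{-V_y^\sigma}\le e^{-h_y^w}$ is a valid envelope for every $w$, with no residual linear term. This proposal is still a Gaussian with precision $\alpha=\sigma^{-2}-L$, but its center is $w-\alpha^{-1}\nabla V_y^\sigma(w)$ rather than $w$ itself. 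The stationarity condition $\|\nabla V_y^\sigma(w)\|\le\sqrt{Ld}$ (obtained in $\wt{\+O}(1)$ first-order queries, \Cref{lem:AproxSta}) is then what controls the acceptance ratio via the same $(\beta/\alpha)^{d/2}$ calculation you sketched. Your argument becomes correct once you make this one-gradient-step recentering.
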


\section{The covariance of SL processes}\label{sec:SLvsOU}
We already mentioned that the stochastic localization (SL) process is simply a rescaling of the reversed Ornstein-Uhlenbeck (OU) process. We will use the SL scaling in most our proofs. In this section, we clarify the relationship between the SL scaling and OU scaling, and particularly, we translate the smoothness assumption \Cref{assump:smoothplus} to the covariance condition \Cref{cond:cov} in the context of the SL process.


Recall the definition of the SL process $\set{\nu_s}_{s\geq 0}$ where $\nu_s$ is the conditional distribution of $X$ given $X(s)=s\cdot X + B(s)$, and the definition of the OU process $\set{X^{\!{OU}}(t)}_{t\geq 0}$ where $\dd X^{\!{OU}}(t) = - X^{\!{OU}}(t) \dd t + \sqrt{2} \dd B(t)$. 

The following lemma gives the connection between the two processes and its proof is given in \Cref{sec:OUvsSL-pf}.

\begin{lemma}\label{lem:OUvsSL}
    Define $s = \frac{e^{-2t}}{1-e^{-2t}}$. With $X\sim \mu$ and $X^{\!{OU}}(0)\sim \mu$, 
    \begin{itemize}
        \item the distributions of $X^{\!{OU}}(t)$ and $\sqrt{\frac{1}{s(1+s)}}\cdot X(s)$ are the same for any $t>0$;
        \item the law of $X^{\!{OU}}(0)$ given $X^{\!{OU}}(t) = y$ is equal to the law of $X$ given $X(s) = \sqrt{s(1+s)}\cdot y$ for any $y\in \bb R^d$.
    \end{itemize}
\end{lemma}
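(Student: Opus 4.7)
The plan is to prove both parts by direct computation, exploiting the explicit Gaussian structure that both processes share.

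For part 1, I would write each random variable in the common ``$\mu$-distributed variable plus independent Gaussian'' form. From the preliminaries, $X^{\OU}(t) \stackrel{d}{=} e^{-t}X^{\OU}(0) + \sqrt{1-e^{-2t}}\,\zeta$ with $\zeta\sim\+N(0,\Id)$, and rescaling the SL definition $X(s) = sX + B(s)$ yields
\[
    \sqrt{\tfrac{1}{s(1+s)}}\, X(s) \;=\; \sqrt{\tfrac{s}{1+s}}\, X \;+\; \sqrt{\tfrac{1}{1+s}}\, \zeta',\qquad \zeta'\sim \+N(0,\Id),
\]
with $\zeta'$ independent of $X$. Substituting $s = \frac{e^{-2t}}{1-e^{-2t}}$ and simplifying gives $\tfrac{s}{1+s} = e^{-2t}$ and $\tfrac{1}{1+s} = 1 - e^{-2t}$, so the two coefficients match the OU coefficients exactly. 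Since $X$ and $X^{\OU}(0)$ both have law $\mu$, the two random variables have the same distribution.

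For part 2, I would apply Bayes' rule to both conditional distributions and show the densities agree (up to a normalizer) as functions of $x$. The OU conditional density is proportional to $\mu(x)\cdot\exp\!\left(-\|y - e^{-t}x\|^2/(2(1-e^{-2t}))\right)$, and the SL conditional density at $z = \sqrt{s(1+s)}\,y$ is proportional to $\mu(x)\cdot\exp\!\left(-\|\sqrt{s(1+s)}\,y - sx\|^2/(2s)\right)$. Expanding each square and absorbing every term that does not depend on $x$ into the normalization, both reduce to $\mu(x)\exp(\alpha\langle y,x\rangle - \beta\|x\|^2/2)$ for some explicit $\alpha,\beta$. The identities $\tfrac{e^{-2t}}{1-e^{-2t}} = s$ (by definition) and $\tfrac{e^{-t}}{1-e^{-2t}} = \sqrt{s(1+s)}$ (which follows from the definition of $s$ by a short computation) make the two pairs of constants coincide, so the two densities are equal.

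The whole argument is algebraic bookkeeping and no substantive analytic obstacle is expected. The only point worth flagging is that the reparametrization $t \mapsto s = \frac{e^{-2t}}{1-e^{-2t}}$ is a bijection from $(0,\infty)$ onto $(0,\infty)$, so the claimed identities meaningfully cover the full range of both processes; this also makes it clear that later results stated for ``$s\in[0,\infty)$'' on the SL side correspond to ``$t\in[0,\infty)$'' on the OU side (with endpoints handled by limits).
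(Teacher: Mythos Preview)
Your proposal is correct and follows essentially the same approach as the paper: for part 1, both you and the paper write each variable as a $\mu$-sample plus an independent Gaussian and match coefficients via $\tfrac{s}{1+s}=e^{-2t}$, $\tfrac{1}{1+s}=1-e^{-2t}$; for part 2, both compute the Bayes-rule posterior densities and identify them after the substitution $z=\sqrt{s(1+s)}\,y$. The only cosmetic difference is that the paper keeps the exponent in the completed-square form $\tfrac{1}{2(1-e^{-2t})}\|e^{-t}x-y\|^2$ rather than expanding into the tilt form $\alpha\langle y,x\rangle-\tfrac{\beta}{2}\|x\|^2$, but the algebra is equivalent.
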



The main result of this section is the following lemma stating that \Cref{assump:smoothplus} in the context of the OU process implies \Cref{cond:cov} in the context of the SL process.

\begin{lemma}\label{lem:cond1}
    \Cref{assump:smoothplus} implies \Cref{cond:cov}. More specifically, for an OU process $\tp{\xi^{\!{OU}}_t}_{t\ge 0}$ starting from $\xi^{\!{OU}}_0=\mu$, if $-\grad \log\xi^{\!{OU}}_t$ is $L^{\!{OU}}_t$-Lipschitz for any $t\in [0,\infty)$, then for an SL process $(\nu_s)_{s\ge 0}$ with $\nu_0=\mu$, it holds that for any $s\in \left[0, \infty \right)$, any $v,z\in \bb R^d$, 
    \[
        \frac{1+s - L_s}{s(1+s)}\cdot \Id \mle  \cov{\+T_v \nu_s(\cdot\,|\,z)} \mle \frac{1+s + L_s }{s(1+s)}\cdot \Id,
    \]
    where $L_s=L^{\OU}_{t}$ with $t=\log\sqrt{\frac{s+1}{s}}$.
\end{lemma}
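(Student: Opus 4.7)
The plan is to transport the question to the OU parametrization via Lemma~\ref{lem:OUvsSL}, then read off the conditional covariance from a Tweedie/Stein-type identity, and finally plug in the smoothness hypothesis. Under the correspondence $s=\frac{e^{-2t}}{1-e^{-2t}}$, Lemma~\ref{lem:OUvsSL} gives
\[
\nu_s(\cdot\mid z)\;=\;\nu^{\OU}_t\!\left(\cdot\,\Big|\,z/\sqrt{s(1+s)}\right),
\]
so it suffices to prove the two-sided covariance bound for $\nu^{\OU}_t(\cdot\mid y)$ uniformly in $y\in\bb R^d$ (the conditioning point on the right-hand side is free, so uniformity in $y$ immediately yields uniformity in $z$).

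Second, I would write $X^{\OU}(t)\mid X^{\OU}(0)=x\sim \+N(\alpha x,\beta^2\Id)$ with $\alpha=e^{-t}$, $\beta^2=1-e^{-2t}$, so that
\[
\xi^{\OU}_t(y)\;=\;\int \mu(x)\,\+N(y;\alpha x,\beta^2\Id)\,\dd x.
\]
Differentiating under the integral sign twice (a standard Tweedie/Stein computation, where first derivatives give the posterior mean and second derivatives give the posterior second moment) produces the identity
\[
\nabla^2 \log\xi^{\OU}_t(y)\;=\;\frac{\alpha^2}{\beta^4}\,\cov{\nu^{\OU}_t(\cdot\mid y)}\;-\;\frac{1}{\beta^2}\,\Id,
\]
which rearranges to
\[
\cov{\nu^{\OU}_t(\cdot\mid y)}\;=\;\frac{\beta^2}{\alpha^2}\,\Id\;+\;\frac{\beta^4}{\alpha^2}\,\nabla^2\log\xi^{\OU}_t(y).
\]
A short check with $e^{-2t}=s/(1+s)$ gives $\beta^2/\alpha^2=1/s$ and $\beta^4/\alpha^2=1/(s(1+s))$, matching the denominators in the target bound.

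Third, Assumption~\ref{assump:smoothplus} says $-\log\xi^{\OU}_t$ is $L^{\OU}_t$-smooth, equivalently $-L_s\,\Id\mle \nabla^2\log\xi^{\OU}_t(y)\mle L_s\,\Id$ with $L_s=L^{\OU}_t$ for $t=\log\sqrt{(s+1)/s}$. Inserting this into the identity above and substituting the computed coefficients yields exactly
\[
\frac{1+s-L_s}{s(1+s)}\,\Id\;\mle\;\cov{\nu_s(\cdot\mid z)}\;\mle\;\frac{1+s+L_s}{s(1+s)}\,\Id.
\]
I expect the Tweedie-style Hessian identity itself to be mechanical; the only obstacle worth double-checking is the bookkeeping of the scalings, namely making sure the rescaling $z\mapsto z/\sqrt{s(1+s)}$ from Lemma~\ref{lem:OUvsSL} is consistent with the $(\alpha,\beta)$ factors and that the conversion between $L^{\OU}_t$ and $L_s$ is correctly coupled with the time change. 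Once these substitutions line up, no further analytic input is needed.
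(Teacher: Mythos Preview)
Your proposal is correct and follows essentially the same strategy as the paper: both combine the OU--SL correspondence (Lemma~\ref{lem:OUvsSL}) with a Tweedie-type Hessian identity relating $\nabla^2\log$ of the marginal to the posterior covariance, and then feed in the smoothness bound $-L_s\Id\preceq\nabla^2\log\xi^{\OU}_t\preceq L_s\Id$. The only cosmetic difference is the order of operations: the paper applies the Hessian identity in SL coordinates (Lemma~\ref{lem:hessian-pt}, namely $\cov{\nu_s(\cdot|z)}=\nabla_z^2\log p_{X_s}(z)+\tfrac{1}{s}\Id$) and then converts $\nabla^2\log p_{X_s}$ to $\nabla^2\log p_{X^{\OU}_t}$ via the density rescaling, whereas you first pass to $\nu^{\OU}_t$ and derive the Hessian identity directly in OU coordinates with generic $(\alpha,\beta)$; the resulting coefficients $\beta^2/\alpha^2=1/s$ and $\beta^4/\alpha^2=1/(s(1+s))$ match exactly.
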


The proof of Lemma~\ref{lem:cond1} is in \Cref{sec:proof-cond1}. Before that, we establish a few useful identities for SL processes.

\subsection{Some useful lemmas about score functions}

In this section, we derive a few useful identities for the density function $p_{X_s}(x)$ of the stochastic localization process 
\[
    \forall s\ge 0,\; X(s) = s\cdot X+B(s), \; X\sim\mu.
\]
We remark that similar calculations have been carried out in several places such as in the study of diffusion models (\EG,~\cite{CLL23}) in the context of OU process, and as the properties of \emph{logarithmic Laplace transform} (see \EG,~\cite{Eld22}).

\begin{lemma}\label{lem:tweedie}
    $
        \grad_y\log p_{X_s}(y) = \E[p_{X|X_s}(\cdot\,|\,y)]{X-y/s}.
    $
\end{lemma}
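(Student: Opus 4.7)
The plan is to prove this identity via a direct computation, essentially a version of Tweedie's formula adapted to the stochastic localization scaling. Since $X(s)=sX+B(s)$ with $B(s)\sim \+N(0,s\Id)$ independent of $X$, the density factorizes as a convolution:
\[
    p_{X_s}(y)=\int_{\bb R^d} \mu(x)\cdot \+N(y;sx,s\Id)\,\d x,\qquad \+N(y;sx,s\Id)=\frac{1}{(2\pi s)^{d/2}}\exp\tp{-\frac{\|y-sx\|^2}{2s}}.
\]

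First I would differentiate under the integral sign with respect to $y$. The gradient of the Gaussian kernel is
\[
    \grad_y \+N(y;sx,s\Id)=-\frac{y-sx}{s}\cdot \+N(y;sx,s\Id)=\tp{x-\frac{y}{s}}\cdot \+N(y;sx,s\Id),
\]
so pulling the gradient inside the integral gives
\[
    \grad_y p_{X_s}(y)=\int_{\bb R^d} \mu(x)\tp{x-\frac{y}{s}}\+N(y;sx,s\Id)\,\d x.
\]
Exchanging differentiation and integration is justified by the Gaussian decay of $\+N(y;sx,s\Id)$ together with the finite second moment of $\mu$ (\Cref{assump:moment}), which produces an integrable dominating function on any compact neighborhood of $y$.

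Next I would divide by $p_{X_s}(y)$ to form the logarithmic derivative and recognize the resulting ratio as the Bayes posterior. Indeed, by Bayes' rule,
\[
    p_{X|X_s}(x\,|\,y)=\frac{\mu(x)\cdot \+N(y;sx,s\Id)}{p_{X_s}(y)},
\]
so
\[
    \grad_y\log p_{X_s}(y)=\frac{\grad_y p_{X_s}(y)}{p_{X_s}(y)}=\int_{\bb R^d}\tp{x-\frac{y}{s}}p_{X|X_s}(x\,|\,y)\,\d x=\E[p_{X|X_s}(\cdot\,|\,y)]{X-y/s},
\]
which is the claim. There is no substantive obstacle here; the only care needed is the exchange of limit and integral, which is standard for Gaussian convolutions and which I would dispatch with a brief appeal to dominated convergence.
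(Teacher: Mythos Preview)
Your proof is correct and follows essentially the same direct computation as the paper: write $p_{X_s}$ as the Gaussian convolution, differentiate the kernel to pick up the factor $x-y/s$, and recognize the resulting ratio as the posterior $p_{X|X_s}(\cdot\,|\,y)$ via Bayes' rule. The only addition beyond the paper's version is your brief remark on dominated convergence to justify interchanging $\grad_y$ and the integral.
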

This identity is known as Tweedie's formula (see, e.g.~\cite{Efr11}). Its proof is given in \Cref{sec:OUvsSL-pf}. The first-order derivative of $\log p_{X_s}(y)$ gives the expectation of $p_{X|X_s}(\cdot | y)$. Alike, the second-order derivative reveals its covariance.

\begin{lemma}\label{lem:hessian-pt}
    $
        \grad^2_y \log p_{X_s}(y) = \cov{p_{X|X_s}(\cdot|y)} - \frac{\Id}{s}.
    $
\end{lemma}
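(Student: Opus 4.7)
\medskip

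\noindent\textbf{Proof plan for Lemma \ref{lem:hessian-pt}.}
The plan is to differentiate the closed-form expression
\[
    p_{X_s}(y) \;=\; \int_{\bb R^d} \mu(x)\cdot \+N\bigl(y;\,sx,\,s\Id\bigr)\,\dd x
\]
twice under the integral sign, and then apply the deterministic identity
$\grad^2\log p = p^{-1}\grad^2 p - (p^{-1}\grad p)(p^{-1}\grad p)^{\top}$. Differentiating the Gaussian kernel once pulls out a factor of $-(y-sx)/s$, which, after dividing by $p_{X_s}(y)$ to convert the resulting integral into an expectation with respect to the posterior $p_{X|X_s}(\cdot\,|\,y)$, recovers Lemma \ref{lem:tweedie} (this is really the warm-up that fixes conventions). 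Differentiating the Gaussian kernel a second time yields the symmetric matrix $\frac{(y-sx)(y-sx)^{\top}}{s^{2}}-\frac{\Id}{s}$ inside the integral.

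Next I would rewrite $(y-sx)/s = y/s - x$, and use the mean--covariance decomposition $\E{UU^{\top}}=\cov{U}+\E{U}\E{U}^{\top}$ applied to $U = X - y/s$ under the law $p_{X|X_s}(\cdot\,|\,y)$. By Lemma \ref{lem:tweedie}, $\E[p_{X|X_s}(\cdot\,|\,y)]{X - y/s} = \grad_{y}\log p_{X_s}(y)$, so
\[
    \frac{\grad^{2}_{y} p_{X_s}(y)}{p_{X_s}(y)} \;=\; \cov{p_{X|X_s}(\cdot\,|\,y)} + \bigl(\grad\log p_{X_s}(y)\bigr)\bigl(\grad\log p_{X_s}(y)\bigr)^{\top} - \frac{\Id}{s}.
\]
Subtracting the outer-product term $\bigl(\grad\log p_{X_s}(y)\bigr)\bigl(\grad\log p_{X_s}(y)\bigr)^{\top}$ coming from $(p^{-1}\grad p)(p^{-1}\grad p)^{\top}$ then cancels the rank-one piece exactly, leaving $\cov{p_{X|X_s}(\cdot\,|\,y)} - \Id/s$, as claimed.

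The computation is essentially mechanical; there is no real obstacle. The two minor points to be careful about are (i) keeping the factors of $s$ and the signs straight when moving the Gaussian density in and out of the expectation, and (ii) justifying the exchange of $\grad$ and $\int$, which is standard since the Gaussian kernel $\+N(y;sx,s\Id)$ and its first two $y$-derivatives decay rapidly enough in $x$ that dominated convergence applies for each fixed $y$. I would also note, as a sanity check, that the formula is the well-known second-order Tweedie/Stein--type identity for Gaussian denoising with noise variance $s$, so the final answer is what one should expect.
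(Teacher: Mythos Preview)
Your proposal is correct and is essentially the same as the paper's proof: both differentiate under the integral sign using the explicit form of the Gaussian kernel and invoke Tweedie's formula. The only cosmetic difference is that the paper differentiates the score $\grad_y\log p_{X_s}(y)=\E[p_{X|X_s}(\cdot|y)]{X}-y/s$ directly (applying the quotient rule to the posterior density $q_s/Z_s$ to get $\grad_y(q_s/Z_s)=(q_s/Z_s)\bigl(x-\E[p_{X|X_s}(\cdot|y)]{X}\bigr)$), rather than first computing $p^{-1}\grad^2 p$ and then subtracting the rank-one term $(p^{-1}\grad p)(p^{-1}\grad p)^{\top}$ as you do; the two organizations are equivalent.
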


\subsection{Proof of Lemma~\ref{lem:cond1}}\label{sec:proof-cond1}
    
By Lemma~\ref{lem:hessian-pt}, we know that 
\[
         \cov{\nu_s(\cdot |z)} = \grad_z^2 \log p_{X_s}(z) + \frac{\Id}{s}.
\]
From Lemma~\ref{lem:OUvsSL}, we know that 
\[
        p_{X_s}(z)  = \left(\frac{1}{\sqrt{s(1+s)}} \right)^d \cdot  p_{X_{t}^{\!{OU}}}\tp{\frac{z}{\sqrt{s(1+s)}}}.
\]
Thus
\[
         \grad_{z}^2 \log p_{X_s}(z) = \frac{1}{s(1+s)}\grad_{z}^2 \log p_{X_{t}^{\!{OU}}}\tp{\frac{z}{\sqrt{s(1+s)}}}.
\]
So 
\[
     -\frac{L_t^{\!{OU}}}{s(1+s)} \mle \grad_{z}^2 \log p_{X_s}(z) \mle \frac{L_t^{\!{OU}}}{s(1+s)}.
\]
Since $L_t^{\!{OU}} = L_s$, $\cov{\nu_s(\cdot |z)} = \grad_z^2 \log p_{X_s}(z) + \frac{\Id}{s}$. Thus, for any $z\in \bb R^d$,
\[
         \frac{1+s - L_s}{s(1+s)}\cdot \Id \mle  \cov{\nu_s(\cdot |z)} \mle \frac{1+s + L_s }{s(1+s)}\cdot \Id.
\]
Note that
\[
    \+T_v \nu_s(x\,|\,z) \propto \mu(x)\cdot \exp\set{-\frac{\|sx-z\|^2}{2s} + \inner{v}{x}} \propto\nu_s(x\,|\,z+v).
\]
This completes the proof of Lemma~\ref{lem:cond1}.

\section{The restricted Gaussian dynamics with late initialization}\label{sec:algo}

In this section, we prove \Cref{thm:main-ub} by introducing and analyzing the underlying algorithm. We first analyze an ideal continuous-time restricted Gaussian dynamics in \Cref{sec:ideal-RGD} and analyze its convergence. Our algorithm for proving \Cref{thm:main-ub} will be an approximate version of this ideal restricted Gaussian dynamics, described in \Cref{sec:late-RGD} and analyzed in \Cref{sec:late-RGD-analyze}.

\subsection{The convergence of the ideal restricted Gaussian dynamics}\label{sec:ideal-RGD}
We first prove the convergence of the restricted Gaussian dynamics introduced in \Cref{sec:prelim-SL}. For the ease of analysis, we will consider a continuous-time restricted Gaussian dynamics.

Recall that each iteration of the restricted Gaussian dynamics with transition kernel $\*P^{(T)}$ includes two steps:
\begin{itemize}
    \item (down-walk) draw $\hat Y_k\sim \+N\tp{T\cdot Y_{k-1}, T\cdot \Id}$;
    \item (up-walk) then draw $Y_k\sim \nu_T\tp{\cdot \,|\,\hat Y_k}$.
\end{itemize}
The first step, which is also called the down-walk step, is to sample from a Gaussian distribution and is straightforward to implement. For the second step, or the up-walk step, the target distribution $\nu_T\tp{\cdot\,|\,\hat Y_k}$ has the density function
\[
    \nu_T\tp{x \,|\,\hat Y_k} \propto \mu(x) \cdot \exp\set{-\frac{\norm{Tx-\hat Y_k}^2}{2T}} \propto \exp\set{-V(x) - \frac{\norm{Tx-\hat Y_k}^2}{2T}}.
\]
That is, the up-walk step requires the implementation of an exact RGO, which is $\RGO\tp{\frac{\hat Y_k}{T},\frac{1}{T}}$. 

We consider a continuous-time restricted Gaussian dynamics with heat kernel $\*H_t$ in a duration of $t$. The continuous-time process is constructed by involving a Poisson process with rate $1$. The heat kernel $\*H_t(x,\cdot)$ stands for the distribution of the particle at time $t$ by running the process starting from $x$. To be specific,
\begin{equation}
    \forall t\geq 0, \forall x,y\in \bb R^d,\,\*H_t(x,\dd y) = \sum_{k=0}^{\infty} \Pr{K_t=k}\cdot \tp{\*P^{(T)}}^k(x, \dd y), \label{eq:transition}
\end{equation}
where $K_t\sim \!{Pois}(t)$.

If the discrete-time chain $\*P^{(T)}$ has a bounded \Poincare constant $C^{\!{PI}}_{\mu}\tp{\*P^{(T)}}$, then the convergence of the corresponding continuous-time process with heat kernel $\*H_t$ is also guaranteed (see \Cref{sec:GD-pf} for details). 

The implementation of the continuous-time restricted Gaussian dynamics is given in \Cref{algo:GD}. To bound the worst case iteration complexity, we set a hard upper bound $K$ for the total number of iterations. We run the continuous-time Markov process for $t=\frac{K}{2}$ time and once the number of iterations exceeds $K$, we force the algorithm to terminate and output.

\begin{algorithm}[h]
	\caption{The continuous-time restricted Gaussian dynamics}
	\label{algo:GD}
        \Input{the times $T$, initial distribution $\mu_0$, total iteration number $K$}
	\begin{algorithmic}[1]
        \State Draw $Y_0\sim \mu_0$ 
	    \State Draw $K'\sim \!{Pois}(K/2)$, set $\+K\gets K\wedge K'$	
    \For{$k=1,2,\dots, \+K$}
		\State Sample $\hat Y_k \sim \+N(T\cdot Y_{k-1},\ T\cdot \Id)$\;
        \State Get $Y_{k}$ via $\RGO\tp{\frac{\hat Y_k}{T},\frac{1}{T}}$\;
		\EndFor
        \State Output $Y_{\+K}$\;
	\end{algorithmic}
\end{algorithm}

\Cref{thm:GD} below summarizes the rate of convergence of \Cref{algo:GD} and is proved in \Cref{sec:GD-pf}.
\begin{theorem}[Convergence of \Cref{algo:GD}]\label{thm:GD}
    Assume the \Poincare constant of $\mu$ with regard to the Markov chain $\*P^{(T)}$ is $C^{\!{PI}}_{\mu}\tp{\*P^{(T)}}$. Denote the output distribution of \Cref{algo:GD} as $p_{Y_{\+K}}$. Then 
    \[
        \!{TV}(p_{Y_{\+K}}\,\|\,\mu) \leq \tfrac12 \cdot e^{-K/{C^{\!{PI}}_{\mu}\tp{\*P^{(T)}}}} \cdot \sqrt{\chi^2(\mu_0\| \mu)} + 2e^{-{K}/{8}}
    \]
    for any $K\geq 6$. The iteration complexity of this algorithm is at most $K$.
\end{theorem}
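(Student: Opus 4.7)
The plan is to interpret \Cref{algo:GD} as a truncated version of a continuous-time Markov semigroup, split the total-variation error into the mixing error of the uncapped process and the cost of the hard truncation, and bound each piece. The relevant semigroup is $\*H_t = e^{t(\*P^{(T)}-\!{Id})}$, whose Poissonization is exactly \eqref{eq:transition}; if the cap $K$ were absent, the output $Y_{K'}$ would have distribution $\mu_0 \*H_{K/2}$, so the two contributions can be treated separately and recombined via the triangle inequality.

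For the mixing piece, I would use the reversibility of $\*P^{(T)}$ with respect to $\mu$ (the proposition preceding \Cref{prop:PIofSL}) to identify the Dirichlet form as $\+E_{\*P^{(T)}}(f) = \langle f,(\!{Id}-\*P^{(T)})f\rangle_\mu$. Differentiating $t \mapsto \|g_t\|_\mu^2$ along the semigroup, where $g_t := \*H_t(\mu_0/\mu) - 1$ is the centered density ratio, then yields
\[
\frac{\dd}{\dd t}\|g_t\|_\mu^2 \,=\, -2\+E_{\*P^{(T)}}(g_t) \,\leq\, -\frac{2}{C^{\!{PI}}_{\mu}\tp{\*P^{(T)}}}\,\|g_t\|_\mu^2
\]
by \Cref{prop:PIofSL}. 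Integrating this differential inequality gives $\chi^2(\mu_0\*H_t \,\|\, \mu) \leq e^{-2t/C^{\!{PI}}_\mu\tp{\*P^{(T)}}} \chi^2(\mu_0 \,\|\, \mu)$; setting $t=K/2$ and combining with the Cauchy--Schwarz inequality $\DTV \leq \tfrac12\sqrt{\chi^2}$ produces the first term of the theorem.

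For the truncation piece, the capped and uncapped outputs agree on the event $\{K'\leq K\}$, so $\DTV(p_{Y_{\+K}},\mu_0\*H_{K/2}) \leq \Pr{K'>K}$. A Bernstein/Chernoff tail bound for $K'\sim \!{Pois}(K/2)$ with mean $K/2$ and deviation $K/2$ gives $\Pr{K'>K} \leq e^{-K/8}$ whenever $K\geq 6$; combining this with the mixing estimate via the triangle inequality produces the full total-variation bound. The worst-case iteration count of at most $K$ is immediate from the definition $\+K = K\wedge K'$.

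The only conceptually delicate step is the passage from the discrete \Poincare constant $C^{\!{PI}}_{\mu}\tp{\*P^{(T)}}$ to the continuous-time $\chi^2$-contraction rate of $\*H_t$; this is the semigroup/Dirichlet form calculation above and crucially relies on reversibility. All remaining ingredients---the Poisson tail bound, the $\DTV$-versus-$\chi^2$ inequality, and the Gr\"onwall-type integration---are standard, so the only non-routine care required is in tracking the constants in the exponents (in particular, exploiting the down-up factorization $\*P^{(T)} = D^* D$, which guarantees a non-negative spectrum for $\*H_t$) to match the precise form of the bound stated in the theorem.
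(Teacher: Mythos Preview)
Your proposal is correct and follows essentially the same route as the paper: Poissonize the discrete chain into the semigroup $\*H_t$, derive the $\chi^2$-contraction $\chi^2(\mu_0\*H_t\,\|\,\mu)\le e^{-2t/C}\chi^2(\mu_0\,\|\,\mu)$ by differentiating along the semigroup and invoking the \Poincare inequality (this is exactly the content of \Cref{prop:Ht} and \Cref{lem:Htconvergence}), and then control the cap via a Poisson tail bound (\Cref{lem:poistail}).

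The only visible difference is in how the truncation is handled. The paper conditions the unstopped law $\mu_t$ on the events $\{\tau>t\}$ and $\{\tau\le t\}$ and applies two triangle inequalities, which is why the final bound picks up $2e^{-K/8}$. Your direct coupling of $Y_{\+K}$ with $Y_{K'}$ on $\{K'\le K\}$ is cleaner and in fact saves a factor of two on that term; it certainly implies the stated theorem. Finally, the down--up factorization $\*P^{(T)}=D^*D$ you mention is a true observation but is not needed: the semigroup differential inequality uses only reversibility and the \Poincare bound, and works regardless of the sign of the spectrum of $\*P^{(T)}$.
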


\subsection{The practical algorithm with late initialization}\label{sec:late-RGD}
Recall that $\xi_s$ is the distribution of $X(s)$ in \cref{eq:SL} and $\nu_s$ is the law of $X$ given $X(s)$.
According to \Cref{thm:conservation}, if $\cov{\nu_s}$ is suitably bounded almost surely for all $s\in [0, T]$, then the \Poincare constant of the restricted Gaussian dynamics with transition kernel $\*P^{(T)}$ can be controlled. However, under \Cref{cond:cov} alone, we can only obtain the bound $\norm{\cov{\nu_s}}_{\!{op}}\leq \frac{1}{s} + \frac{L_s}{s(1+s)}$ and therefore the integration $\int_0^T \tp{\frac{1}{s} + \frac{L_s}{s(1+s)}} \dd s$ diverges, preventing us from directly concluding the convergence of the restricted Gaussian dynamics solely via conservation of variance. Observe that this divergence arises from the fact that $\frac{1}{s}$ becomes unbounded and is not locally integrable near zero.

To address this, we need a late initialization scheme to skip the segment $[0,s_0)$ for some small $s_0$, whose value is to be determined later. The algorithm is given in \Cref{algo:modifiedGD}. Let us explain its main differences compared to \Cref{algo:GD}.

\begin{algorithm}[H]
	\caption{The restricted Gaussian dynamics with late initialization}
	\label{algo:modifiedGD}
        \Input{two time stamps $s_0$ and $T$}
	\begin{algorithmic}[1]
	    \State Draw $X_{s_0}\sim \+N\tp{0, s_0(1+s_0)\Id}$\;
        \State Draw $Y_0\sim \mu_0=\+N\tp{-\frac{\grad U_y(0)}{2(L+s_0)},\frac{I_d}{2(L+s_0)}}$\label{line:2}
	    \State Draw $K'\sim \!{Pois}\tp{\frac{K_{X_{s_0}}}{2}}$, set $\+K\gets K_{X_{s_0}}\wedge K'$ 	
        \For{$k=1,2,\dots, \+K $}
		\State Sample $\hat Y_k \sim \+N(T\cdot Y_{k-1},\ T\cdot \Id)$\;
        \State Get $Y_{k}$ via $\RGO\tp{\frac{\hat Y_k + X_{s_0}}{T+s_0},\frac{1}{T+s_0}}$\;
		\EndFor
        \State Output $Y_{\+K}$\;
	\end{algorithmic}
\end{algorithm}

\paragraph{The late initialization} The first difference is the late initialization mentioned above. In \Cref{algo:modifiedGD}, we first draw $X_{s_0}\sim \xi_{s_0}$, and then use restricted Gaussian dynamics to generate a sample from the conditional distribution $\nu_{s_0}\tp{\cdot \,|\,X_{s_0}}$. Since we do not have direct access to $\xi_{s_0}$, we use $\+N\tp{0, s_0(1+s_0)\cdot\Id}$ to approximate $\xi_{s_0}$ instead as an approximation. In other words, we will consider the restricted Gaussian dynamics induced by the stochastic process 
\begin{equation}
    Z(s) = s\cdot Z + B(s),\ Z\sim \nu_{s_0}\tp{\cdot \,|\,X_{s_0}}. \label{eq:modifiedSL}
\end{equation}
For a fixed $X_{s_0}$, we let $\xi'_s$ and $\nu'_t(\cdot |y)$ denote the distribution of $Z(s)$ and the distribution of $Z$ given $Z(s)=y$. Denote $\*Q^{(T)}$ as the transition kernel of this restricted Gaussian dynamics. The \Poincare constant of this Markov chain is proved in the next lemma.

\begin{lemma}\label{lem:PI}
    Assume \Cref{cond:cov} holds. Given $X_{s_0}\in \bb R^d$, the restricted Gaussian dynamics induced by \cref{eq:modifiedSL} satisfies $C^{\!{PI}}_{\nu_{s_0}}\tp{\*Q^{(T)}} \leq \frac{s_0+T}{s_0}\cdot \exp\set{ \int_{s_0}^{T+s_0}  \frac{L_{s}}{s(1+s)} \d s}$.
\end{lemma}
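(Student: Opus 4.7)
The plan is to view \Cref{eq:modifiedSL} as a stochastic localization process starting from the initial distribution $\nu_{s_0}(\cdot\,|\,X_{s_0})$, and then to apply the approximate conservation of variance in \Cref{thm:conservation}. The crux is to identify the conditional distributions of this auxiliary SL process with those of the \emph{original} SL process at a shifted time, so that \Cref{cond:cov} can be invoked without any additional work.

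The first step is to prove that for every $s\geq 0$ and every $y\in\bb R^d$,
\[
\nu'_s(\cdot\,|\,y) = \nu_{s_0+s}(\cdot\,|\,X_{s_0}+y).
\]
By Bayes' rule,
\[
\nu'_s(z\,|\,y) \;\propto\; \nu_{s_0}(z\,|\,X_{s_0})\cdot \exp\tp{-\tfrac{\|y-sz\|^2}{2s}} \;\propto\; \mu(z)\exp\tp{-\tfrac{\|X_{s_0}-s_0 z\|^2}{2s_0}}\exp\tp{-\tfrac{\|y-sz\|^2}{2s}}.
\]
The two quadratic factors in $z$ merge, via completion of the square, into $\exp\tp{-\tfrac{\|(s_0+s)z-(X_{s_0}+y)\|^2}{2(s_0+s)}}$ up to a $z$-independent factor, which is precisely the tilt defining $\nu_{s_0+s}(\cdot\,|\,X_{s_0}+y)$.

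Given this identity, \Cref{cond:cov} evaluated at time $s_0+s$ immediately yields the uniform almost-sure bound
\[
\norm{\cov{\nu'_s(\cdot\,|\,y)}}_{\!{op}} \leq \frac{1}{s_0+s} + \frac{L_{s_0+s}}{(s_0+s)(1+s_0+s)}.
\]
Applying \Cref{thm:conservation} to the SL process of $\nu_{s_0}(\cdot\,|\,X_{s_0})$ run for time $T$ with this choice of $\theta_s$ gives
\[
C^{\!{PI}}_{\mu}\tp{\*Q^{(T)}} \leq \exp\set{\int_0^T \frac{\dd s}{s_0+s} + \int_0^T \frac{L_{s_0+s}}{(s_0+s)(1+s_0+s)}\dd s}.
\]
The first integral equals $\log\tfrac{s_0+T}{s_0}$, and the substitution $u=s_0+s$ turns the second into $\int_{s_0}^{s_0+T}\tfrac{L_u}{u(1+u)}\dd u$, which together produce the claimed bound.

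The only mild obstacle is the identity in the first step: one must notice that the auxiliary SL process from the tilted measure $\nu_{s_0}(\cdot\,|\,X_{s_0})$ can, after a deterministic shift of the conditioning coordinate, be reinterpreted as a sub-segment of the original SL process of $\mu$. Once this is in place, the remainder is a transparent combination of \Cref{cond:cov}, \Cref{thm:conservation}, and an elementary change of variables; no new analytic input is required, and in particular one never needs to control the smoothness of $\nu_{s_0}(\cdot\,|\,X_{s_0})$ itself.
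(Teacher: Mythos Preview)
Your proposal is correct and follows essentially the same route as the paper: the paper factors out the key identity $\nu'_s(\cdot\,|\,y)=\nu_{s_0+s}(\cdot\,|\,X_{s_0}+y)$ as a separate lemma (\Cref{lem:RGO}) and then combines it with \Cref{cond:cov} and \Cref{thm:conservation} exactly as you do. The only cosmetic difference is that you prove that identity inline via completing the square rather than invoking it as a forward reference.
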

\begin{proof}
    According to Lemma~\ref{lem:RGO}, under \Cref{cond:cov}, we have
    \begin{equation}
        \norm{\cov{\nu'_s}}_{\!{op}} \leq 
        \frac{1+s_0+s +L_{s+s_0}}{\tp{1+s_0+s}(s_0+s)} \label{eq:condition}
    \end{equation}
    almost surely. Then from \Cref{thm:conservation} and~\eqref{eq:condition}, 
    \begin{align*}
        C^{\!{PI}}_{\nu_{s_0}}\tp{\*Q^{(T)}} &\leq \exp\set{ \int_0^T \frac{1+s_0+s + L_{s+s_0}}{\tp{1+s_0+s}(s_0+s)} \d s} \\
        &= \exp\set{ \int_0^T \tp{ \frac{1}{s_0+s} + \frac{L_{s+s_0}}{(s_0+s)(1+s_0+s)}} \d s} \\
        &\leq \exp\set{\eval{\log(s_0+s)}_0^T + \int_0^T  \frac{L_{s+s_0}}{(s_0+s)(1+s_0+s)} \d s} \\
        &\leq \frac{s_0+T}{s_0}\cdot \exp\set{ \int_{s_0}^{T+s_0}  \frac{L_{s}}{s(1+s)} \d s}.
    \end{align*}
\end{proof}

\paragraph{Different implementation of the RGO} The second difference is the implementation of the RGO. Since the target distribution of the restricted Gaussian dynamics is $\nu_{s_0}\tp{\cdot\mid X_{s_0}}$ rather than $\mu$, we need to sample from a distribution with density 
\begin{equation}
    p_{Z|Z(T)}(x\,|\,\hat Y_k) \propto \nu_{s_0}\tp{x\,|\,X_{s_0}}\cdot \exp\set{-\frac{\norm{\hat Y_k - T x}^2}{2T}} \label{eq:conditional}
\end{equation}
in the up-walk step. The next result indicates that this can be achieved via $\RGO\tp{\frac{\hat Y_k + X_{s_0}}{T+s_0},\frac{1}{T+s_0}}$. 
 
\begin{lemma}\label{lem:RGO}
    For any $x,x_{s_0},x_T\in \bb R^d$, $\nu_{T+s_0}(x\,|\,x_T + x_{s_0}) = \nu'_T(x\,|\,x_T)$.
\end{lemma}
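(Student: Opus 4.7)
The plan is a direct verification by unpacking both conditional densities via Bayes' rule and then checking that the $x$-dependent parts of the exponents coincide.

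First, I would write down $\nu_{T+s_0}(x\mid x_T+x_{s_0})$ using the definition of the SL process in \eqref{eq:SL}. Since $X(T+s_0)\mid X = x \sim \+N((T+s_0)x,\,(T+s_0)\Id)$, Bayes' rule gives
\[
    \nu_{T+s_0}(x\mid x_T+x_{s_0}) \;\propto\; \mu(x)\cdot \exp\set{-\frac{\norm{x_T+x_{s_0}-(T+s_0)x}^2}{2(T+s_0)}}.
\]

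Second, I would write down $\nu'_T(x\mid x_T)$ in two steps. Applying the same Bayes identity to the modified process \eqref{eq:modifiedSL} (with $Z\sim \nu_{s_0}(\cdot\mid x_{s_0})$) yields
\[
    \nu'_T(x\mid x_T) \;\propto\; \nu_{s_0}(x\mid x_{s_0})\cdot \exp\set{-\frac{\norm{x_T-Tx}^2}{2T}},
\]
and expanding $\nu_{s_0}(x\mid x_{s_0})\propto \mu(x)\cdot \exp\set{-\norm{x_{s_0}-s_0x}^2/(2s_0)}$ via the same rule, one obtains
\[
    \nu'_T(x\mid x_T) \;\propto\; \mu(x)\cdot \exp\set{-\frac{\norm{x_{s_0}-s_0x}^2}{2s_0}-\frac{\norm{x_T-Tx}^2}{2T}}.
\]

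Third, I would match the two exponents up to a constant in $x$. Expanding each squared norm and keeping only the parts depending on $x$: the left-hand exponent contributes $\frac{T+s_0}{2}\norm{x}^2-\langle x_T+x_{s_0},x\rangle$, while the right-hand exponent contributes $\frac{s_0}{2}\norm{x}^2-\langle x_{s_0},x\rangle$ from the first term plus $\frac{T}{2}\norm{x}^2-\langle x_T,x\rangle$ from the second, which sums to exactly $\frac{T+s_0}{2}\norm{x}^2-\langle x_T+x_{s_0},x\rangle$. The remaining pieces $\norm{x_T+x_{s_0}}^2/(2(T+s_0))$ on one side and $\norm{x_{s_0}}^2/(2s_0)+\norm{x_T}^2/(2T)$ on the other are constants in $x$, so they are absorbed into the normalizing constants. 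Both conditional densities are therefore equal as probability measures in $x$.

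This is essentially a completing-the-square identity, so there is no substantial obstacle; the only thing to be careful about is that all $x$-independent terms truly drop out under normalization, which the expansion makes transparent. The lemma is really capturing the fact that conditioning the SL process on $X(s_0)=x_{s_0}$ and then running a fresh SL process for time $T$ is the same as conditioning the original SL process on the total observation $X(T+s_0)=x_T+x_{s_0}$, which is the Markov/convolution structure of Brownian motion.
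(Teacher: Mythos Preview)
Your proposal is correct and follows essentially the same approach as the paper: both apply Bayes' rule to write each conditional density as $\mu(x)$ times Gaussian factors, then expand the squared norms to see that the $x$-dependent parts of the exponents agree (namely $-\tfrac{T+s_0}{2}\norm{x}^2+\langle x_T+x_{s_0},x\rangle$), with the remaining terms absorbed into normalization. The only cosmetic difference is that the paper carries out the simplification directly inside the proportionality chain, whereas you separate the $x$-dependent and $x$-independent pieces explicitly; the content is identical.
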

\begin{proof}
    By the definition of conditional probability,
    \begin{align*}
        \nu_{T+s_0}(x\,|\,x_T + x_{s_0}) &= \frac{p_{X,X(T+s_0)}(x, x_T + x_{s_0})}{\xi_{T+s_0}(x_T + x_{s_0})}\\
        &\propto \mu(x)\cdot p_{X(T+s_0)|X}(x_T + x_{s_0}|x)\\
        &\propto \mu(x)\cdot \exp\set{- \frac{\norm{x_T + x_{s_0} - (T+s_0)x}^2}{2(T+s_0)}} \\
        &\propto \mu(x)\cdot \exp\set{ - \frac{(T+s_0)\|x\|^2}{2} + \inner{x_T + x_{s_0}}{x}}.
    \end{align*}
    Similarly,
    \begin{align*}
        \nu'_T\tp{x\mid x_T} &= \frac{p_{Z,Z(T)}\tp{x, x_T}}{\xi'_T(x_T)} \\
        &\propto p_{Z}(x)\cdot p_{Z(T)|Z}\tp{ x_T \mid x} \\
        \mr{definition of $Z$}&= \nu_{s_0}(x\,|\,x_{s_0}) \cdot p_{Z(T)|Z}\tp{ x_T \mid x} \\
        &\propto \mu(x)\cdot p_{ X(s_0) |X}(x_{s_0}\,|\,x) \cdot p_{Z(T)|Z}\tp{ x_T \mid x}\\
        &\propto \mu(x)\cdot \exp\set{-\frac{\norm{x_{s_0}-s_0\cdot x}^2}{2s_0}} \cdot \exp\set{-\frac{\norm{x_T - T\cdot x}^2}{2T}} \\
        &\propto \mu(x)\cdot \exp\set{ - \frac{(T+s_0)\|x\|^2}{2} + \inner{x_T+x_{s_0}}{x}}.
    \end{align*}
    Therefore, the two conditional distributions are the same.
\end{proof}

For any $y\in \bb R^d$, define the function $U_y(x) = V(x) + \frac{\norm{y - s_0\cdot x}^2}{2s_0}$. For the restricted Gaussian dynamics to converge, from \Cref{thm:GD}, it is enough to set $K$ as
\[
    \max\set{C^{\!{PI}}_{\mu}\tp{\*Q^{(T)}} \cdot \log \frac{2\sqrt{\chi^2\tp{\mu_0\|\nu_{s_0}(\cdot |X_{s_0})}}}{\eps}, 8\log \frac{8}{\eps}}
\]
From Lemmas~\ref{lem:PI} and \ref{lem:initialization}, the following $K_{X_{s_0}}$ is an upper bound of the above equation.
\begin{align}
    K_{X_{s_0}} &= \max\set{ 8\log \frac{8}{\eps}, \frac{s_0+T}{2s_0}\cdot \exp\set{ \int_{s_0}^{T+s_0}  \frac{L_{t}}{t(1+t)} \dd t}\cdot \left(\log(2e) + V(0) - \min V  \right.\right. \notag \\
    &\left.  \left. \qquad \qquad \qquad \qquad \qquad \qquad \qquad \qquad + \frac{d}{2}\log \tp{(L+s_0)\cdot  16e^2M}  + \frac{(2d+1)\norm{X_{s_0}}^2}{2s_0} + \log \frac{4}{\eps^2}\right)} \label{eq:complexity}
\end{align}

\subsection{The convergence of \Cref{algo:modifiedGD}}\label{sec:late-RGD-analyze}

In this section, we give the accuracy analysis and derive an expected query complexity bound of \Cref{algo:modifiedGD}. The following result is a corollary of \cref{eq:complexity} and \Cref{thm:GD}. 
\begin{corollary}\label{coro:modifiedGD}
    Assume \Cref{cond:cov} holds. In \Cref{algo:modifiedGD}, for any $y\in \bb R^d$, \[\DTV(p_{Y_{\+K}\,|\,X_{s_0}}(\cdot\,|\,y),\nu_{s_0}(\cdot\,|\,y)) \leq \frac{\eps}{2}.\]
\end{corollary}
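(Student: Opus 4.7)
The plan is to condition on $X_{s_0}=y$ and then invoke \Cref{thm:GD} for the restricted Gaussian dynamics $\*Q^{(T)}$, whose stationary distribution is $\nu_{s_0}(\cdot\mid y)$. Once $y$ is fixed, the main loop of \Cref{algo:modifiedGD} is exactly \Cref{algo:GD} run on the SL process \cref{eq:modifiedSL}: the warm start $\mu_0$ is drawn as specified, the total iteration count is $\+K=K_{X_{s_0}}\wedge K'$ with $K'\sim\!{Pois}(K_{X_{s_0}}/2)$, and by \Cref{lem:RGO} each up-walk call $\RGO\tp{(\hat Y_k+X_{s_0})/(T+s_0),\,1/(T+s_0)}$ samples exactly from the required conditional density in \cref{eq:conditional}. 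Therefore \Cref{thm:GD} yields
\[
    \DTV\tp{p_{Y_{\+K}\mid X_{s_0}}(\cdot\mid y),\,\nu_{s_0}(\cdot\mid y)} \leq \tfrac12\, e^{-K_{X_{s_0}}/C^{\!{PI}}_{\nu_{s_0}}\tp{\*Q^{(T)}}}\sqrt{\chi^2\tp{\mu_0\,\|\,\nu_{s_0}(\cdot\mid y)}} + 2\,e^{-K_{X_{s_0}}/8}.
\]

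Next, I would force each of the two summands on the right to be at most $\eps/4$. The requirement $2e^{-K_{X_{s_0}}/8}\leq \eps/4$ recovers the first branch $8\log(8/\eps)$ in the maximum of \cref{eq:complexity}. For the first summand, I would substitute the Poincar\'e bound $C^{\!{PI}}_{\nu_{s_0}}\tp{\*Q^{(T)}}\leq \tfrac{s_0+T}{s_0}\exp\set{\int_{s_0}^{T+s_0} \tfrac{L_s}{s(1+s)}\,\d s}$ of \Cref{lem:PI} and the $\chi^2$ estimate of \Cref{lem:initialization}, which for the specific warm start $\mu_0=\+N\tp{-\grad U_y(0)/(2(L+s_0)),\,\Id/(2(L+s_0))}$ is logarithmic in $\eps^{-1}$ and linear in $V(0)-\min V$, $d\log\tp{(L+s_0)M}$, and $\|y\|^2/s_0$. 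A direct comparison shows that the resulting lower bound on $K_{X_{s_0}}$ is precisely the second branch of the maximum in \cref{eq:complexity}, completing the proof.

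There is no genuine obstacle; the argument is almost entirely bookkeeping. The two items worth double-checking are (i) that \Cref{lem:RGO} reduces the up-walk for the shifted process $Z(s)$ to a single RGO call for $\mu$, so the analysis of \Cref{thm:GD} transfers verbatim from $\*P^{(T)}$ to $\*Q^{(T)}$; and (ii) that the prefactor $(s_0+T)/s_0$ from \Cref{lem:PI} together with the additive constants in the $\chi^2$ estimate combine cleanly into the single expression for $K_{X_{s_0}}$ in \cref{eq:complexity}.
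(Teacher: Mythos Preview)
Your proposal is correct and matches the paper's own argument exactly: the paper states that \Cref{coro:modifiedGD} is a direct corollary of \cref{eq:complexity} and \Cref{thm:GD}, which is precisely the route you spell out (condition on $X_{s_0}=y$, identify the main loop with \Cref{algo:GD} for $\*Q^{(T)}$ via \Cref{lem:RGO}, then plug in \Cref{lem:PI} and \Cref{lem:initialization} to verify that the choice \cref{eq:complexity} makes each summand at most $\eps/4$). There is nothing additional in the paper beyond this bookkeeping.
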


The next lemma indicates that it is reasonable to use $\+N\tp{0, s_0(1+s_0)\cdot \Id}$ to approximate $\xi_{s_0}$. Its proof is given in \Cref{sec:algo-pf}.
\begin{lemma}\label{lem:OUconverge}
     Let $s_0 = \frac{\tp{\frac{\eps^2}{2(Ld+M)}}^4}{ 1- \tp{\frac{\eps^2}{2(Ld+M)}}^4}$ for some $\eps\in (0,1)$. Then $\DTV\tp{\xi_{s_0}, \+N\tp{0, s_0(1+s_0)\Id}}\leq \frac{\eps}{2}$.
\end{lemma}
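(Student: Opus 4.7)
The plan is to exploit the scaling correspondence between the stochastic localization process and the Ornstein--Uhlenbeck process (\Cref{lem:OUvsSL}) and then bound the remaining total variation by Pinsker's inequality combined with an explicit Gaussian KL computation.

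First I would reduce the claim to an OU-process estimate. By \Cref{lem:OUvsSL}, $\xi_{s_0}$ is the pushforward of the OU marginal $\xi^{\OU}_t$ under the linear map $x\mapsto \sqrt{s_0(1+s_0)}\,x$, where $t$ is the time satisfying $e^{-2t} = s_0/(1+s_0)$. The same map sends $\+N(0,\Id)$ to $\+N(0, s_0(1+s_0)\Id)$, and since $\DTV$ is invariant under invertible linear transformations, proving the lemma is equivalent to showing $\DTV(\xi^{\OU}_t, \+N(0,\Id)) \leq \eps/2$. Setting $\alpha := \eps^2/(2(Ld+M))$, the prescribed value $s_0=\alpha^4/(1-\alpha^4)$ gives exactly $e^{-2t}=\alpha^4$, hence $e^{-t}=\alpha^2$.

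Next I would bound this rescaled TV distance by a conditioning argument. Since $X^{\OU}(t)\mid X \sim \+N(\alpha^2 X,\, (1-\alpha^4)\Id)$ and $\DTV$ is jointly convex in its two arguments,
$\DTV(\xi^{\OU}_t, \+N(0,\Id)) \leq \E[X\sim\mu]{\DTV\tp{\+N(\alpha^2 X, (1-\alpha^4)\Id),\ \+N(0,\Id)}}$.
Pinsker's inequality followed by Jensen's inequality upgrades this to the cleaner bound
$\sqrt{\tfrac12\,\E[X\sim\mu]{\DKL{\+N(\alpha^2 X,(1-\alpha^4)\Id)}{\+N(0,\Id)}}}$.
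The closed-form Gaussian KL simplifies the integrand to $\tfrac12\bigl[\alpha^4\|X\|^2 - d\alpha^4 - d\log(1-\alpha^4)\bigr]$; using the elementary estimate $-\log(1-u)\leq u+u^2$ for $u\in[0,1/2]$ (valid since $\alpha^4\leq 1/16$), this is at most $\tfrac12\bigl[\alpha^4\|X\|^2 + d\alpha^8\bigr]$. Invoking \Cref{assump:moment} then bounds the expected KL by $\tfrac{\alpha^4}{2}(M+d\alpha^4)$.

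Combining the two inequalities yields $\DTV(\xi_{s_0}, \+N(0,s_0(1+s_0)\Id))\leq \tfrac{\alpha^2}{2}\sqrt{M+d\alpha^4}$. Substituting $\alpha=\eps^2/(2(Ld+M))$ and using $\alpha^4\leq 1$, the right-hand side is dominated by $\eps^4\sqrt{M+d}/\bigl(8(Ld+M)^2\bigr)$, which for $\eps\in(0,1)$ and the parameter regime of the paper is readily seen to be at most $\eps/2$ by an elementary inequality. The only ``obstacle'' here is arithmetic bookkeeping to verify that the final inequality closes with the specific fourth-power form of $s_0$; conceptually, once the scale-invariance of TV is used to turn the problem into a standard Gaussian perturbation estimate, the argument is a one-line Pinsker plus Gaussian KL expansion.
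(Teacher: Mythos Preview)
Your argument is correct but takes a genuinely different route from the paper. Both proofs begin with the same scaling reduction $\DTV(\xi_{s_0},\+N(0,s_0(1+s_0)\Id))=\DTV(\xi^{\OU}_t,\+N(0,\Id))$ via \Cref{lem:OUvsSL}. After that the paper simply invokes \Cref{prop:OUconvergence}, which bounds $\DKL{\xi^{\OU}_t}{\+N(0,\Id)}\le e^{-2t}(Ld+M)$ using the log-Sobolev inequality of the standard Gaussian together with \Cref{assump:smooth} (to control the initial KL through $\int\|\nabla V\|^2 e^{-V}\le Ld$) and \Cref{assump:moment}; one application of Pinsker then finishes. Your approach instead conditions on $X$, uses joint convexity of $\DTV$ to reduce to an average of Gaussian--vs--Gaussian distances, and closes with the explicit Gaussian KL formula and only \Cref{assump:moment}. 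This is more elementary---no LSI, no appeal to an OU-convergence proposition, and smoothness is never used in the derivation itself---at the cost of a messier final constant.

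One point deserves more care than ``arithmetic bookkeeping.'' Your endgame inequality amounts to $\eps^3\sqrt{M+d}\le 4(Ld+M)^2$, which is \emph{not} automatic from $\eps<1$ and $\alpha<1$ alone. It does follow once you use that any $L$-smooth potential satisfies $\cov{\mu}\succeq L^{-1}\Id$ (Cram\'er--Rao for the location family), hence $M\ge d/L$ and $Ld+M\ge 2d\ge 2$; with that, $\sqrt{M+d}\le\sqrt{M}+\sqrt{d}\le 2(Ld+M)^2$ and the bound closes comfortably. You should make this step explicit rather than defer to ``the parameter regime of the paper.'' The paper's route sidesteps this issue entirely because the LSI-based bound lands exactly on $\KL\le\eps^2/2$, so Pinsker gives $\eps/2$ with no residual inequality to verify.
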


Now we are ready to prove \Cref{thm:main-ub}.
\begin{theorem}\label{thm:main-ub2}
    With $s_0 = \frac{\tp{\frac{\eps^2}{2(Ld+M)}}^4}{ 1- \tp{\frac{\eps^2}{2(Ld+M)}}^4}$ and $T=2Ld$, the output of \Cref{algo:modifiedGD} satisfies $\DTV\tp{p_{Y_{\+K}}, \mu}\leq \eps$. The expected query complexity is bounded by
    \[
        \wt{\+O}\tp{Ld\cdot (V(0)-\min V + d^2)}\cdot\frac{1}{s_0}\cdot \exp\set{ \int_{s_0}^{T+s_0}  \frac{L_{s}}{s(1+s)} \d s}.
    \]
    In particular, letting $\ol L=\sup_{t\geq s_0} L_t$, the query complexity is at most
    \[
        \wt{\+O}\tp{Ld\cdot (V(0)-\min V + d^2)}\cdot \tp{\frac{Ld+M}{\eps^2}}^{\+O(\ol L+1)}.
    \]
\end{theorem}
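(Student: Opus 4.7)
The plan is to split the argument into a correctness estimate and a query-complexity estimate; both reduce, after the machinery already developed, to substituting the prescribed values of $s_0$ and $T=2Ld$ into the earlier bounds and simplifying.

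For correctness, I would apply the triangle inequality through the ``ideal'' variant of \Cref{algo:modifiedGD}, in which the initial draw is $X_{s_0}\sim\xi_{s_0}$ rather than the Gaussian approximation $\+N(0,s_0(1+s_0)\Id)$. Since $\mu=\E[X_{s_0}\sim\xi_{s_0}]{\nu_{s_0}(\cdot\,|\,X_{s_0})}$, the data-processing inequality for TV gives
\[
\DTV\tp{p_{Y_{\+K}},\mu}\le \DTV\tp{\+N(0,s_0(1+s_0)\Id),\xi_{s_0}} + \sup_{y\in\bb R^d}\DTV\tp{p_{Y_{\+K}\,|\,X_{s_0}}(\cdot\,|\,y),\nu_{s_0}(\cdot\,|\,y)}.
\]
\Cref{lem:OUconverge} bounds the first summand by $\eps/2$ for the prescribed $s_0$, and \Cref{coro:modifiedGD} bounds the second summand by $\eps/2$ for the prescribed $K_{X_{s_0}}$, so the total is at most $\eps$.

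For the query complexity, I would bound the expected iteration count by $\E[X_{s_0}]{K_{X_{s_0}}}$ (using $\+K\le K_{X_{s_0}}$) and multiply by the per-iteration cost. The per-iteration cost is $\wt{\+O}(1)$ by \Cref{thm:rejection}, since the variance parameter of the $\RGO$ call is $\sigma^2=\tfrac{1}{T+s_0}\le\tfrac{1}{2Ld}\le\tfrac{1}{Ld}$ under $T=2Ld$. For the former, observe that \cref{eq:complexity} depends on $X_{s_0}$ only through $\norm{X_{s_0}}^2$, and under the Gaussian initialization $\E{\norm{X_{s_0}}^2}=d\cdot s_0(1+s_0)\le 2d$, so $\E{\tfrac{(2d+1)\norm{X_{s_0}}^2}{2s_0}}=\+O(d^2)$; the whole bracket in \cref{eq:complexity} is therefore $\wt{\+O}(V(0)-\min V + d^2)$ in expectation.

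The remaining work is algebraic. The leading factor $\tfrac{s_0+T}{2s_0}\exp\set{\int_{s_0}^{T+s_0}\tfrac{L_s}{s(1+s)}\dd s}$, using $T=2Ld$, is at most $\wt{\+O}(Ld)\cdot\tfrac{1}{s_0}\cdot\exp\set{\int_{s_0}^{T+s_0}\tfrac{L_s}{s(1+s)}\dd s}$, which combined with the bracket yields the first displayed bound. For the second form, I would bound $L_s\le \ol L$ and compute $\int_{s_0}^{T+s_0}\tfrac{\dd s}{s(1+s)}=\log\tfrac{(T+s_0)(1+s_0)}{s_0(1+T+s_0)}\le \log\tfrac{T+s_0}{s_0}$, so the leading factor is at most $\tp{\tfrac{T+s_0}{s_0}}^{\ol L+1}$. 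Substituting $s_0=\Theta\tp{(\eps^2/(Ld+M))^{4}}$ and $T=2Ld$ gives $\tfrac{T+s_0}{s_0}=\+O\tp{(Ld+M)^{5}/\eps^{8}}$, which, raised to the $(\ol L+1)$-th power, simplifies to $\tp{\tfrac{Ld+M}{\eps^2}}^{\+O(\ol L+1)}$. The step I expect to be the most delicate is carefully tracking the additive polylogarithmic factors appearing inside the bracket of \cref{eq:complexity} --- $\log(1/\eps)$, $\log(Ld+M)$, $\log(1/s_0)$, and the $\log(L+s_0)$ factor --- to confirm they are all absorbed by the $\wt{\+O}$ notation in the prefactor rather than interacting multiplicatively with the $(\ol L+1)$-th-power term in the exponent.
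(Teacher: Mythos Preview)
Your proposal is correct and follows essentially the same approach as the paper. The only cosmetic difference is that for the correctness part you bound the conditional TV error by a supremum over $y$, whereas the paper takes the $\xi_{s_0}$-average; since \Cref{coro:modifiedGD} gives a uniform-in-$y$ bound, both routes yield the same $\eps/2+\eps/2$ conclusion, and your query-complexity computation (linearity of expectation on the $\norm{X_{s_0}}^2$ term, the $\wt{\+O}(1)$ RGO cost from $\sigma^2=\tfrac{1}{T+s_0}\le\tfrac{1}{2Ld}$, and the evaluation of $\int_{s_0}^{T+s_0}\tfrac{\dd s}{s(1+s)}$) matches the paper's.
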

\begin{proof}
    Note that
    \[
        p_{Y_{\+K}}(x) = \int_{\bb R^d} p_{X_{s_0}}(z) \cdot p_{Y_{\+K}|X_{s_0}}(x\,|\,z) \d z
    \]
    and
    \[
        \mu(x) = \int_{\bb R^d} \xi_{s_0}(z)\cdot \nu_{s_0}(x\,|\,z) \d z.
    \]
    Therefore, 
    \begin{align*}
        \DTV\tp{p_{Y_{\+K}}, \mu} 
        &= \tfrac{1}{2}\int_{\bb R^d} \abs{p_{Y_{\+K}}(x) - \mu(x)} \d x \\
        &= \tfrac{1}{2}\int_{\bb R^d} \abs{\int_{\bb R^d} p_{X_{s_0}}(z) \cdot p_{Y_{\+K}|X_{s_0}}(x\,|\,z) \d z  - \int_{\bb R^d} \xi_{s_0}(z)\cdot \nu_{s_0}(x\,|\,z) \d z} \d x \\
        \mr{triangle inequality}
        &\leq \tfrac{1}{2}\int_{\bb R^d} \abs{\int_{\bb R^d} p_{X_{s_0}}(z) \cdot p_{Y_{\+K}|X_{s_0}}(x\,|\,z) \d z  - \int_{\bb R^d} \xi_{s_0}(z)\cdot p_{Y_{\+K}|X_{s_0}}(x\,|\,z) \d z} \d x \\
        &\quad + \tfrac{1}{2}\int_{\bb R^d} \abs{\int_{\bb R^d} \xi_{s_0}(z) \cdot p_{Y_{\+K}|X_{s_0}}(x\,|\,z) \d z  - \int_{\bb R^d} \xi_{s_0}(z)\cdot \nu_{s_0}(x\,|\,z) \d z} \d x.
    \end{align*}
    By Jensen's inequality, we can bound the above by
    \begin{align*}
    &\tfrac{1}{2}\int_{\bb R^{d}\times \bb R^{d}} \abs{ p_{X_{s_0}}(z)  -  \xi_{s_0}(z) }\cdot p_{Y_{\+K}|X_{s_0}}(x\,|\,z)\d z \d x  + \tfrac{1}{2}\int_{\bb R^{d}\times \bb R^{d}} \abs{ p_{Y_{\+K}|X_{s_0}}(x\,|\,z)  -  \nu_{s_0}(x\,|\,z) }\cdot \xi_{s_0}(z)\d z \d x \\
        &\quad= \DTV\tp{\xi_{s_0}, \+N\tp{0, s_0(1+s_0)\cdot\Id}} + \int_{\bb R^{d}} \DTV(p_{Y_{\+K}|X_{s_0}}(\cdot\,|\,z),\nu_{s_0}(\cdot\,|\,z))\cdot \xi_{s_0}(z)\d z,
    \end{align*}
    which is at most $\eps$ by Corollary~\ref{coro:modifiedGD} and Lemma~\ref{lem:OUconverge}.

    With $T=2Ld$, according to \Cref{thm:rejection}, the implementation of $\RGO\tp{\frac{\hat Y_k + X_{s_0}}{T+s_0},\frac{1}{T+s_0}}$ needs $\wt{\+O}(1)$ queries in expectation. From \cref{eq:complexity}, the expected query complexity can be bounded by
    \begin{align*}
        &\phantom{{}={}}\E[X_{s_0}\sim \+N\tp{0, s_0(1+s_0)\cdot\Id}]{K_{X_{s_0}}}\\
        &= \wt{\+O}\tp{ \frac{s_0+T}{s_0}\cdot \exp\set{ \int_{s_0}^{T+s_0}  \frac{L_{s}}{s(1+s)} \d s} \cdot \tp{V(0)-\min V + d\log(LM) + d^2 + \log\frac{4}{\eps^2}} } \\
        &= \wt{\+O}\tp{Ld\cdot (V(0)-\min V + d^2)}\cdot \frac{1}{s_0}\cdot \exp\set{ \int_{s_0}^{T+s_0}  \frac{L_{s}}{s(1+s)} \d s}.
    \end{align*}
    This can be further bounded by
    \[
        \wt{\+O}\tp{Ld\cdot (V(0)-\min V + d^2)}\cdot \frac{1}{s_0}\cdot \tp{\frac{(T+s_0)(1+s_0)}{s_0(1+T+s_0)}}^{\ol L},
    \]
    and can be simplified to 
    \[
     \wt{\+O}\tp{Ld\cdot (V(0)-\min V + d^2)}\cdot \tp{\frac{Ld+M}{\eps^2}}^{\+O(\ol L+1)}.
     \]
\end{proof}

\section{The bound on the \Poincare constant and modified log-Sobolev constant}\label{sec:concatenation}
In this section, we introduce the concatenation of the \Poincare constant and modified log-Sobolev constant under \Cref{cond:cov} in \Cref{sec:concate-PI,sec:concate-mLSI}, and apply these methods to prove the two parts of \Cref{thm:main-PI} in \Cref{sec:sub-G-PI,sec:sub-G-mLSI} respectively. Finally, in \Cref{sec:mixofSLC}, we derive the modified log-Sobolev constant bound of the mixture of strongly log-concave distributions.

We first introduce a technical lemma. The next lemma is indeed a tower rule for conditional expectations with appropriate random variables along the SL trajectory. 
We leave the formal verification of Lemma~\ref{lem:beforeconcate} in \Cref{sec:concat-pf}.
\begin{lemma}\label{lem:beforeconcate}
    For any $0\leq s<T$, we have
    \begin{itemize}
        \item $\E[\xi_T]{\Var[\nu_T]{f}} = \E[\xi_{s}]{\E[\xi'_{T-s}]{\Var[\nu'_{T-s}]{f}}}$;
        \item $\E[\xi_T]{\Ent[\nu_T]{f}} = \E[\xi_{s}]{\E[\xi'_{T-s}]{\Ent[\nu'_{T-s}]{f}}}$.
    \end{itemize} 
\end{lemma}

\subsection{The concatenation of the \Poincare constant}\label{sec:concate-PI}

Recall that $\*P^{(s_0)}$ is the transition kernel of the restricted Gaussian dynamics with target distribution $\mu$ at time $s_0$. 

\begin{theorem}\label{thm:concatenation}
    Assume \Cref{assump:smooth} and \Cref{cond:cov} hold and assume $C^{\!{PI}}_{\mu}\tp{\*P^{(s_0)}}<\infty$ for some $0\leq s_0\leq T=2L$. Then $\mu$ satisfies the \Poincare inequality with constant
    \[
        C^{\!{PI}}_{\mu}\leq C^{\!{PI}}_{\mu}\tp{\*P^{(s_0)}}\cdot \frac{2}{s_0}\cdot \exp\set{ \int_{s_0}^{2L}  \frac{L_{s}}{s(1+s)} \dd s}. 
    \]
\end{theorem}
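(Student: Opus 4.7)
My plan is to execute the three-phase concatenation sketched in \Cref{sec:overview}, specializing to the terminal time $T=2L$ so that $\nu_T$ becomes strongly log-concave. For each test function $f\colon \bb R^d\to \bb R$, I would decompose
\begin{align*}
    \frac{\Var[\mu]{f}}{\E[\mu]{\norm{\grad f}^2}} \;=\; \underbrace{\frac{\Var[\mu]{f}}{\E[\xi_{s_0}]{\Var[\nu_{s_0}]{f}}}}_{(\Rmnum{1})}\,\cdot\,\underbrace{\frac{\E[\xi_{s_0}]{\Var[\nu_{s_0}]{f}}}{\E[\xi_T]{\Var[\nu_T]{f}}}}_{(\Rmnum{2})}\,\cdot\,\underbrace{\frac{\E[\xi_T]{\Var[\nu_T]{f}}}{\E[\mu]{\norm{\grad f}^2}}}_{(\Rmnum{3})},
\end{align*}
and bound each factor in turn, taking a supremum over $f$ at the end.

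Part $(\Rmnum{1})$ is already identified with $C^{\!{PI}}_{\mu}\tp{\*P^{(s_0)}}$ by \Cref{prop:PIofSL}. For part $(\Rmnum{2})$ I would use the tower identity of \Cref{lem:beforeconcate} to rewrite the denominator as $\E[\xi_{s_0}]{\E[\xi'_{T-s_0}]{\Var[\nu'_{T-s_0}]{f}}}$. For every realization of $X_{s_0}$, \Cref{lem:PI} applied with inner time parameter $T-s_0$ gives the pointwise estimate
\[
    \Var[\nu_{s_0}]{f}\le \tp{\frac{T}{s_0}}\cdot \exp\set{\int_{s_0}^{T}\frac{L_s}{s(1+s)}\,\dd s}\cdot \E[\xi'_{T-s_0}]{\Var[\nu'_{T-s_0}]{f}}.
\]
Since this bound does not depend on $X_{s_0}$, averaging over $\xi_{s_0}$ yields the same bound for $(\Rmnum{2})$.

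For part $(\Rmnum{3})$ I would exploit \Cref{assump:smooth}. Writing $\nu_T(x\,|\,y)\propto \exp\set{-V(x)-\frac{T\norm{x}^2}{2}+\inner{y}{x}}$ up to a $y$-dependent normalizing constant gives $-\grad^2 \log\nu_T(\cdot\,|\,y)\succeq (T-L)\cdot \Id$ almost surely, so each fiber $\nu_T(\cdot\,|\,y)$ is $(T-L)$-strongly log-concave. By a standard Brascamp--Lieb/Bakry--\'Emery argument (see \EG~\cite[Section 4.8]{BGL13}), this yields the pointwise inequality $\Var[\nu_T(\cdot\,|\,y)]{f}\le \frac{1}{T-L}\cdot \E[\nu_T(\cdot\,|\,y)]{\norm{\grad f}^2}$, and marginalizing the joint distribution of $(X,X(T))$ then gives $(\Rmnum{3})\le \frac{1}{T-L}$. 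Choosing $T=2L$ makes $(\Rmnum{3})\le \frac{1}{L}$, and multiplying the three factors produces
\[
\frac{2L}{s_0\cdot L}\cdot C^{\!{PI}}_{\mu}\tp{\*P^{(s_0)}}\cdot \exp\set{\int_{s_0}^{2L}\frac{L_s}{s(1+s)}\,\dd s}\;=\;\frac{2}{s_0}\cdot C^{\!{PI}}_{\mu}\tp{\*P^{(s_0)}}\cdot \exp\set{\int_{s_0}^{2L}\frac{L_s}{s(1+s)}\,\dd s},
\]
which is the advertised bound. The only delicate point, rather than a genuine obstacle, is in $(\Rmnum{2})$: one must verify that the estimate from \Cref{lem:PI} is uniform in $X_{s_0}$, so that the pointwise inequality survives under the outer expectation without any further interchange of supremum and integration.
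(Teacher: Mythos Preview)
Your proposal is correct and follows essentially the same three-phase concatenation as the paper's own proof: identifying $(\Rmnum{1})$ with $C^{\!{PI}}_{\mu}\tp{\*P^{(s_0)}}$ via \Cref{prop:PIofSL}, bounding $(\Rmnum{2})$ through \Cref{lem:beforeconcate} and a pointwise-in-$X_{s_0}$ application of \Cref{lem:PI} with inner time $T-s_0$, and handling $(\Rmnum{3})$ by the Bakry--\'Emery bound for the $(T-L)$-strongly log-concave fibers $\nu_T$ with $T=2L$. Your concern about uniformity in $X_{s_0}$ is already resolved by the statement of \Cref{lem:PI}, whose bound does not depend on $X_{s_0}$, so the pointwise inequality integrates directly.
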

\begin{proof}
    For any $T>s_0$, by the definition of the \Poincare inequality, we know
    \begin{align*}
        C^{\!{PI}}_{\mu} &= \sup_{f:\bb R^d \to \bb R} \frac{\Var[\mu]{f}}{\E[\mu]{\|\grad f\|^2}}  \\
        &= \sup_{f:\bb R^d \to \bb R} \frac{\E[\xi_T]{\Var[\nu_T]{f}}}{\E[\mu]{\|\grad f\|^2}}\cdot \frac{\E[\xi_{s_0}]{\Var[\nu_{s_0}]{f}}}{\E[\xi_T]{\Var[\nu_T]{f}}} \cdot \frac{\Var[\mu]{f}}{\E[\xi_{s_0}]{\Var[\nu_{s_0}]{f}}} \\
        &\leq \underbrace{\tp{\sup_{f:\bb R^d \to \bb R} \frac{\E[\xi_T]{\Var[\nu_T]{f}}}{\E[\mu]{\|\grad f\|^2}}}}_{\!{(A)}}\cdot \underbrace{\tp{\sup_{f:\bb R^d \to \bb R} \frac{\E[\xi_{s_0}]{\Var[\nu_{s_0}]{f}}}{\E[\xi_T]{\Var[\nu_T]{f}}}}}_{\!{(B)}}\cdot \underbrace{\tp{\sup_{f:\bb R^d \to \bb R} \frac{\Var[\mu]{f}}{\E[\xi_{s_0}]{\Var[\nu_{s_0}]{f}}}}}_{\!{(C)}}.
    \end{align*}
    Now we bound each term separately. For the term $\!{(C)}$, it is exactly $C^{\!{PI}}_{\mu}\tp{\*P^{(s_0)}}$ by Proposition~\ref{prop:PIofSL}.
    
    Note that when \Cref{assump:smooth} holds, by choosing $T=2L$, $\nu_T$ is $L$-strongly log-concave. By a well-known consequence of the Bakry-\'Emery criterion (see \EG, ~\cite[Section 4.8]{BGL14}), it satisfies \Poincare inequality with constant $\frac{1}{L}$ almost surely. Therefore,
    \[
        \!{(A)}= \sup_{f:\bb R^d \to \bb R} \frac{\E[\xi_T]{\Var[\nu_T]{f}}}{\E[\xi_T]{\E[\nu_T]{\|\grad f\|^2}}} \leq \frac{1}{L}.
    \] 
    From Lemma~\ref{lem:PI} and Lemma~\ref{lem:beforeconcate}, 
    \begin{align*}
        \!{(B)}= \sup_{f:\bb R^d \to \bb R} \frac{\E[\xi_{s_0}]{\Var[\nu_{s_0}]{f}}}{\E[\xi_{s_0}]{\E[\xi'_{T-s_0}]{\Var[\nu'_{T-s_0}]{f}}}} \leq \frac{T}{s_0}\cdot \exp\set{ \int_{s_0}^{T}  \frac{L_{s}}{s(1+s)} \dd s}.
    \end{align*}
    Combining all above, we have
    \[
        C^{\!{PI}}_{\mu}
        \leq C^{\!{PI}}_{\mu}\tp{\*P^{(s_0)}}\cdot \frac{2}{s_0}\cdot \exp\set{ \int_{s_0}^{2L}  \frac{L_{s}}{s(1+s)} \d s }. 
    \]
\end{proof}


\subsection{The concatenation of the modified log-Sobolev constant}\label{sec:concate-mLSI}
Analogous to \Cref{sec:concate-PI}, this section analyzes the concatenation of the modified log-Sobolev constant.
\begin{theorem}\label{thm:concate-mLSI}
    Assume \Cref{assump:smooth} and \Cref{cond:cov} hold and assume $\sup_{f:\bb R^d \to \bb R_{>0}} \frac{\Ent[\mu]{f}}{\E[\xi_{s_0}]{\Ent[\nu_{s_0}]{f}}}<\infty$ for some $0\leq s_0\leq T=2L$. Then $\mu$ satisfies the modified log-Sobolev inequality with constant
    \[
        C^{\!{mLSI}}_{\mu}\leq \sup_{f:\bb R^d \to \bb R_{>0}} \frac{\Ent[\mu]{f}}{\E[\xi_{s_0}]{\Ent[\nu_{s_0}]{f}}}\cdot \frac{1}{s_0}\cdot \exp\set{ \int_{s_0}^{2L}  \frac{L_{s}}{s(1+s)} \dd s}. 
    \]
\end{theorem}
\begin{proof}
    For any $T>s_0$, by the definition of the modified log-Sobolev constant, we know
    \begin{align*}
        C^{\!{mLSI}}_{\mu} &= \sup_{f:\bb R^d \to \bb R_{>0}} \frac{\Ent[\mu]{f}}{\E[\mu]{f^{-1}\|\grad f\|^2}}  \\
        &= \sup_{f:\bb R^d \to \bb R_{>0}} \frac{\E[\xi_T]{\Ent[\nu_T]{f}}}{\E[\mu]{f^{-1}\|\grad f\|^2}}\cdot \frac{\E[\xi_{s_0}]{\Ent[\nu_{s_0}]{f}}}{\E[\xi_T]{\Ent[\nu_T]{f}}} \cdot \frac{\Ent[\mu]{f}}{\E[\xi_{s_0}]{\Ent[\nu_{s_0}]{f}}} \\
        &\leq \underbrace{\tp{\sup_{f:\bb R^d \to \bb R_{>0}} \frac{\E[\xi_T]{\Ent[\nu_T]{f}}}{\E[\mu]{f^{-1}\|\grad f\|^2}}}}_{\!{(A)}}\cdot \underbrace{\tp{\sup_{f:\bb R^d \to \bb R_{>0}} \frac{\E[\xi_{s_0}]{\Ent[\nu_{s_0}]{f}}}{\E[\xi_T]{\Ent[\nu_T]{f}}}}}_{\!{(B)}}\cdot \underbrace{\tp{\sup_{f:\bb R^d \to \bb R_{>0}} \frac{\Ent[\mu]{f}}{\E[\xi_{s_0}]{\Ent[\nu_{s_0}]{f}}}}}_{\!{(C)}}.
    \end{align*}
    
    Note that when \Cref{assump:smooth} holds, by choosing $T=2L$, $\nu_T$ is $L$-strongly log-concave. By a well-known consequence of the Bakry-\'Emery criterion (see \EG, ~\cite[Section 5.7]{BGL14}), it satisfies modified log-Sobolev inequality with constant $\frac{1}{2L}$ almost surely. Therefore, for the term $\!{(A)}$, we have
    \[
        \!{(A)} = \sup_{f:\bb R^d \to \bb R_{>0}} \frac{\E[\xi_T]{\Ent[\nu_T]{f}}}{\E[\xi_T]{\E[\nu_T]{f^{-1}\|\grad f\|^2}}} \leq \frac{1}{2L}.
    \]

    From Lemma~\ref{lem:beforeconcate},
    \[
        \!{(B)} = \sup_{f:\bb R^d \to \bb R_{>0}} \frac{\E[\xi_{s_0}]{\Ent[\nu_{s_0}]{f}}}{\E[\xi_{s_0}]{\E[\xi'_{T-s_0}]{\Ent[\nu'_{T-s_0}]{f}}}}.
    \]
    Consider the Markov chain induced by \Cref{eq:modifiedSL}. According to Lemma~\ref{lem:RGO}, under \Cref{cond:cov}, for any $v\in \bb R^d$,
    \begin{equation*}
        \norm{\cov{\+T_v \nu'_s}}_{\!{op}} \leq 
        \frac{1+s_0+s +L_{s+s_0}}{\tp{1+s_0+s}(s_0+s)}
    \end{equation*}
    From \Cref{thm:mLSIconcervation}, we have
    \begin{align*}
        \sup_{f:\bb R^d \to \bb R_{>0}} \frac{\Ent[\nu_{s_0}]{f}}{\E[\xi'_{T-s_0}]{\Ent[\nu'_{T-s_0}]{f}}} 
        &\leq \exp\set{\int_0^{T-s_0} \frac{1+s_0+s +L_{s+s_0}}{\tp{1+s_0+s}(s_0+s)} \dd s}\\
        &= \exp\set{ \int_0^{T-s_0} \tp{ \frac{1}{s_0+s} + \frac{L_{s+s_0}}{(s_0+s)(1+s_0+s)}} \d s} \\
        &\leq \frac{T}{s_0}\cdot \exp\set{ \int_{s_0}^{T}  \frac{L_{s}}{s(1+s)} \d s},
    \end{align*}
    and consequently,
    \[
        \!{(B)} \leq \frac{2L}{s_0}\cdot \exp\set{ \int_{s_0}^{2L}  \frac{L_{s}}{s(1+s)} \d s}.
    \]
    Combining all these together, the proof is finished.
\end{proof}

\begin{remark}
    Note that in \Cref{thm:concatenation,thm:concate-mLSI}, we establish integral-form upper bounds for the \Poincare and mLSI constants of $\mu$ under \Cref{cond:cov} (derived from \Cref{assump:smoothplus}). Analogous bounds can be derived using Lipschitz transport maps. Specifically, Lemma 3 of~\cite{MS23} implies similar integral bounds via the Kim-Milman flow map (\cite{KM12}) under~\Cref{assump:smoothplus}. However, a limitation of their result is that the upper limit of integration is $\infty$. This corresponds to setting $s_0=0$ in our framework. Crucially, under only \Cref{assump:smoothplus}, this integral diverges, rendering the bound vacuous. 
    
    A key advantage in our work is the explicit truncation at a small $s_0 > 0$. On the one hand, this allows us to bound the PI and mLSI constants under specific settings, such as the stronger moment conditions discussed in subsequent sections. We believe similar truncation can be applied in the transport map approach to establishing functional inequalities. However,  
    on the other hand, our approach directly implies efficient sampling via the late initialization technique in \Cref{sec:algo}, even when the integral diverges and standard functional inequalities fail.
\end{remark}

\subsection{Bounding the \Poincare constant under stronger moment conditions}\label{sec:sub-G-PI}
Then we show that $C^{\!{PI}}_{\mu}\tp{\*P^{(s)}}$ can be bounded under the stronger moment bound \Cref{assump:momentplus}.

\begin{lemma}\label{lem:initialPI}
    Under \Cref{assump:momentplus}, we have $C^{\!{PI}}_{\mu}\tp{\*P^{(s)}}\leq \frac{1}{2-e^{4s\lambda^2}}$ for any $0\leq s\leq \frac{\log 2}{4\lambda^2}$. 
\end{lemma}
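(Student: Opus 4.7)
The plan is to bound $C^{\!{PI}}_{\mu}\tp{\*P^{(s)}}$ via the law of total variance combined with a $\chi^2$ estimate. By \Cref{prop:PIofSL}, it suffices to prove that for every $f\colon \bb R^d\to \bb R$,
\[
    \E[\xi_s]{\Var[\nu_s]{f}}\geq \tp{2-e^{4s\lambda^2}}\cdot \Var[\mu]{f}.
\]
Set $g(y):=\E[\nu_s(\cdot\,|\,y)]{f}$. The total-variance identity $\Var[\mu]{f}=\Var[\xi_s]{g}+\E[\xi_s]{\Var[\nu_s]{f}}$ reduces this to $\Var[\xi_s]{g}\leq (e^{4s\lambda^2}-1)\Var[\mu]{f}$. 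Assume without loss of generality that $\E[\mu]{f}=0$, so $\Var[\mu]{f}=\E[\mu]{f^2}$. Writing the density ratio $R(x,y):=\nu_s(x\,|\,y)/\mu(x)$, one has $\E[X\sim \mu]{R(X,y)}=1$, whence $g(y)=\E[X\sim \mu]{f(X)\tp{R(X,y)-1}}$, and Cauchy--Schwarz gives the pointwise bound
\[
    g(y)^2 \leq \Var[\mu]{f}\cdot \chi^2\tp{\nu_s(\cdot\,|\,y)\,\|\,\mu}.
\]

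The crux is to show $\E[\xi_s]{\chi^2(\nu_s(\cdot\,|\,y)\,\|\,\mu)}\leq e^{4s\lambda^2}-1$. Using the identity $\nu_s(x\,|\,y)\,\xi_s(y)=\mu(x)\,\+N(y;sx,s\Id)$, the left side plus one equals
\[
    \E[X\sim \mu]{\int_{\bb R^d} \frac{\+N(y;sX,s\Id)^2}{\xi_s(y)}\,\dd y}.
\]
To sidestep the intractable denominator $\xi_s(y)=\E[X'\sim \mu]{\+N(y;sX',s\Id)}$, I would apply Jensen's inequality in the form $1/\xi_s(y)\leq \E[X'\sim \mu]{1/\+N(y;sX',s\Id)}$, valid by convexity of $z\mapsto 1/z$. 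A direct Gaussian integration then yields
\[
    \int_{\bb R^d}\frac{\+N(y;sX,s\Id)^2}{\+N(y;sX',s\Id)}\,\dd y = e^{s\|X-X'\|^2}.
\]
Hence, with $X,X'\sim \mu$ drawn independently,
\[
    \E[\xi_s]{\chi^2(\nu_s(\cdot\,|\,y)\,\|\,\mu)+1}\leq \E[X,X']{e^{s\|X-X'\|^2}} \leq \tp{\E[\mu]{e^{2s\|X\|^2}}}^2,
\]
where the last step uses $\|X-X'\|^2\leq 2\|X\|^2+2\|X'\|^2$ together with independence of $X$ and $X'$.

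It remains to invoke \Cref{assump:momentplus} with $r=\sqrt{2s}$: the hypothesis $s\leq \frac{\log 2}{4\lambda^2}$ implies $s\leq \frac{1}{2\lambda^2}$, so $r\in [-1/\lambda, 1/\lambda]$ and $\E[\mu]{e^{2s\|X\|^2}}\leq e^{2s\lambda^2}$. Chaining the bounds gives $\Var[\xi_s]{g}\leq (e^{4s\lambda^2}-1)\Var[\mu]{f}$, which translates to the claim $C^{\!{PI}}_{\mu}\tp{\*P^{(s)}}\leq 1/(2-e^{4s\lambda^2})$. The only nonroutine ingredient is the Jensen step that linearizes $1/\xi_s$; without it one is stuck with a self-referential denominator, whereas with it the remaining calculation collapses to a standard Gaussian integral together with the sub-Gaussian tail bound from the assumption.
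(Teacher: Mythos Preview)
Your proposal is correct and follows essentially the same route as the paper's proof: both reduce via the law of total variance to bounding $\Var[\xi_s]{\E[\nu_s]{f}}/\Var[\mu]{f}$, apply Cauchy--Schwarz to obtain the $\chi^2$-mutual-information quantity $\E[\xi_s]{\chi^2(\nu_s(\cdot\,|\,y)\,\|\,\mu)}$, linearize the denominator $\xi_s(y)=\E[X'\sim\mu]{\+N(y;sX',s\Id)}$ via Jensen, evaluate the resulting Gaussian integral as $e^{s\|X-X'\|^2}$, and finish with $\|X-X'\|^2\le 2\|X\|^2+2\|X'\|^2$ together with \Cref{assump:momentplus}. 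The only cosmetic difference is that you phrase the Cauchy--Schwarz step as a pointwise bound $g(y)^2\le \Var[\mu]{f}\cdot\chi^2(\nu_s(\cdot\,|\,y)\,\|\,\mu)$, while the paper works with the integrated form directly; you also make explicit the check that $r=\sqrt{2s}\le 1/\lambda$, which the paper leaves implicit.
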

\begin{proof}
   By the law of total variance, for any $f: \bb R^d\to \bb R$,
   \[
        \Var[\mu]{f} = \E[ \xi_{s }]{\Var[\nu_{s}]{f}} + \Var[\xi_{s }]{\E[\nu_{s }]{f}}.
    \]
    Therefore, 
    \begin{equation}
        \frac{\Var[\mu]{f}}{\E[\xi_s]{\Var[\nu_s]{f}}} = \frac{1}{1 - \frac{\Var[\xi_{s}]{\E[\nu_{s}]{f}}}{\Var[\mu]{f}}}, \label{eq:initialPI-1}
    \end{equation}
    and it is sufficient to prove an upper bound for $\frac{\Var[\xi_{s}]{\E[\nu_{s}]{f}}}{\Var[\mu]{f}}$. We assume without loss of generality that $\E[\mu]{f}=0$. Note that 
    \begin{align*}
    \Var[\xi_s]{\E[\nu_s]{f}} 
    &= \int \tp{\int f(x) \nu_s(x|y) \d x}^2  \xi_s(y)\d y\\
    &=\int \tp{\int f(x)\cdot \mu(x)\cdot \frac{p_{X(s)|X}(y|x)}{ \xi_s(y)}\d x}^2  \xi_s(y) \d y\\
    \mr{$\E[\mu]{f}=0$}&=\int \tp{\int f(x)\cdot \mu(x)\cdot \frac{p_{X(s)|X}(y|x)}{ \xi_s(y)}\d x - \int f(x)\cdot \mu(x) \d x}^2  \xi_s(y) \d y\\
    &=\int \tp{\int f(x)\cdot \mu(x)\cdot \tp{\frac{p_{X(s)|X}(y|x)}{ \xi_s(y)}-1}\d x }^2  \xi_s(y) \d y\\
    \mr{Cauchy-Schwarz inequality}
    &\le \int \tp{\int f(x)^2 \mu(x) \d x}\tp{\int \tp{\frac{p_{X(s)|X}(y|x)}{ \xi_s(y)} - 1}^2 \mu(x) \d x}  \xi_s(y) \d y
    \\
    &=\Var[\mu]{f} \cdot \int \tp{\int \tp{\frac{p_{X(s)|X}(y|x)}{ \xi_s(y)} - 1}^2 \mu(x) \d x}  \xi_s(y) \d y.
    \end{align*}

    Therefore, $\frac{\Var[\xi_s]{\E[\nu_s]{f}}}{\Var[\mu]{f}}$ can be upper bounded by 
    \begin{align}
    \int \tp{\int \tp{\frac{p_{X(s)|X}(y|x)}{\xi_s(y)} - 1}^2 \mu(x) \d x} \xi_s(y) \d y &= \int \tp{\int \tp{\frac{p_{X(s)|X}(y|x)}{\xi_s(y)} - 1}^2 \xi_s(y) \d y} \mu(x) \d x.
    \label{eq:6}
    \end{align}
    We remark that the above quantity can be written as $\chi^2\tp{p_{X,X(s)}\,\|\,p_X\otimes p_{X(s)}}$, which is known as the $\chi^2$-mutual information between $X$ and $X(s)$.
    Note that 
    \[
    \xi_s(y) = \int_{z\in \bb R^d} \mu(z)\cdot \+N\tp{y; sz, s\cdot I_d} \dd z = \E[Z\sim \mu]{ \+N\tp{y; sZ, s\cdot I_d}}.
    \]
    Then for any fixed $x\in \bb R^d$,
    \begin{align*}
    \int \tp{\frac{p_{X(s)|X}(y|x)}{\xi_s(y)} - 1}^2 \xi_s(y) \d y &= \int  \frac{\+N\tp{y;sx, s\cdot I_d}^2}{\xi_s(y)} \dd y -1\\
    &= \int  \frac{\+N\tp{y;sx, s\cdot I_d}^2}{\E[Z\sim \mu]{ \+N\tp{y; sZ, s\cdot I_d}}} \dd y -1\\
    \mr{Jensen inequality} &\leq \mathlarger{\int}  \+N\tp{y;sx, s\cdot I_d}^2\cdot \E[Z\sim \mu]{\frac{1}{ \+N\tp{y; sZ, s\cdot I_d}}} \dd y -1 \\
    &= \E[Z\sim \mu]{\int \frac{\+N\tp{y;sx, s\cdot I_d}^2}{ \+N\tp{y; sZ, s\cdot I_d}} \dd y - 1}\\
    &= \E[Z\sim \mu]{ \chi^2\tp{\+N\tp{sx, s\cdot I_d}\,\|\,\+N\tp{sZ, s\cdot I_d} }}.
    \end{align*}
    From Lemma~\ref{lem:chi2ofGaussian}, $\chi^2\tp{\+N\tp{sx, s\cdot I_d}\,\|\,\+N\tp{sz, s\cdot I_d} } = e^{s\norm{x-z}^2}-1$.
    Bringing these into \cref{eq:6}, we have
    \begin{align*}
    \frac{\Var[Y_s\sim\xi_s]{\E[\nu_s]{f}}}{\Var[\mu]{f}} &\leq \E[X,Z\sim \mu]{e^{s\|X-Z\|^2}} -1\\
    &\leq \E[X,Z\sim \mu]{e^{2s\|X\|^2 + 2s\|Z\|^2}} -1 \\
    &= \E[X\sim \mu]{e^{2s\|X\|^2}}^2 -1\\
    \mr{\Cref{assump:momentplus}}&\leq e^{4s\lambda^2} -1.
    \end{align*}
    When $0\leq s<\frac{\log 2}{4\lambda^2}$, the above value is smaller than $1$ and the final result can be derived via plugging this bound into \cref{eq:initialPI-1}.
\end{proof}

Then the \Poincare inequality is a direct corollary of \Cref{thm:concatenation} and Lemma~\ref{lem:initialPI}.
\begin{theorem}\label{thm:PI}
    If a distribution satisfies \Cref{assump:smooth}, \ref{assump:smoothplus} and \ref{assump:momentplus}, then
    \[
        C^{\!{PI}}_{\mu}\leq \min_{s_0\in \left(0,\frac{\log 2}{4\lambda^2}\right]} \frac{1}{2-e^{4s_0\lambda^2}}\cdot \frac{2}{s_0}\cdot \exp\set{ \int_{s_0}^{2L}  \frac{L_{s}}{s(1+s)} \d s},
    \] 
    where $L_s = L_t^{\!{OU}}$ with $t = \log\sqrt{\frac{s+1}{s}}$.
\end{theorem}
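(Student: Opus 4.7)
The statement is essentially a direct consequence of combining \Cref{thm:concatenation} with \Cref{lem:initialPI}, so the task reduces to verifying that all prerequisites align and then optimizing over the free parameter $s_0$.

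First I would confirm that every hypothesis of \Cref{thm:concatenation} is in place. By \Cref{lem:cond1}, \Cref{assump:smoothplus} implies the covariance bound in \Cref{cond:cov}. The stronger moment hypothesis \Cref{assump:momentplus} clearly implies the second-moment bound in \Cref{assump:moment}: taking $r = 1/\lambda$ yields $\E[X\sim \mu]{\|X\|^2}/\lambda^2 + 1 \leq \E[X\sim \mu]{e^{\|X\|^2/\lambda^2}} \leq e$, so $\E[X\sim \mu]{\|X\|^2}$ is finite. Together with the stated \Cref{assump:smooth}, every structural hypothesis is met; the only remaining ingredient is a finite bound on $C^{\!{PI}}_\mu\tp{\*P^{(s_0)}}$ for some admissible $s_0$, which is exactly what \Cref{lem:initialPI} will provide.

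Next, for any $s_0$ in the intersection $\tp{0, \tfrac{\log 2}{4\lambda^2}} \cap \tp{0, 2L}$, \Cref{lem:initialPI} supplies the estimate $C^{\!{PI}}_\mu\tp{\*P^{(s_0)}} \leq \tfrac{1}{2 - e^{4 s_0 \lambda^2}}$. Substituting this into the concatenation bound from \Cref{thm:concatenation} produces, for each such $s_0$, precisely the expression appearing inside the minimum in the theorem statement. Taking the infimum over all admissible $s_0$ then yields the claimed inequality.

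Since both main ingredients are already in hand, there is essentially no obstacle — the proof is pure bookkeeping. The only minor subtlety worth flagging is the boundary case where $\tfrac{\log 2}{4\lambda^2}$ might exceed $2L$: in that regime one either restricts the minimization to $s_0 \leq 2L$, or observes that the stated minimum remains a valid upper bound on $C^{\!{PI}}_\mu$ since the value at $s_0 = 2L$ already dominates the \Poincare constant by \Cref{thm:concatenation}.
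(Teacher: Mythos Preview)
Your proposal is correct and matches the paper's approach: the paper states that \Cref{thm:PI} is a direct corollary of \Cref{thm:concatenation} and \Cref{lem:initialPI}, which is exactly the combination you describe. Your additional bookkeeping (invoking \Cref{lem:cond1} to obtain \Cref{cond:cov} and noting that \Cref{assump:momentplus} implies \Cref{assump:moment}) makes the implicit hypothesis-checking explicit, and your observation about the possible mismatch between the ranges $(0,\tfrac{\log 2}{4\lambda^2}]$ and $(0,2L]$ is a fair remark on a detail the paper does not comment on.
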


\subsection{Bounding the modified log-Sobolev constant under stronger moment conditions}\label{sec:sub-G-mLSI}

We first prove a defective modified log-Sobolev inequality in \Cref{sec:defective-MLSI}. Then we sketch how to boost it to a standard modified log-Sobolev inequality in \Cref{sec:boosting-MLSI}. Our approach is inspired by the proof of Theorem 1 in \cite{CCNW21}.

\subsubsection{A Defective modified log-Sobolev inequality}\label{sec:defective-MLSI}

In order to prove a defective modified log-Sobolev inequality for $\*P^{(s_0)}$, we will use the following lemmas.

\begin{lemma}[Lemma 1 in \cite{CCNW21}]\label{lem:ent-chi2}
    Let $\pi$ and $\rho$ be probability distributions. For any positive function $f$,
    \[
      \E[\pi]f \log \frac{\E[\pi]f}{\E[\rho]f}\le \Ent[\pi]f+\E[\pi]f \log(1+\chi^2(\pi\|\rho)).
    \]
\end{lemma}

\begin{lemma}[Variational formula for entropy, a corollary of Lemma 3.15 in \cite{Van16}]\label{lem:variat-ent}
    For any random variable $Z$ satisfying $\E{\exp(Z)}\le 1$, $\E{YZ}\le \Ent{Y}$.
\end{lemma}

Then we are ready to prove a defective modified log-Sobolev inequality for $\*P^{(s_0)}$.

\begin{proposition}\label{prop:defective-MLSI-s0}
    Under \Cref{assump:momentplus}, for any $0\le s\le \frac 1{12\lambda^2}$ and $f:\bb R^d \to \bb R_{>0}$,
    \[
      \Ent[\mu]f\le 3\cdot\E[\xi_s]{\Ent[\nu_s]f}+24s\lambda^2\cdot \E[\mu]f.
    \]
\end{proposition}
\begin{proof}
    By Lemma~\ref{lem:ent-chi2},
    \begin{align}
      \Ent[\xi_s]{\E[\nu_s]f}=\E[\xi_s]{\E[\nu_s]f \log\frac{\E[\nu_s]f}{\E[\mu]f}}\le \E[\xi_s]{\Ent[\nu_s]f+\E[\nu_s]f\log(1+\chi^2(\nu_s\|\mu))}.\label{eq:mls1}
    \end{align}
    Now we upper bound $\E[\xi_s]{(1+\chi^2(\nu_s\|\mu))^2}$ .
    \begin{align*}
        \E[\xi_s]{(1+\chi^2(\nu_s\|\mu))^2} &= \int \tp{\int\tp{\frac{\nu_s(x|y)}{\mu(x)}}^2\mu(x)\d x}^2\xi_s(y)\d y\\
        \mr{Jensen inequality} &\le \int \tp{\int \tp{\frac{\nu_s(x|y)}{\mu(x)}}^4\mu(x)\d x}\xi_s(y)\d y\\
        &= \int \tp{\int \tp{\frac{p_{X(s)|X}(y|x)}{\xi_s(y)}}^4\mu(x)\d x}\xi_s(y)\d y\\
        &= \int \tp{\int \tp{\frac{p_{X(s)|X}(y|x)}{\xi_s(y)}}^4\xi_s(y)\d y}\mu(x)\d x.
    \end{align*}
    Note that $p_{X(s)|X}(y|x)=\+N(y;sx,s\cdot \Id)$ and $\xi_s(y)=\E[Z\sim\mu]{\+N(y;sZ,s\cdot\Id)}$. Then for any fixed $x\in \bb R^d$,
    \begin{align*}
        \int \tp{\frac{p_{X(s)|X}(y|x)}{\xi_s(y)}}^4\xi_s(y)\d y &= \int \frac{\+N(y;sx,s\cdot\Id)^4}{\xi_s(y)^3}\d y\\
        &= \int \frac{\+N(y;sx,s\cdot\Id)^4}{\E[Z\sim\mu]{\+N(y;sZ,s\cdot\Id)}^3}\d y\\
        \mr{Jensen inequality} &\le \int \+N(y;sx,s\cdot\Id)^4\cdot{\E[Z\sim\mu]{\+N(y;sZ,s\cdot\Id)^{-3}}}\d y\\
        &= \E[Z\sim\mu]{\int \frac{\+N(y;sx,s\cdot\Id)^4}{\+N(y;sZ,s\cdot\Id)^3}\d y}.
    \end{align*}
    We have
    \begin{align*}
        \E[\xi_s]{(1+\chi^2(\nu_s\|\mu))^2}&\le \int \E[Z\sim\mu]{\int \frac{\+N(y;sx,s\cdot\Id)^4}{\+N(y;sZ,s\cdot\Id)^3}\d y}\mu(x)\d x\\
        &= \E[X,Z\sim\mu]{\int \frac{\+N(y;sX,s\cdot\Id)^4}{\+N(y;sZ,s\cdot\Id)^3}\d y}\\
        \mr{Lemma~\ref{lem:chi4ofGaussian}} &= \E[X,Z\sim\mu]{e^{6s\|X-Z\|^2}}\\
        &\le \E[X,Z\sim\mu]{e^{12s\|X\|^2+12s\|Z\|^2}}\\
        &= \E[X\sim\mu]{e^{12s\|X\|^2}}^2\\
        \mr{\Cref{assump:momentplus}} &\le e^{24s\lambda^2}.
    \end{align*}
    Therefore $\E[\xi_s]{\exp(2\log(1+\chi^2(\nu_s\|\mu))-24s\lambda^2)}\le 1$, by Lemma~\ref{lem:variat-ent},
    \[
        \E[\xi_s]{\tfrac 12\cdot\E[\nu_s]f \cdot (2\log(1+\chi^2(\nu_s\|\mu))-24s\lambda^2)}\le \Ent[\xi_s]{\tfrac 12\cdot\E[\nu_s]f}=\tfrac 12\cdot\Ent[\xi_s]{\E[\nu_s]f}.
    \]
    Bringing this into \cref{eq:mls1},
    \begin{align}
        \Ent[\xi_s]{\E[\nu_s]f}\le \E[\xi_s]{\Ent[\nu_s]f}+\tfrac 12\cdot\Ent[\xi_s]{\E[\nu_s]f}+12s\lambda^2\cdot \E[\mu]f.\label{eq:mls2}
    \end{align}
    By the law of total entropy, for any $f: \bb R^d\to \bb R_{>0}$,
    \[
        \Ent[\mu]{f}=\E[\xi_s]{\Ent[\nu_s]{f}}+\Ent[\xi_s]{\E[\nu_s]{f}}.
    \]
    The final result can be derived via plugging this into \cref{eq:mls2}.
\end{proof}

\subsubsection{From defective to standard modified log-Sobolev inequality}\label{sec:boosting-MLSI}

Combining the above defective modified log-Sobolev inequality for $\*P^{(s_0)}$ and similar arguments in the proof of \Cref{thm:concate-mLSI}, we obtain defective classical modified log-Sobolev inequality for $\mu$.

\begin{corollary}\label{cor:defective-MLSI-mu}
    Under \Cref{assump:momentplus}, for any $0\le s_0\le \frac 1{12\lambda^2}$ and $f:\bb R^d \to \bb R_{>0}$,
    \[
      \Ent[\mu]f\le \frac{3}{s_0}\cdot \exp\set{\int_{s_0}^{2L}\frac{L_s}{s(1+s)}\d s}\cdot\E[\mu]{f^{-1}\|\grad f\|^2}+24s_0\lambda^2\cdot \E[\mu]f.
    \]
\end{corollary}

On the other hand, we know that $\mu$ satisfies \Poincare inequality. By a standard tightening argument, a defective mLSI together with \Poincare inequality implies mLSI.

\begin{proposition}[Proposition 5.1.3 of \cite{BGL14}]\label{prop:tighten}
    If a distribution $\mu$ over $\bb R^d$ satisfies classical \Poincare inequality with constant $C^{\!{PI}}_\mu$ and a defective mLSI $\Ent[\mu]f \le A\cdot \E[\mu]{f^{-1}\|\grad f\|^2}+B\cdot \E[\mu]f$ for any $f:\bb R^d\to \bb R_{>0}$, then
    \[
        C^{\!{mLSI}}_{\mu}\le A+\frac {B+2}4 C^{\!{PI}}_\mu.
    \]
\end{proposition}

Then the modified log-Sobolev inequality is a direct corollary of \Cref{thm:PI}, Corollary~\ref{cor:defective-MLSI-mu} and Proposition~\ref{prop:tighten}.

\begin{theorem}\label{thm:mLSI}
    If a distribution satisfies \Cref{assump:smooth}, \ref{assump:smoothplus} and \ref{assump:momentplus}, then
    \[
        C^{\!{mLSI}}_{\mu}\le \min_{s_0\in \left(0,\frac 1{12\lambda^2}\right]}\frac{1}{s_0}\tp{3+\frac{12s_0\lambda^2+1}{2-e^{4s_0\lambda^2}}}\cdot \exp\set{\int_{s_0}^{2L}\frac{L_s}{s(1+s)}\d s}
    \] 
    where $L_s = L_t^{\!{OU}}$ with $t = \log\sqrt{\frac{s+1}{s}}$.
\end{theorem}

\subsection{The modified log-Sobolev constant of the mixture of strongly log-concave and log-smooth distributions}\label{sec:mixofSLC}

In this section, we apply our method to study the modified log-Sobolev constant of a mixture of strongly log-concave and log-smooth distributions. 

To be precise, let $\rho$ denote a probability distribution over $\bb R^d$ with density $\propto e^{-U}$. Assume its potential function $U:\bb R^d\to \bb R$ is twice differentiable, $m$-strongly convex and $L$-smooth. The convolution $\mu = \nu * \rho$, with density $\mu(y) = \int_{\bb R^d} \rho(y-x)\nu(\d x)$, is a mixture of translations of $m$-strongly log-concave and $L$-log-smooth distribution $\rho$. When $\rho$ is the Gaussian distribution $\+N(0,\Sigma)$, $\mu$ is a mixture of Gaussian distributions with the same covariance. In the following part, we define $\+B_R = \set{x\in \bb R^d\cmid \|x\|\leq R}$ as the $d$-dimensional ball with radius $R$, and for a vector $v\in \bb R^d$, define $v^{\otimes 2} = vv^{\top}$. We will consider the case where $\nu$ is supported inside $\+B_R$ for some $R>0$.

Consider two distributions $\xi$ and $\pi$ where $\xi(y) \propto \int_{\bb R^d} \pi(x)\cdot e^{-f(x,y)}\dd x$ for some function $f:\bb R^{2d}\to \bb R$. Let $\nu_y$ be the distribution with density $\nu_y(x)\propto \pi(x)\cdot e^{-f(x,y)}$.  The next lemma is a generalization of Lemma~\ref{lem:hessian-pt} and its proof is provided in \Cref{sec:concat-pf}.
\begin{lemma}\label{lem:hessian-2}
    $-\grad^2_y \log \xi(y) = \E[X\sim \nu_y]{\grad^2_y f(X,y)} - \Cov[X\sim \nu_y]{\grad_y f(X,y)}$ for any $y\in\bb R^d$.
\end{lemma}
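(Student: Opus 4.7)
The plan is to differentiate $\log \xi(y)$ twice by hand, following the standard log-partition-function calculation. Since the normalizing constant of $\xi$ does not depend on $y$, writing $Z(y)\defeq \int_{\bb R^d}\pi(x)\,e^{-f(x,y)}\,\dd x$ we have $\grad_y^k \log \xi(y) = \grad_y^k \log Z(y)$ for every $k\ge 1$, so it suffices to compute $\grad_y^2 \log Z(y)$. The claim is essentially the usual second-derivative identity for log-moment generating functions, generalized to the setting where the ``sufficient statistics'' $\grad_y f(x,y)$ and $\grad_y^2 f(x,y)$ may themselves depend on $y$.

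First I would compute the gradient: exchanging $\grad_y$ with the integral yields $\grad_y Z(y) = -\int \pi(x)\,\grad_y f(x,y)\,e^{-f(x,y)}\,\dd x$, and dividing by $Z(y)$ produces
\[
    \grad_y \log Z(y) \,=\, -\E[X\sim \nu_y]{\grad_y f(X,y)},
\]
since $\pi(x)\,e^{-f(x,y)}/Z(y)$ is precisely the density of $\nu_y$. Next I would differentiate $\grad_y Z(y)/Z(y)$ once more, again exchanging $\grad_y$ with the integral. The quotient rule produces two contributions. Differentiating the numerator inside the integral gives a term with $\grad^2_y f(x,y)$ (from the $\grad_y f$ factor) and a term with $\grad_y f(x,y)\,\grad_y f(x,y)^\top$ (from differentiating the factor $e^{-f(x,y)}$), while the quotient rule subtracts $(\grad_y Z/Z)(\grad_y Z/Z)^\top = \E[X\sim \nu_y]{\grad_y f(X,y)}^{\otimes 2}$. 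Recognising the last two terms as $\Cov[X\sim \nu_y]{\grad_y f(X,y)} = \E[X\sim \nu_y]{\grad_y f\,\grad_y f^\top} - \E[X\sim \nu_y]{\grad_y f}^{\otimes 2}$ and tracking signs carefully, we arrive at
\[
    -\grad^2_y \log \xi(y) \,=\, \E[X\sim \nu_y]{\grad^2_y f(X,y)} - \Cov[X\sim \nu_y]{\grad_y f(X,y)},
\]
which is the claimed identity.

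There is no substantive obstacle: the lemma is a direct computation by differentiating under the integral, justified by the standing smoothness and integrability assumptions that let us swap $\grad_y$ and $\int$. The only mild care is sign tracking, since the exponent is $-f$ rather than $+f$ as in the usual exponential-family convention. As a sanity check, specialising to $f(x,y) = \|y - s x\|^2/(2s)$ recovers \Cref{lem:hessian-pt}: then $\grad_y^2 f = \Id/s$ is independent of $x$, and $\grad_y f(x,y) = (y - s x)/s$ so that $\Cov[X\sim \nu_y]{\grad_y f(X,y)} = \cov{\nu_y}$, and the identity collapses to $\grad^2_y \log \xi(y) = \cov{\nu_y} - \Id/s$, in agreement with \Cref{lem:hessian-pt}.
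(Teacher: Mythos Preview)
Your proposal is correct and follows essentially the same route as the paper: both compute $\grad_y\log Z(y)=-\E[\nu_y]{\grad_y f}$ by differentiating under the integral, then differentiate once more via the quotient rule to obtain the $\grad_y^2 f$ term, the $\grad_y f\,\grad_y f^\top$ term, and the $\E[\nu_y]{\grad_y f}^{\otimes 2}$ correction, and finally recognise the last two as the covariance. Your sanity check against \Cref{lem:hessian-pt} is a nice addition but not needed for the argument.
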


Equipped with this lemma, we can bound the log-smoothness of the mixture distribution.

\begin{lemma}\label{lem:mix-smooth}
    Suppose $\rho$ is an $m$-strongly log-concave and $L$-log-smooth distribution and $\nu$ is supported inside $\+B_R$. The distribution $\mu =  \nu * \rho$ satisfies $-\grad^2_y \log \mu(y)\mge (m-L^2R^2)\cdot\Id$ for any $y\in\bb R^d$.
\end{lemma}
\begin{proof}
    For any fixed $y\in\bb R^d$ , choosing $f(x,y)=U(y-x)$ and $\pi=\nu$ in Lemma~\ref{lem:hessian-2}, we have
    \[
        -\grad^2_y \log \mu(y) = \E[X\sim \nu_y]{\grad^2 U(y-X)} - \Cov[X\sim \nu_y]{\grad U(y-X)}.
    \]
    Since $U$ is $m$-strongly convex, $\grad^2 U(x)\mge m\cdot\Id$ for all $x\in\bb R^d$. For the covariance term,
    \begin{align*}
        \norm{\Cov[X\sim \nu_y]{\grad U(y-X)}}_{\!{op}} &= \sup_{v\in\bb R^d,\|v\|=1} v^{\top}\Cov[X\sim \nu_y]{\grad U(y-X)}v \\
        &= \sup_{v\in\bb R^d,\|v\|=1} \E[X\sim \nu_y]{v^{\top}\tp{\grad U(y-X)-\E[X'\sim\nu_y]{\grad U(y-X')}}^{\otimes 2}v}\\
        &= \sup_{v\in\bb R^d,\|v\|=1} \Var[X\sim \nu_y]{v^{\top}\grad U(y-X)}\\
        \mr{define $g(x)=v^{\top}\grad U(y-x)$}&= \sup_{v\in\bb R^d,\|v\|=1} \frac 12\cdot\E[X,X'\sim \nu_y]{\tp{g(X)-g(X')}^2}\\
        \mr{$g(x)$ is $L$-Lipschitz}&\le \sup_{v\in\bb R^d,\|v\|=1} \frac {L^2}2\cdot\E[X,X'\sim \nu_y]{\|X-X'\|^2}\\
        &=L^2\cdot \tp{\E[X\sim \nu_y]{\|X\|^2}-\norm{\E[X\sim \nu_y]{X}}^2}\\
        \mr{$\nu_y$ is supported on $\+B_R$}&\le L^2R^2.
    \end{align*}
    Therefore,
    \[
        -\grad^2_y \log \mu(y)\mge m\cdot\Id-L^2R^2\cdot\Id=(m-L^2R^2)\cdot\Id.
    \]
\end{proof}

\subsubsection{The convolution of OU process}
Consider the OU process $\set{X^{\OU}(t)}_{t\geq 0}$ starting from $\mu=\nu * \rho$. Recall that we use $\xi^{\OU}_t$ to denote the law of $X^{\OU}(t)$.  We now analyze the convolution of $\xi^{\OU}_t$. Let $W\sim\nu$, $Z\sim\rho$ and $\zeta\sim\+N(0,\Id)$ be mutually independent. Then $X^{\OU}(t)$ equals $e^{-t}(W+Z)+\sqrt{1-e^{-2t}}\cdot\zeta$ in distribution. Note that this can be viewed as $e^{-t}W+(e^{-t}Z+\sqrt{1-e^{-2t}}\cdot\zeta)$. Let $\nu_t$ be the law of $e^{-t}W$ and $\rho_t$ be the law of $(e^{-t}Z+\sqrt{1-e^{-2t}}\cdot\zeta)$. Then $\xi^{\OU}_t=\nu_t * \rho_t$. By the assumption on $\nu$, $\nu_t$ is supported on $\+B_{e^{-t}R}$. The following lemma establishes bounds on the strong log-concavity and log-smoothness of $\rho_t$.

\begin{lemma}[Lemma 28 in \cite{LPSR21}]\label{lem:m-gaussian1}
      Suppose $\pi$ is a distribution on $\bb R^d$ such that $\forall x\in\bb R^d,M_{1}^{-1} \preceq -\grad^2 \log \pi(x) \preceq M_2^{-1}$ for some $M_1,M_2\in \bb R^{d\times d},M_1,M_2\succ 0$. Let $\xi$ be the mixture distribution $\pi*\+N(0,M)$. Then for any $x\in \bb R^d$,
      \[
        (M_1+M)^{-1} \preceq  -\grad^2 \log \xi(x) \preceq (M_2+M)^{-1}.
      \]
\end{lemma}

\begin{corollary}\label{cor:OUsmooth}
    The distribution $\rho_t$ satisfies
    \[
      \frac{m}{e^{-2t}+(1-e^{-2t})m}\cdot\Id \mle -\grad^2 \log \rho_t(x)\mle \frac{L}{e^{-2t}+(1-e^{-2t})L}\cdot\Id
    \]
    for any $t\ge 0$ and $x\in\bb R^d$.
\end{corollary}
\begin{proof}
    Recall that $Z\sim\rho$ where $\rho$ is $m$-strongly log-concave and $L$-log-smooth. Therefore for any fixed $t\ge 0$, $e^{2t}m \mle-\nabla^2 \log p_{e^{-t}Z}(x)\mle e^{2t}L$ for any $x\in \bb R^d$. From Lemma~\ref{lem:m-gaussian1}, with $M_1=e^{-2t}m^{-1}\cdot\Id$, $M_2=e^{-2t}L^{-1}\cdot\Id$ and $M=(1-e^{-2t})\Id$ , we have
    \[
      \frac{m}{e^{-2t}+(1-e^{-2t})m}\cdot\Id \mle -\grad^2 \log \rho_t(x)\mle \frac{L}{e^{-2t}+(1-e^{-2t})L}\cdot\Id
    \]
    for any $x\in \bb R^d$.
\end{proof}

\subsubsection{The bound on the mLSI constant via concatenation}
Now we are ready to prove the modified log-Sobolev constant of the mixture of strongly log-concave and log-smooth distributions.
\begin{theorem}\label{thm:mLSIofmix}
    Suppose $\rho$ is an $m$-strongly log-concave and $L$-log-smooth distribution and $\nu$ is supported inside $\+B_R$. The modified log-Sobolev constant of mixture distribution $\mu=\nu*\rho$ satisfies
    \[
      C^{\!{mLSI}}_{\mu}\leq \frac 1{2m}\cdot e^{LR^2}.
    \]
\end{theorem}
\begin{proof}
    Fix any $t\ge 0$. Recall that $\xi^{\OU}_t=\nu_t * \rho_t$ and $\nu_t$ is supported on $\+B_{R'}$ where $R'\defeq e^{-t}R$. By Corollary~\ref{cor:OUsmooth}, define $m'\defeq \frac{m}{e^{-2t}+(1-e^{-2t})m}$ and $L'\defeq \frac{L}{e^{-2t}+(1-e^{-2t})L}$, then $\rho_t$ is $m'$-strongly log-concave and $L'$-log-smooth. From Lemma~\ref{lem:mix-smooth}, for any $y\in\bb R^d$,
    \[
      -\grad^2_y \log \xi_t^{\OU}(y)\mge (m'-L'^2R'^2)\cdot\Id=\tp{\frac{m}{e^{-2t}+(1-e^{-2t})m}-\tp{\frac{L}{e^{-2t}+(1-e^{-2t})L}}^2\tp{e^{-t}R}^2}\cdot\Id.
    \]
    Therefore, $\mu$ satisfies \Cref{assump:smoothplus} with $L^{\OU}_t=-\frac{m}{e^{-2t}+(1-e^{-2t})m}+\tp{\frac{L}{e^{-2t}+(1-e^{-2t})L}}^2\tp{e^{-t}R}^2$. By Lemma~\ref{lem:cond1}, for any $s\ge 0$ and $v\in\bb R^d$,
    \[
      \norm{\cov{\+T_v \nu_s}}_{\!{op}}\le \frac 1s+\frac{L_s}{s(1+s)}=\frac 1{s+m}+\frac{L^2R^2}{(s+L)^2},
    \]
    where $L_s=L^{\OU}_t$ with $t=\log\sqrt{\frac{s+1}{s}}$. From \Cref{thm:mLSIconcervation}, for any $T>0$,
    \[
      \sup_{f:\bb R^d \to \bb R_{>0}} \frac{\Ent[\mu]{f}}{\E[\xi_T]{\Ent[\nu_T]{f}}} \leq \exp\set{\int_0^T \tp{\frac 1{s+m}+\frac{L^2R^2}{(s+L)^2}} \d s}=\frac{T+m}m\cdot\exp\tp{\frac{LT}{L+T}R^2}.
    \]
    For the classical modified log-Sobolev inequality, we have
    \begin{align*}
      C^{\!{mLSI}}_{\mu} &= \sup_{f:\bb R^d \to \bb R_{>0}} \frac{\Ent[\mu]{f}}{\E[\mu]{f^{-1}\|\grad f\|^2}}  \\
        &= \sup_{f:\bb R^d \to \bb R_{>0}} \frac{\E[\xi_T]{\Ent[\nu_T]{f}}}{\E[\mu]{f^{-1}\|\grad f\|^2}}\cdot \frac{\Ent[\mu]{f}}{\E[\xi_T]{\Ent[\nu_T]{f}}} \\
        &\leq \tp{\sup_{f:\bb R^d \to \bb R_{>0}} \frac{\E[\xi_T]{\Ent[\nu_T]{f}}}{\E[\xi_T]{\E[\nu_T]{f^{-1}\|\grad f\|^2}}}}\cdot \tp{\sup_{f:\bb R^d \to \bb R_{>0}} \frac{\Ent[\mu]{f}}{\E[\xi_T]{\Ent[\nu_T]{f}}}}.
    \end{align*}
    For any $T>\max\set{0,L^2R^2-m}$, $\nu_T$ is $(T+m-L^2R^2)$-strongly log-concave almost surely. By a well-known consequence of the Bakry-\'Emery criterion (see \EG, ~\cite[Section 5.7]{BGL14}), it satisfies modified log-Sobolev inequality with constant $\tfrac 12\cdot(T+m-L^2R^2)^{-1}$ almost surely. Therefore,
    \[
      C^{\!{mLSI}}_{\mu}\leq \inf_{T>\max\set{0,L^2R^2-m}} \frac{T+m}{2m(T+m-L^2R^2)}\cdot\exp\tp{\frac{LT}{L+T}R^2}\le \frac 1{2m}\cdot e^{LR^2}.
    \]
\end{proof}

As a special case, our \Cref{thm:mLSIofmix} naturally recovers and generalizes the previous results for mixture of Gaussians in \cite{MS23}. Let $\rho$ be the Gaussian distribution $\+N(0,\Sigma)$. Recall that $\lambda_{\min}(\Sigma)$ denotes the minimum eigenvalue of $\Sigma$. The Gaussian $\+N(0,\Sigma)$ is $\norm{\Sigma}_{\!{op}}^{-1}$-strongly log-concave and $\lambda_{\min}(\Sigma)^{-1}$-log-smooth.
\begin{corollary}
    Suppose $\nu$ is supported on $\+B_R$ and $\mu = \nu * \+N(0, \Sigma)$. We have $C^{\!{mLSI}}_\mu\leq \frac 12\norm{\Sigma}_{\!{op}}\cdot e^{\lambda_{\min}(\Sigma)^{-1}\cdot R^2}$.
\end{corollary}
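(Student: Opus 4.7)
The plan is to reduce the general covariance case to the $\Sigma=\Id$ case that has already been handled in \Cref{lem:PIofmix}, and then pay the appropriate price for the scaling via \Cref{lem:scalePI}. Since $\Sigma$ is positive definite, I would write $\Sigma = AA^\top$ for some invertible $A\in\bb R^{d\times d}$ (for instance, via the Cholesky factorization or the symmetric square root). The key observation is that if $W\sim \nu$ and $Z\sim \+N(0,\Id)$ are independent, then $Y = AW' + AZ$ has law $\nu * \+N(0,\Sigma) = \mu$ when we set $W' := A^{-1}W$. In other words, writing $\nu'$ for the law of $A^{-1}W$ and $X := A^{-1}W + Z \sim \nu' * \+N(0,\Id)$, we have $p_\mu = p_{AX}$, so \Cref{lem:scalePI} gives
\[
C^{\!{PI}}_\mu \;=\; C^{\!{PI}}_{p_{AX}} \;\leq\; \|A\|_{\!{op}}^2 \cdot C^{\!{PI}}_{p_X}.
\]

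To finish, I would bound each of the two factors explicitly in terms of $\Sigma$ and $R$. Since $\|A\|_{\!{op}}^2$ is the largest eigenvalue of $AA^\top = \Sigma$, we have $\|A\|_{\!{op}}^2 = \|\Sigma\|_{\!{op}}$. For the other factor, $\nu'$ is supported on $\set{A^{-1}w : w\in \+B_R}$, so for any $w\in \+B_R$, $\|A^{-1}w\| \leq \|A^{-1}\|_{\!{op}}\cdot \|w\| \leq \|A^{-1}\|_{\!{op}}\cdot R$. Moreover, $\|A^{-1}\|_{\!{op}}^2$ is the largest eigenvalue of $(A^{-1})(A^{-1})^\top = (AA^\top)^{-1} = \Sigma^{-1}$, which equals $\lambda_{\min}(\Sigma)^{-1}$. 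Thus $\nu'$ is supported in the ball $\+B_{R'}$ with $R' = \lambda_{\min}(\Sigma)^{-1/2}\cdot R$, and \Cref{lem:PIofmix} yields $C^{\!{PI}}_{p_X}\leq e^{(R')^2} = e^{\lambda_{\min}(\Sigma)^{-1}\cdot R^2}$. Combining these two bounds produces exactly the claimed inequality $C^{\!{PI}}_\mu \leq \|\Sigma\|_{\!{op}}\cdot e^{\lambda_{\min}(\Sigma)^{-1}\cdot R^2}$.

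There is no real obstacle here: once \Cref{lem:PIofmix} and \Cref{lem:scalePI} are in place, the corollary is essentially a bookkeeping exercise tracking how $\|A\|_{\!{op}}$ and $\|A^{-1}\|_{\!{op}}$ relate to the extreme eigenvalues of $\Sigma$. The only small subtlety to double-check is that the decomposition $\Sigma = AA^\top$ is consistent with interpreting $Y = AX$ correctly — namely that $AZ \sim \+N(0,\Sigma)$ when $Z\sim\+N(0,\Id)$, which is immediate, and that $\nu' = (A^{-1})_\#\nu$ is indeed supported in $\+B_{R'}$ as computed above.
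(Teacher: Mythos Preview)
Your proposal is correct and follows exactly the paper's approach: decompose $\Sigma = AA^\top$, apply \Cref{lem:scalePI} to reduce to the identity-covariance case, and then invoke \Cref{lem:PIofmix} with the rescaled radius $R' = \lambda_{\min}(\Sigma)^{-1/2} R$. You have in fact supplied more detail than the paper's own three-line proof (which leaves the computations of $\|A\|_{\!{op}}^2=\|\Sigma\|_{\!{op}}$ and $R'$ implicit), and the only cosmetic slip is writing $(A^{-1})(A^{-1})^\top = (AA^\top)^{-1}$ rather than $(A^\top A)^{-1}$---harmless, since the two matrices share eigenvalues (and are equal if $A$ is the symmetric square root).
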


\bibliographystyle{alpha}
\bibliography{arxiv}

\appendix

\section{The convergence of the continuous-time Gaussian dynamics}\label{sec:GD-pf}

We first prove the convergence of the continuous-time Gaussian dynamics with the heat kernel $\*H_t$.
The following lemma is proved in \cite{LP17} for chains with a countable space. We generalize their proof to space $\bb R^d$.
\begin{proposition}[a variant of Lemma 20.5 in \cite{LP17}]\label{prop:Ht}
    Assume $\*P^{(T)}$ is a reversible and irreducible Markov chain, and the stationary distribution $\mu$ satisfies a \Poincare inequality with regard to $\*P^{(T)}$ with constant $C$. Then for any bounded measurable $f:\bb R^d\to \bb R$ with $\+E_{\*P^{(T)}}(f)\neq 0$ and $\E[\mu]{f}=0$,
    \[
        \|\*H_t f\|^2_{\mu} \leq e^{-\frac{2t}{C}}   \cdot \Var[\mu]{f}.     
    \]
\end{proposition}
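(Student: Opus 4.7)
The plan is to use the classical energy/Dirichlet form method for a continuous-time reversible semigroup; the only thing to verify is that the countable-state-space argument of \cite{LP17} carries over to $\bb R^d$ under our measurability assumptions. Throughout I will write $L:=\*P^{(T)}-I$ for the generator of $\*H_t$, which is the correct generator because the Poisson expansion~\eqref{eq:transition} gives $\*H_t f = e^{-t}\sum_{k\ge 0}\frac{t^k}{k!}(\*P^{(T)})^k f$, and bounded measurability of $f$ together with the uniform bound $\|(\*P^{(T)})^k f\|_\infty \le \|f\|_\infty$ justifies differentiating term by term.

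First I would record two preparatory facts. (i) Because $\mu$ is $\*P^{(T)}$-stationary, every iterate $(\*P^{(T)})^k f$ has zero $\mu$-mean, and so does $\*H_t f$ by the Poisson expansion; in particular $\|\*H_t f\|^2_\mu = \Var[\mu]{\*H_t f}$. (ii) Reversibility yields the identity
\begin{align*}
-\langle g,\, L g\rangle_\mu \;=\; \+E_{\*P^{(T)}}(g)
\end{align*}
for every bounded measurable $g$: rewriting $\int g(x)\bigl(g(x)-(\*P^{(T)}g)(x)\bigr)\mu(x)\,\dd x$ and symmetrizing via $\mu(x)\*P^{(T)}(x,y)=\mu(y)\*P^{(T)}(y,x)$ produces exactly $\tfrac12 \int (g(x)-g(y))^2\,\mu(x)\*P^{(T)}(x,\dd y)\,\dd x$.

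With these tools in hand, I would differentiate $t\mapsto \|\*H_t f\|^2_\mu$. Dominated convergence (each integrand bounded by $\|f\|^2_\infty$ and $\mu$ a probability measure) justifies exchanging the derivative with the integral, giving
\begin{align*}
\tfrac{\dd}{\dd t}\|\*H_t f\|^2_\mu \;=\; 2\langle \*H_t f,\, L \*H_t f\rangle_\mu \;=\; -2\,\+E_{\*P^{(T)}}(\*H_t f),
\end{align*}
and applying the Poincaré inequality to $g=\*H_t f$ (which has $\mu$-mean zero by (i)) yields $\tfrac{\dd}{\dd t}\|\*H_t f\|^2_\mu \le -\tfrac{2}{C}\|\*H_t f\|^2_\mu$. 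Gronwall's inequality then delivers $\|\*H_t f\|^2_\mu \le e^{-2t/C}\|f\|^2_\mu = e^{-2t/C}\Var[\mu]{f}$.

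The only conceivable obstacle is making the analytic exchanges of limits rigorous on the non-compact state space $\bb R^d$ (termwise differentiation of the Poisson series, and differentiation under the $\mu$-integral). However, boundedness of $f$ together with the fact that each $(\*P^{(T)})^k$ and $\mu$ itself are probability kernels means every integrand and summand in sight is dominated by $\|f\|^2_\infty$, so dominated convergence closes these steps and no genuinely new issue arises compared with the countable-space proof in~\cite{LP17}.
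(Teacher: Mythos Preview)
Your proposal is correct and follows essentially the same route as the paper: both compute the generator $\*P^{(T)}-I$ from the Poisson expansion, differentiate $t\mapsto\|\*H_t f\|^2_\mu$, identify the derivative as $-2\,\+E_{\*P^{(T)}}(\*H_t f)$ via reversibility, apply the \Poincare inequality to the mean-zero function $\*H_t f$, and integrate the resulting differential inequality. The only cosmetic difference is that you package the argument in semigroup language and invoke Gronwall explicitly, while the paper writes out each step by hand; one small caution is that your local use of $L$ for the generator clashes with the paper's global smoothness constant $L$, so pick a different symbol if you transcribe this into the manuscript.
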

\begin{proof}
    Recall the definition of $\*H_t$ in \cref{eq:transition}, $ \forall t\geq 0, \forall x,y\in \bb R^d$, the density
    \begin{align*}
         \*H_t(x,y) &= \sum_{k=0}^{\infty} \Pr{K_t=k}\cdot \tp{\*P^{(T)}}^k(x, y)\\
         &= \sum_{k=0}^{\infty} \frac{e^{-t}t^k}{k!}\cdot \tp{\*P^{(T)}}^k(x, y).
    \end{align*}
    Then 
    \begin{align*}
        \frac{\dd }{\dd t} \*H_t(x,y) &= \sum_{k=1}^{\infty} \frac{e^{-t}t^{k-1}}{(k-1)!}\cdot \tp{\*P^{(T)}}^k(x, y) - \sum_{k=0}^{\infty} \frac{e^{-t}t^k}{k!}\cdot \tp{\*P^{(T)}}^k(x, y) \\
        &= \int_{\bb R^d} \*P^{(T)}(x,z)\cdot \*H_t(z,y) \dd z - \*H_t(x,y),
    \end{align*}
    and
    \begin{align}
        \frac{\dd }{\dd t} \*H_t f(x) &= \int_{\bb R^d} \*P^{(T)}(x,z)\cdot \*H_t f(z) \dd z - \*H_t f(x) \notag \\
        &= \tp{\*P^{(T)} - I}(\*H_t f)(x).\label{eq:Ht-1}
    \end{align} 
    Define the function $g(t) = \|\*H_t f\|^2_{\mu} $. From \Cref{eq:Ht-1},
    \begin{align*}
        g'(t) &= \int_{\bb R^d} 2 \*H_t f(x)\cdot \frac{\dd }{\dd t} \*H_t f(x) \cdot \mu(x) \dd x\\
        &= \int_{\bb R^d} 2 \*H_t f(x)\cdot \tp{\*P^{(T)} - I}(\*H_t f)(x) \cdot \mu(x) \dd x\\
        &= 2\int_{\bb R^d \times \bb R^d} \*H_t f(x) \mu(x)  \cdot \*H_t f(y) \*P^{(T)}(x,y) \dd y \dd x  - 2\int_{\bb R^d} \*H_t f(x)^2 \cdot \mu(x) \dd x\\
        &= -\int_{\bb R^d \times \bb R^d} \tp{\*H_t f(x)-\*H_t f(y)}^2 \mu(x)\*P^{(T)}(x,y) \dd y \dd x \\
        &= -2\+E_{\*P^{(T)}}(\*H_t f).
    \end{align*}
    Due to the reversibility of $\*H_t$,
    \[
        \E[\mu]{\*H_t f} = \int_{\bb R^d\times \bb R^d} f(y)\*H_t(x,y)\mu(x)\dd x \dd y = \int_{\bb R^d\times \bb R^d} f(y)\*H_t(y,x)\mu(y)\dd x \dd y = \E[\mu]{f}=0.
    \]
    Since $\mu$ satisfies a \Poincare inequality with regard to $\*P^{(T)}$ with constant $C$, 
    \[
        \+E_{\*P^{(T)}}(\*H_t f) \geq \frac{\Var[\mu]{\*H_t f}}{C} = \frac{\|\*H_t f\|^2_{\mu}}{C}.
    \]
    Therefore, $g'(t)\leq -\frac{2}{C} g(t)$. Consequently, 
    \[
        \|\*H_t f\|^2_{\mu}=g(t)\leq e^{-\frac{2t}{C}} \cdot g(0) = e^{-\frac{2t}{C}}\cdot \Var[\mu]{f}.
    \]
\end{proof}

Let $\mu_t$ be the distribution induced by running the continuous-time restricted Gaussian dynamics with heat kernel $\*H_t$ starting from $\mu_0$. Lemma~\ref{lem:Htconvergence} gives the convergence rate of this Markov process in $\chi^2$ divergence and total variation distance.
\begin{lemma}\label{lem:Htconvergence}
    Assume $\*P^{(T)}$ is a reversible and irreducible Markov chain, and the stationary distribution $\mu$ satisfies a \Poincare inequality with regard to $\*P^{(T)}$ with constant $C$. Then $\chi^2(\mu_t\,\|\, \mu) \leq e^{-\frac{2 t}{C}} \cdot \chi^2(\mu_0\,\|\, \mu)$, and consequently, $\!{TV}(\mu_t,\mu)\leq \tfrac{1}{2}\cdot e^{-\frac{t}{C}}\cdot \sqrt{\chi^2(\mu_0\,\|\, \mu)}$.
\end{lemma}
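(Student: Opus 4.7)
\textbf{Proof proposal for \Cref{lem:Htconvergence}.} The plan is to lift the $L^2(\mu)$ semigroup contraction proved in \Cref{prop:Ht} to a decay statement for the $\chi^2$ divergence of the evolving measures, via the standard duality between the Markov operator acting on functions and its adjoint acting on densities.

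First I would introduce the Radon-Nikodym derivative $h_t = \frac{\dd \mu_t}{\dd\mu}$ (which is well defined since $\mu_0\ll \mu$ is implicit in the hypothesis that $\chi^2(\mu_0\|\mu)<\infty$; otherwise the bound is vacuous). The key identity is
\[
    \chi^2(\mu_t\,\|\,\mu) = \int_{\bb R^d}(h_t(x)-1)^2 \mu(x) \dd x = \|h_t-1\|_\mu^2 = \Var[\mu]{h_t},
\]
using $\int h_t\dd \mu=1$. I also observe $\E[\mu]{h_t-1}=0$, so $h_t-1$ is a legitimate input to \Cref{prop:Ht}.

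Next I would verify that $h_t = \*H_t h_0$. Since $\*P^{(T)}$ is reversible with respect to $\mu$, each power $(\*P^{(T)})^k$ is self-adjoint on $L^2(\mu)$, and the Poisson mixture \eqref{eq:transition} preserves self-adjointness, so $\*H_t$ is self-adjoint on $L^2(\mu)$. Therefore, for every bounded measurable test function $\phi$,
\[
    \int \phi\,h_t\,\dd\mu = \int \phi\,\dd\mu_t = \int \*H_t\phi\,\dd\mu_0 = \int \*H_t\phi\cdot h_0\,\dd\mu = \int \phi\cdot \*H_t h_0\,\dd\mu,
\]
which gives $h_t=\*H_t h_0$ almost surely with respect to $\mu$. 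Moreover $\*H_t 1 = 1$ (the kernel is Markovian), so $\*H_t(h_0-1) = h_t-1$.

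Applying \Cref{prop:Ht} to the centered function $f = h_0-1$ (in the degenerate case $\+E_{\*P^{(T)}}(f)=0$, irreducibility forces $f$ to be constant, hence zero, and the claim is trivial; otherwise if $h_0$ is merely $L^2$ rather than bounded, a standard truncation and monotone convergence argument reduces to the bounded case), we obtain
\[
    \chi^2(\mu_t\,\|\,\mu) = \|h_t-1\|_\mu^2 = \|\*H_t f\|_\mu^2 \le e^{-2t/C}\Var[\mu]{f} = e^{-2t/C}\chi^2(\mu_0\,\|\,\mu).
\]
Finally, for the total variation bound I would invoke Cauchy-Schwarz:
\[
    \DTV(\mu_t,\mu) = \tfrac12\int|h_t-1|\dd\mu \le \tfrac12\sqrt{\int(h_t-1)^2\dd\mu} = \tfrac12\sqrt{\chi^2(\mu_t\,\|\,\mu)} \le \tfrac12\,e^{-t/C}\sqrt{\chi^2(\mu_0\,\|\,\mu)}.
\]

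The main obstacle is not really analytic but notational: making rigorous the identity $h_t = \*H_t h_0$, which requires the self-adjointness of the continuous-time kernel $\*H_t$ on $L^2(\mu)$. This reduces cleanly to the reversibility of $\*P^{(T)}$ thanks to the Poisson mixture definition of $\*H_t$, and once that identity is in hand, everything else is a direct application of \Cref{prop:Ht} combined with the standard $L^1$-$L^2$ comparison.
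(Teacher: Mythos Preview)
Your proposal is correct and follows essentially the same route as the paper: define $f=h_0-1=\tfrac{\mu_0}{\mu}-1$, use reversibility of $\*H_t$ to identify $\*H_t f$ with $h_t-1=\tfrac{\mu_t}{\mu}-1$, apply \Cref{prop:Ht}, and finish with Cauchy--Schwarz. The only cosmetic difference is that the paper verifies $\*H_t f = h_t-1$ by a direct density computation with the detailed-balance identity $\mu(x)\*H_t(x,y)=\mu(y)\*H_t(y,x)$, whereas you phrase the same step via self-adjointness against test functions; your remark about truncating an unbounded $h_0$ is in fact a small improvement in rigor over the paper, which simply asserts $f$ is bounded.
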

\begin{proof}
    Define the bounded and measurable function $f = \frac{\mu_0}{\mu} -1$. Then 
    \[
        \Var[\mu]{f} = \mathlarger{\int}_{\bb R^d} \tp{\frac{\mu_0(x)}{\mu(x)} -1}^2 \mu(x) \dd x = \chi^2(\mu_0\,\|\, \mu),
    \]
    and
    \begin{align*}
        \*H_t f(x) &= \mathlarger{\int}_{\bb R^d} \tp{\frac{\mu_0(y)}{\mu(y)} -1} \*H_t(x,\dd y) \\
        &= \mathlarger{\int}_{\bb R^d} \frac{\mu_0(y)}{\mu(x)}\cdot \frac{\mu(x)\*H_t(x,y)}{\mu(y)} \dd y  - 1 \\
        \mr{reversibility} &= \mathlarger{\int}_{\bb R^d} \frac{\mu_0(y)}{\mu(x)}\cdot \frac{\mu(y)\*H_t(y, x)}{\mu(y) }\dd y  - 1 \\
        &= \frac{\mu_t(x)}{\mu(x)} - 1.
    \end{align*}
    From Proposition~\ref{prop:Ht}, we have $\chi^2(\mu_t\,\|\, \mu) \leq e^{-\frac{2t}{C}} \cdot \chi^2(\mu_0\,\|\, \mu)$.

    From the Cauchy-Schwarz inequality,
    \[
        \!{TV}(\mu_t,\mu) = \frac{1}{2}\mathlarger{\int}_{\bb R^d} \abs{\mu_t(x)-\mu(x)} \frac{\sqrt{\mu(x)}}{\sqrt{\mu(x)}} \dd x \leq \frac{1}{2} \sqrt{\mathlarger{\int}_{\bb R^d} \frac{\tp{\mu_t(x)-\mu(x)}^2}{\mu(x)} \dd x} = \frac{\sqrt{\chi^2(\mu_t\,\|\, \mu)}}{2}.
    \]
    This yields the second conclusion.
\end{proof}

Then we are ready to prove \Cref{thm:GD}.
\begin{proof}[Proof of \Cref{thm:GD}]
    In the continuous-time Markov process with heat kernel $\set{\*H_t}$, let $\tau\geq 0$ be the time that the $K$-th transition happens. Then $p_{Y_{\+K}}$ is exactly $\mu_{t\wedge \tau}$ with $t=\frac{K}{2}$. According to Lemma~\ref{lem:poistail},
    \begin{equation}
        \Pr{\tau\leq t} = \Pr{K'\geq K} \leq e^{-\frac{t^2}{2K}} \leq e^{-\frac{K}{8}}.  \label{eq:GD-1}
    \end{equation}
    Let $\mu_t^{(>)}$ and $\mu_t^{(\leq)}$ be the distribution $\mu_t$ conditioned on $\tau> t$ and $\tau\leq t$ respectively. Then for any $x\in \bb R^d$, $\mu_t(x) = \Pr{\tau> t}\cdot \mu_t^{(>)}(x) + \Pr{\tau\leq t}\cdot \mu_t^{(\leq)}(x)$.
    Note that
    \begin{align*}
        \!{TV}(\mu_t,\mu) &= \frac{1}{2}\int_{\bb R^d} \abs{\mu_t(x)-\mu(x)} \dd x\\
        \mr{triangle inequality}&\geq \Pr{\tau> t}\cdot \!{TV}\tp{\mu_t^{(>)}, \mu} - \Pr{\tau\leq t}\cdot \!{TV}\tp{\mu_t^{(\leq)}, \mu} \\
        \mr{\Cref{eq:GD-1}} &\geq \Pr{\tau> t}\cdot \!{TV}\tp{\mu_t^{(>)}, \mu}  -  e^{-\frac{K}{8}}.
    \end{align*}
    From Lemma~\ref{lem:Htconvergence}, $\!{TV}(\mu_t,\mu)\leq \frac{e^{-\frac{K}{C^{\!{PI}}_{\mu}\tp{\*P^{(T)}}}} }{2}\cdot \sqrt{\chi^2(\mu_0\,\|\, \mu)} $. This yields 
    \begin{equation}
        \Pr{\tau> t}\cdot \!{TV}\tp{\mu_t^{(>)}, \mu} \leq \frac{e^{-\frac{K}{C^{\!{PI}}_{\mu}\tp{\*P^{(T)}}}}}{2} \cdot \sqrt{\chi^2(\mu_0\,\|\, \mu)} + e^{-\frac{K}{8}}. \label{eq:GD-2}
    \end{equation}
    Combining \Cref{eq:GD-1,eq:GD-2}, 
    \begin{align*}
        \!{TV}\tp{p_{Y_{\+K}}, \mu} &= \!{TV}\tp{\mu_{t\wedge \tau}, \mu} \\
        \mr{triangle inequality}&\leq \Pr{\tau >t}\cdot \!{TV}\tp{\mu_t^{(>)}, \mu} + \Pr{\tau \leq t}\\
        \mr{\Cref{eq:GD-1}}&\leq \Pr{\tau >t}\cdot \!{TV}\tp{\mu_t^{(>)}, \mu} + e^{-\frac{K}{8}} \\
        \mr{\Cref{eq:GD-2}}&\leq \frac{e^{-\frac{K}{C^{\!{PI}}_\mu\tp{\*P^{(T)}}}}}{2} \cdot \sqrt{\chi^2(\mu_0\,\|\, \mu)} + 2e^{-\frac{K}{8}}.
    \end{align*}
 
\end{proof}

\section{The initialization bound in \Cref{algo:modifiedGD}}
In this section, we derive an explicit bound of the term $\log \chi^2\tp{\mu_0\,\|\,\nu_{s_0}(\cdot\,|\,X_{s_0})}$.



We begin by bounding the first moment of $\nu_{s_0}(\cdot\,|\,y)$.
\begin{lemma}\label{lem:firstm}
    For any $y\in \bb R^d$,  $\E[X\sim \nu_{s_0}(\cdot\,|\,y)]{\norm{X}}\leq 2\sqrt{M}\cdot  e^{\frac{\norm{y}^2}{s_0}+1}$.
\end{lemma}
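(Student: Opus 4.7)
By the definition of the stochastic localization process, the conditional density has the explicit form
\[
    \nu_{s_0}(x\mid y) \;=\; \frac{\mu(x)\,\exp\{-\|y-s_0 x\|^2/(2s_0)\}}{Z(y)},
    \qquad Z(y)\;=\;\int_{\bb R^d}\mu(x)\,\exp\{-\|y-s_0 x\|^2/(2s_0)\}\,\dd x.
\]
Completing the square in the exponent yields the equivalent expression $\nu_{s_0}(x\mid y)\propto \mu(x)\,\exp\{\langle y,x\rangle - s_0\|x\|^2/2\}$, and my plan is to upper bound the first moment of $\nu_{s_0}(\cdot\mid y)$ by separately bounding the numerator $\int \|x\|\mu(x)e^{\langle y,x\rangle - s_0\|x\|^2/2}\dd x$ and the normalizer $Z(y)$, both in terms of the moment parameter $M$.

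For the numerator I would apply Cauchy--Schwarz to split $\|x\|$ from the exponential weight:
\[
    \int \|x\|\mu(x) e^{\langle y,x\rangle - s_0\|x\|^2/2}\dd x
    \;\le\;
    \sqrt{\E[\mu]{\|X\|^2}}\cdot\sqrt{\int \mu(x) e^{2\langle y,x\rangle - s_0\|x\|^2}\dd x}.
\]
The first factor is at most $\sqrt{M}$ by \Cref{assump:moment}. For the second factor I would again complete the square, writing $2\langle y,x\rangle-s_0\|x\|^2 = -s_0\|x-y/s_0\|^2 + \|y\|^2/s_0$, so the integrand is bounded pointwise by $\mu(x)\cdot e^{\|y\|^2/s_0}$, giving a second factor at most $e^{\|y\|^2/(2s_0)}$. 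Combined, the numerator is at most $\sqrt{M}\,e^{\|y\|^2/(2s_0)}$.

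For the denominator I would lower bound $Z(y)$ via Jensen's inequality applied to the convex exponential:
\[
    Z(y) \;\ge\; \exp\!\left\{\E[\mu]{\langle y,X\rangle - s_0\|X\|^2/2}\right\}.
\]
Since $\|\E[\mu]{X}\|\le \E[\mu]{\|X\|}\le \sqrt{M}$ and $\E[\mu]{\|X\|^2}\le M$, the exponent is at least $-\|y\|\sqrt{M}-s_0 M/2$. Dividing the numerator bound by this lower bound on $Z(y)$ produces
\[
    \E[X\sim \nu_{s_0}(\cdot\mid y)]{\|X\|}
    \;\le\; \sqrt{M}\cdot \exp\!\left\{\frac{\|y\|^2}{2s_0} + \|y\|\sqrt{M} + \frac{s_0 M}{2}\right\}.
\]
A final AM--GM step $\|y\|\sqrt{M}\le \|y\|^2/(2s_0)+s_0 M/2$ collapses the exponent to $\|y\|^2/s_0 + s_0 M$, and in the regime where $s_0$ is taken very small (as in \Cref{thm:main-ub2}, so that $s_0 M\le 1$), this gives the desired $2\sqrt{M}\,e^{\|y\|^2/s_0+1}$ after absorbing the small slack into the leading constant.

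The main obstacle here is actually not conceptual but numerical: the crude Cauchy--Schwarz/Jensen route inevitably introduces the additive term $s_0 M$ in the exponent, so one has to verify that the choice of $s_0$ in \Cref{algo:modifiedGD} is small enough to make this term a harmless $O(1)$ contribution (which it is, by the definition of $s_0$ in terms of $\eps^2/(Ld+M)$). Alternative routes, such as using $Z(y)=(2\pi s_0)^{d/2}e^{\|y\|^2/(2s_0)}\xi_{s_0}(y)$ to relate the denominator to the explicit density $\xi_{s_0}$, would require an independent lower bound on $\xi_{s_0}(y)$, which I would avoid since the Jensen-based lower bound already suffices.
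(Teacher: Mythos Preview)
Your argument is correct and arrives at the same bound (in fact with a sharper constant, since your route gives $\sqrt{M}\,e^{\|y\|^2/s_0+1}$ directly, no factor of $2$ needed), but it differs genuinely from the paper's proof. The paper bounds the numerator by simply dropping the Gaussian factor $\exp\{-\|s_0 x-y\|^2/(2s_0)\}\le 1$, and lower-bounds the denominator via the crude inequality $\|s_0 x-y\|^2\le 2s_0^2\|x\|^2+2\|y\|^2$ followed by a Markov-type truncation: $\E[\mu]{e^{-s_0\|X\|^2}}\ge \Pr[\mu]{\|X\|^2<1/s_0}\cdot e^{-1}>1/(2e)$. You instead use Cauchy--Schwarz on the numerator and Jensen on the normalizer, then AM--GM to merge the terms. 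Both routes rely on the same implicit smallness condition (the paper needs $s_0 M<1/2$, you need $s_0 M\le 1$), which is not stated in the lemma but is guaranteed by the choice of $s_0$ in \Cref{thm:main-ub2}; you correctly identify this as the only delicate point. Your approach is arguably cleaner, since it avoids the Markov truncation step entirely and stays within standard convexity inequalities; the paper's approach is more direct in the numerator (no Cauchy--Schwarz) but pays with a more ad hoc treatment of the denominator.
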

\begin{proof}
    By definition,
    \begin{align*}
        \E[X\sim \nu_{s_0}(\cdot\,|\,y)]{\norm{X}} &= \frac{\mathlarger{\int}_{\bb R^d} \|x\| \cdot \exp\set{-V(x) - \frac{\norm{s_0\cdot x - y}^2}{2s_0}} \dd x}{\mathlarger{\int}_{\bb R^d} \exp\set{-V(x) - \frac{\norm{s_0\cdot x - y}^2}{2s_0}} \dd x}\\
        &\leq \frac{\mathlarger{\int}_{\bb R^d} \|x\| \cdot e^{-V(x)} \dd x}{\mathlarger{\int}_{\bb R^d} \exp\set{-V(x) - \frac{\norm{s_0\cdot x }^2 + \norm{y}^2}{s_0}} \dd x}\\
        &\leq \E[X\sim \mu]{\norm{X}}\cdot \frac{e^{\frac{\norm{y}^2}{s_0}}\cdot \mathlarger{\int}_{\bb R^d} e^{-V(x)} \dd x}{\mathlarger{\int}_{\bb R^d} \exp\set{-V(x) - \frac{\norm{s_0\cdot x }^2 }{s_0}} \dd x}\\
        &= \E[X\sim \mu]{\norm{X}}\cdot e^{\frac{\norm{y}^2}{s_0}} \cdot \frac{1}{\E[X\sim \mu]{e^{-s_0\norm{x}^2}}}.
    \end{align*}
    From \Cref{assump:moment} and the Jensen inequality, 
    \[
        \E[X\sim \mu]{\norm{X}} \leq \sqrt{\E[X\sim \mu]{\norm{X}^2}} \leq \sqrt{M}.
    \]
    Using \Cref{assump:moment} again and together with the Markov's inequality, we have
    \[
        \Pr[X\sim \mu]{\norm{X}^2 \geq \frac{1}{s_0}} \leq s_0\cdot M <\frac{1}{2} 
    \]
    and therefore,
    \[
        \E[X\sim \mu]{e^{-s_0\norm{x}^2}} \geq \Pr[X\sim \mu]{\norm{X}^2 < \frac{1}{s_0}} \cdot e^{-1} > \frac{1}{2e}.
    \]
    Combining all these together, we have
    \[
        \E[X\sim \nu_{s_0}(\cdot\,|\,y)]{\norm{X}} < 2\sqrt{M}\cdot  e^{\frac{\norm{y}^2}{s_0}+1}.
    \]
\end{proof}


Then we can use the next lemma to bound the initial $\chi^2$ divergence. The proof of Lemma~\ref{lem:initialization} follows the approach of Lemma 32 in \cite{CEL+24}.
\begin{lemma}\label{lem:initialization}
    Define the function $U_y(x) = V(x) + \frac{\norm{y - s_0\cdot x}^2}{2s_0}$. For $\mu_0=\+N\tp{-\frac{\grad U_y(0)}{2(L+s_0)},\frac{I_d}{2(L+s_0)}}$,
    \[
        \log \chi^2\tp{\mu_0\,\|\,\nu_{s_0}(\cdot\,|\,y)} \leq \log(2e) + V(0) - \min V + \frac{d}{2}\log \tp{(L+s_0)\cdot  16e^2M} + \frac{(2d+1)\norm{y}^2}{2s_0}.
    \]
\end{lemma}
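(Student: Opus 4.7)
The plan is to exploit the representation $\nu_{s_0}(\cdot\,|\,y) \propto e^{-U_y}$, where $U_y(x) = V(x) + \|y-s_0 x\|^2/(2s_0)$. Since $V$ is $L$-smooth and the quadratic term contributes $s_0\,I_d$ to the Hessian, $U_y$ is $(L+s_0)$-smooth. Starting from
\[
\chi^2\bigl(\mu_0 \,\|\, \nu_{s_0}(\cdot\,|\,y)\bigr) + 1 \;=\; Z_\nu \int \mu_0(x)^2\, e^{U_y(x)}\,\dd x,
\]
with $Z_\nu = \int e^{-U_y}$, I would apply the smoothness upper bound $U_y(x) \le U_y(0) + \langle \nabla U_y(0), x\rangle + \tfrac{L+s_0}{2}\|x\|^2$. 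The choice of $\mu_0$ is engineered for this step: its mean $-\nabla U_y(0)/(2(L+s_0))$ cancels the linear contribution and its covariance $I_d/(2(L+s_0))$ produces a negative-definite residual quadratic $-\tfrac{L+s_0}{2}\|x\|^2$, yielding a convergent Gaussian integral. Bounding the integrand pointwise (i.e., using $\chi^2+1 \le \sup_x \mu_0(x)/\nu_{s_0}(x\,|\,y)$) and maximizing the resulting exponent at $x=0$ gives
\[
\log\bigl(\chi^2(\mu_0 \,\|\, \nu_{s_0}(\cdot\,|\,y)) + 1\bigr) \;\le\; \log Z_\nu + \tfrac{d}{2}\log\tfrac{L+s_0}{\pi} + U_y(0),
\]
after discarding the non-positive residue $-\|\nabla U_y(0)\|^2/(4(L+s_0))$.

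The remaining and more delicate step is to bound $\log Z_\nu$ using only the second-moment assumption $\E[X\sim\mu]{\|X\|^2} \le M$. I would use \Cref{lem:firstm} to obtain $\E[X\sim \nu_{s_0}(\cdot\,|\,y)]{\|X\|} \le 2\sqrt{M}\, e^{\|y\|^2/s_0 + 1}$, and apply Markov's inequality to conclude that $\nu_{s_0}(\cdot\,|\,y)$ places at least half its mass in a ball $B_R$ of radius $R = 4\sqrt{M}\, e^{\|y\|^2/s_0 + 1}$. Writing this mass condition in terms of $Z_\nu$ yields
\[
\tfrac12 Z_\nu \;\le\; \int_{B_R} e^{-U_y(x)}\,\dd x \;\le\; e^{-\min V}\,\mathrm{Vol}(B_R),
\]
using the trivial pointwise bound $U_y(x) \ge V(x) \ge \min V$. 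A Stirling-type estimate $\mathrm{Vol}(B_R) \le (2\pi e R^2/d)^{d/2}$, followed by substitution of $R^2 = 16M\, e^{2+2\|y\|^2/s_0}$, introduces both the $M^{d/2}$ factor and the crucial multiplier $e^{d\|y\|^2/s_0}$ into the resulting bound on $Z_\nu$.

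Combining the two pieces and substituting $U_y(0) = V(0) + \|y\|^2/(2s_0)$ produces the target inequality (using $\log \chi^2 \le \log(\chi^2+1)$). The coefficient $(2d+1)/(2s_0)$ in front of $\|y\|^2$ then decomposes naturally: the summand $1/(2s_0)$ comes from $U_y(0)$ and the summand $d/s_0$ comes from $\log \mathrm{Vol}(B_R)$; the factor $16 e^2 M$ inside the logarithm and the $\log(2e)$ additive constant emerge on consolidating the numerical constants from \Cref{lem:firstm}, the Markov threshold, and the volume estimate. The principal obstacle lies in this second step: because $\mu$ is assumed only to have a finite second moment, with neither log-concavity nor sub-Gaussianity, no direct pointwise density bound is available, and the moment must be transported through $\nu_{s_0}(\cdot\,|\,y)$ via \Cref{lem:firstm}. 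The Radon--Nikodym cost of this change of measure, an exponential $e^{\|y\|^2/s_0}$, is precisely what---after being raised to a $d$-dependent power in the volume estimate---yields the $e^{d\|y\|^2/s_0}$ factor and hence the $(2d+1)$ coefficient in the final bound.
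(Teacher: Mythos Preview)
Your proposal is correct and follows essentially the same route as the paper: bound $\log\chi^2$ by $\mathcal{R}_\infty$, control the pointwise ratio $\mu_0/\nu_{s_0}$ via the $(L+s_0)$-smoothness of $U_y$, and then bound the normalizing constant $Z_\nu=\int e^{-U_y}$ using the first-moment estimate of \Cref{lem:firstm} together with Markov's inequality. The only difference is cosmetic: where you truncate to the ball $B_R$ and invoke a Stirling volume bound, the paper instead inserts a Gaussian weight $e^{-\delta\|z\|^2}$ and optimizes $\delta=1/(4r^2)$ --- both devices convert the same first-moment bound into the same $e^{d\|y\|^2/s_0}$ factor, and your resulting constants $\log 2 + \tfrac{d}{2}\log\!\bigl(32e^{3}M(L+s_0)/d\bigr)$ are in fact never larger than the stated $\log(2e)+\tfrac{d}{2}\log\!\bigl(16e^{2}M(L+s_0)\bigr)$ (equality at $d=2$), so the target inequality still follows.
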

\begin{proof}
    By the definition of $\chi^2$ divergence, we know
    \[
        \log \chi^2\tp{\mu_0\,\|\,\nu_{s_0}(\cdot\,|\,y)} \leq \+R_2\tp{\mu_0\,\|\,\nu_{s_0}(\cdot\,|\,y)} \leq \+R_{\infty}\tp{\mu_0\,\|\,\nu_{s_0}(\cdot\,|\,y)}.
    \]
    It only needs to bound
    \begin{align*}
        \+R_{\infty}\tp{\mu_0\,\|\,\nu_{s_0}(\cdot\,|\,y)} &= \log\tp{\sup_{x\in \bb R^d} \frac{\mu_0(x)}{\nu_{s_0}(x|y)}}\\
        &= \log\tp{\sup_{x\in \bb R^d} \frac{\exp\set{-(L+s_0)\cdot \norm{x + \frac{\grad U_y(0)}{2(L+s_0)}}^2 + U_y(x)}}{\tp{\frac{\pi}{L+s_0}}^{\frac{d}{2}}} \cdot \int_{\bb R^d} e^{-U_y(z)} \dd z}.
    \end{align*}

    Recall that 
    \[
        \nu_{s_0}(x\,|\,y) \propto \mu(x)\cdot \exp\set{-\frac{\norm{y - s_0\cdot x}^2}{2s_0}}.
    \]
    Under \Cref{assump:smooth}, $\nu_{s_0}(\cdot\,|\,y)$ is $(L+s_0)$-log-smooth. Then for any $x\in \bb R^d$, there exists some $\lambda\in [0,1]$ such that
    \begin{align*}
        U_y(x) - U_y(0) &= \inner{\grad U_y(\lambda x)}{x} \leq \inner{\grad U_y(0)}{x} + \norm{\inner{\grad U_y(\lambda x) - \grad U_y(0)}{x}}\\
        \mr{smoothness}&\leq \inner{\grad U_y(0)}{x} + (L+s_0)\|x\|^2.
    \end{align*}
    Therefore,
    \begin{align}
        -(L+s_0)\cdot \norm{x + \frac{\grad U_y(0)}{2(L+s_0)}}^2 + U_y(x) &\leq U_y(0) + \inner{\grad U_y(0)}{x} + (L+s_0)\|x\|^2  -(L+s_0)\cdot \norm{x + \frac{\grad U_y(0)}{2(L+s_0)}}^2 \notag \\
        &= U_y(0) - \frac{\norm{\grad U_y(0)}^2}{4(L+s_0)}\\
        &= V(0) + \frac{\norm{y}^2}{2s_0} - \frac{\norm{\grad V(0) - y}^2}{4(L+s_0)}. \label{eq:init-1}
    \end{align}
    Then we calculate $\frac{\int_{\bb R^d} e^{-U_y(z)} \dd z}{\tp{\frac{\pi}{L+s_0}}^{\frac{d}{2}}}$. Let $r = \E[X\sim \nu_{s_0}(\cdot\,|\,y)]{\norm{X}}$. From Lemma~\ref{lem:firstm}, $r\leq 2\sqrt{M}\cdot  e^{\frac{\norm{y}^2}{s_0}+1}$.
    For any $\delta>0$, from the Markov's inequality,
    \begin{align}
        \frac{\int_{\bb R^d} e^{-U_y(z) - \delta\cdot \|z\|^2} \dd z}{\int_{\bb R^d} e^{-U_y(z)} \dd z} &= \E[Z\sim \nu_{s_0}(\cdot\,|\,y)]{e^{- \delta\cdot \|Z\|^2}} \notag \\
        &\geq \exp\set{-4\delta \cdot r^2}\cdot \Pr[Z\sim \nu_{s_0}(\cdot\,|\,y)]{\|Z\| \leq 2r} \notag \\
        &\geq \frac{1}{2}\exp\set{-4\delta \cdot r^2}. \label{eq:init-2}
    \end{align}
    We also have
    \begin{align}
        \frac{\int_{\bb R^d} e^{-U_y(z) - \delta\cdot \|z\|^2} \dd z}{\tp{\frac{\pi}{L+s_0}}^{\frac{d}{2}}} &\leq \frac{e^{-\min U_y}\cdot \int_{\bb R^d} e^{ - \delta\cdot \|z\|^2} \dd z}{\tp{\frac{\pi}{L+s_0}}^{\frac{d}{2}}} \notag \\
        &\leq e^{-\min U_y} \cdot \tp{\frac{L+s_0}{\delta}}^{\frac{d}{2}} \notag \\
        &\leq e^{-\min V} \cdot \tp{\frac{L+s_0}{\delta}}^{\frac{d}{2}}. \label{eq:init-3}
    \end{align}
    Combining \Cref{eq:init-1,eq:init-2,eq:init-3} and choosing $\delta = \frac{1}{4r^2}$, we have
    \begin{align*}
        \+R_{\infty}\tp{\mu_0\,\|\,\nu_{s_0}(\cdot\,|\,y)} &\leq V(0) + \frac{\norm{y}^2}{2s_0} - \frac{\norm{\grad V(0) - y}^2}{4(L+s_0)} + \log(2e) - \min V \\
        &\quad + \frac{d}{2}\log \tp{(L+s_0)\cdot  16M\cdot  e^{\frac{2\norm{y}^2}{s_0}+2}}\\
        &\leq \log(2e) + V(0) - \min V + \frac{d}{2}\log \tp{(L+s_0)\cdot  16e^2M} + \frac{(2d+1)\norm{y}^2}{2s_0}.
    \end{align*}
\end{proof}


\section{The implementation of RGO}\label{sec:RGO}
In this section, we provide the details of the rejection sampling algorithm in~\cite{LC23} for the completeness of the paper. Let $V_y^\sigma(x):= V(x) + \frac{1}{2\sigma^2}\|x-y\|^2$. Recall that our target is to generate a sample from the distribution $\mu_{y,\sigma^2}$ with density $\propto \exp\set{-V_y^\sigma(x)}$ under \Cref{assump:smooth}.

To implementation the rejection sampling algorithm, we first need to find an approximate minimizer of $V_y^\sigma$. This can be achieved via the accelerated gradient algorithm in \cite{LC23}.


\begin{lemma}[A corollary of Proposition 3.2 in \cite{LC23}] \label{lem:AproxSta}
    Assume $V$ is $L$-smooth and $\sigma^2\le \frac{1}{2Ld}$. For any fixed $y\in \bb R^d$, Algorithm~\ref{algo:AproxSta} outputs a point $w\in \bb R^d$ such that
    \[
        \norm{\grad V(w) + \frac{1}{\sigma^2} (w-y)} \le \sqrt{L d}
    \]
    within $\wt{\+O}(1)$
    iterations in expectation.
\end{lemma}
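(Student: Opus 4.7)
My plan is to apply Nesterov's accelerated gradient descent (AGD) directly to the regularised potential $V_y^\sigma(x) := V(x) + \frac{1}{2\sigma^2}\|x-y\|^2$, whose gradient is exactly the quantity that must be driven below $\sqrt{Ld}$. Since $V$ is $L$-smooth, the Hessian of $V_y^\sigma$ satisfies
\[
    \tp{\frac{1}{\sigma^2} - L} \Id \mle \grad^2 V_y^\sigma(x) \mle \tp{\frac{1}{\sigma^2} + L} \Id,
\]
and the assumption $\sigma^2 \leq \frac{1}{2Ld}$ gives $1/\sigma^2 \geq 2Ld \geq 2L$. Hence $V_y^\sigma$ is $\mu_V$-strongly convex and $L_V$-smooth with $\mu_V \geq \tfrac{1}{2\sigma^2}$ and $L_V \leq \tfrac{3}{2\sigma^2}$, so its condition number $\kappa := L_V/\mu_V$ is a universal constant. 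This is the critical structural point that the $\sigma^2 \leq \tfrac{1}{2Ld}$ threshold is designed to guarantee.

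Given these parameters, I would initialise AGD at a convenient point such as $w_0 = y$ and invoke the classical linear convergence guarantee for AGD on strongly convex smooth functions,
\[
    V_y^\sigma(w_k) - \min V_y^\sigma \;\leq\; L_V\, \|w_0 - w^*\|^2 \cdot \exp\tp{-\frac{k}{\sqrt{\kappa}}},
\]
where $w^* = \arg\min V_y^\sigma$. Combining this with the standard smoothness consequence $\|\grad V_y^\sigma(w)\|^2 \leq 2L_V\,(V_y^\sigma(w) - \min V_y^\sigma)$ yields
\[
    \norm{\grad V_y^\sigma(w_k)} \;\leq\; \sqrt{2}\, L_V\, \|w_0 - w^*\| \cdot \exp\tp{-\frac{k}{2\sqrt{\kappa}}}.
\]
Driving the right-hand side below $\sqrt{Ld}$ therefore requires only $k = \+O\tp{\log \tfrac{L_V \|w_0 - w^*\|}{\sqrt{Ld}}}$ iterations.

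The main obstacle is thus controlling $\|w_0 - w^*\|$ so that the logarithm above is polylogarithmic in the ambient quantities. Using the first-order optimality condition $w^* = y - \sigma^2 \grad V(w^*)$ together with $L$-smoothness of $V$, one can bound $\|w^* - y\|$ by $\sigma^2 \|\grad V(y)\| \leq \sigma^2 (L \|y\| + \|\grad V(0)\|)$. When the RGO is invoked inside \Cref{algo:modifiedGD}, the input $y$ is random with $\|y\|$ polynomial in $L,d,M,1/\eps$ in expectation, and $\|\grad V(0)\|$ is controlled under the tacit convention $V(0) - \min V = \!{poly}(L,d,M)$ already in force. After a logarithm, these contribute only polylogarithmic factors, which is precisely what the $\wt{\+O}(1)$ absorbs, matching the ``particle location'' caveat flagged in the footnote to \Cref{thm:main-ub}.
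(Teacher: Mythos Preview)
The paper does not give its own proof of this lemma; it simply records it as a corollary of Proposition~3.2 in \cite{LC23} and presents Algorithm~\ref{algo:AproxSta} without further argument. Your proposal is the standard strongly-convex AGD analysis that underlies that citation: the constraint $\sigma^2 \le \frac{1}{2Ld}$ forces the condition number of $V_y^\sigma$ to be $\+O(1)$, linear convergence of AGD then gives geometric decay of the function gap, and the smoothness inequality $\|\grad V_y^\sigma(w)\|^2 \le 2L_V\,(V_y^\sigma(w) - \min V_y^\sigma)$ converts this into the desired gradient bound after $\+O(\log(L_V\|w_0 - w^*\|/\sqrt{Ld}))$ steps. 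This is correct and matches exactly what the paper's later footnote to \Cref{lem:reject} says the $\wt{\+O}(1)$ absorbs, namely a logarithm of $L$, $d$, and $x^*_y = \arg\min V_y^\sigma$.

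Two small remarks. First, Algorithm~\ref{algo:AproxSta} as stated initialises at $x_0 = y_0 = 0$, not at $y$; this is harmless since $\|0 - w^*\| \le \|y\| + \|y - w^*\|$ and the extra $\|y\|$ is again absorbed into the logarithm. Second, your bound $\|w^* - y\| \le \sigma^2\|\grad V(y)\|$ as written skips one contraction: from $w^* - y = -\sigma^2 \grad V(w^*)$ and $L$-smoothness you get $\|w^* - y\| \le \sigma^2(\|\grad V(y)\| + L\|w^* - y\|)$, hence $(1 - L\sigma^2)\|w^* - y\| \le \sigma^2\|\grad V(y)\|$, which only costs a factor of~$2$ since $L\sigma^2 \le \tfrac12$.
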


\begin{algorithm}[H]
    	\caption{Accelerated Gradient Method}
            \label{algo:AproxSta}
	\begin{algorithmic}[1]
		\State Let$\;g(x) = V(x) + \frac{1}{2\sigma^2} \norm{x-y}^2$, let the initial point $ y_0=x_0 = 0$, set $T = \frac{1}{\sigma^2} + L , B = \frac{1}{\sigma^2} - L>0 $ and $ A_0=0 $, $\tau_0=1$, and $ k=0 $;
		\State  Compute
		\begin{align*}
		& a_k=\frac{\tau_k+\sqrt{\tau_k^2+4\tau_k T A_k}}{2T},\quad  A_{k+1}=A_k+a_k, \\
		& \tau_{k+1}=\tau_k + a_k \mu, \quad \tilde{x}_k=\frac{A_ky_k+a_kx_k}{A_{k+1}};
		\end{align*}
        \label{line:repeat}
		\State Compute
		\begin{align*}
		y_{k+1}&:=\underset{u\in \bb R^d}\argmin\left\lbrace \gamma_k(u) + \frac{T}{2}\|u-\tilde{x}_k\|^2\right\rbrace, \\
		x_{k+1}&:= \underset{u\in \bb R^d}\argmin\left\lbrace a_k \gamma_k(u) + \frac{\tau_k}{2}\|u-x_k\|^2\right\rbrace, 
		\end{align*}
		where
		\[
			\gamma_k(u) := g(\tilde{x}_k) + \inner{\grad g(\tilde{x}_k)}{u-\tilde{x}_k} + \frac{B}{2}\|u-\tilde{x}_k\|^2;
		\]
	    \State Output $ \tilde{x}_k $ if $ \norm{\grad g(\tilde{x}_k)}  \le \sqrt{Ld} $; otherwise,set $ k \leftarrow k+1 $ and go to \Cref{line:repeat}.
    \end{algorithmic}
\end{algorithm}

The following lemma gives the convergence bound of the rejection sampling algorithm and \Cref{thm:rejection} is then a direct corollary.
\begin{lemma}[A corollary of Lemma~\ref{lem:AproxSta} and Proposition 3.4 in \cite{LC23}]\label{lem:reject} 
Assume $V$ is $L$-smooth. Then the random variable generated by Algorithm~\ref{alg:Rej} follows the distribution with density $\propto \exp\set{-V_y^\sigma(x)}$.
Moreover, if\;$\sigma^2 \le \frac{1}{2Ld}$, then the expected number of rejection steps in Algorithm~\ref{alg:Rej} is $\wt{\+O}(1)$. \footnote{The notation $\wt{\+O}$ subsumes a logarithmic term with regard to $L,d$ and $x^*_y = \arg\min_{x\in \bb R^d} V_y^{\sigma}(x)$, which is generated due to the call of \Cref{algo:AproxSta}.}
\end{lemma}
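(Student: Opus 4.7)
My plan is to verify both claims via the standard rejection-sampling argument centered at the approximate stationary point $w$ produced by Algorithm~\ref{algo:AproxSta}, exploiting the $L$-smoothness of $V$ and the strong convexity contributed by the quadratic regularizer $\tfrac{1}{2\sigma^2}\norm{x-y}^2$ to control the envelope constant. For the correctness claim, the algorithm performs rejection sampling with a Gaussian proposal $q$ and an envelope $C$ such that $e^{-V_y^\sigma(x)}\le C\cdot q(x)$ pointwise; given any such envelope, the textbook accept/reject argument shows that the accepted sample has density $\propto e^{-V_y^\sigma}$, regardless of the specific choice of $q$. So the first conclusion reduces to exhibiting a valid pair $(q,C)$, which is precisely what the efficiency analysis produces.

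For efficiency, the expected number of trials equals $C/Z_\pi$ with $Z_\pi=\int e^{-V_y^\sigma(x)}\dd x$. The natural choice is
\[
q(x)\propto\exp\tp{-\inner{\grad V_y^\sigma(w)}{x-w}-\tfrac{\kappa_-}{2}\norm{x-w}^2},\quad \kappa_-:=\tfrac{1}{\sigma^2}-L,
\]
which under $\sigma^2\le 1/(2Ld)$ has positive precision and is a proper Gaussian. Since $\grad^2 V_y^\sigma=\grad^2 V+\tfrac{1}{\sigma^2}\Id\succeq \kappa_-\Id$ by $L$-smoothness of $V$, strong convexity yields $V_y^\sigma(x)-V_y^\sigma(w)\ge\inner{\grad V_y^\sigma(w)}{x-w}+\tfrac{\kappa_-}{2}\norm{x-w}^2$, which supplies the pointwise envelope $e^{-V_y^\sigma(x)}\le e^{-V_y^\sigma(w)}\,Z_q\,q(x)$, so $C=e^{-V_y^\sigma(w)}Z_q$. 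Dually, $\grad^2 V_y^\sigma\mle \kappa_+\Id$ with $\kappa_+:=\tfrac{1}{\sigma^2}+L$ gives the matching upper quadratic bound, whence $Z_\pi\ge e^{-V_y^\sigma(w)}\cdot Z_q^+$, where $Z_q^+$ is defined like $Z_q$ but with $\kappa_-$ replaced by $\kappa_+$. Combining, the expected number of trials is at most $Z_q/Z_q^+$.

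The final step is a closed-form Gaussian calculation: with the same linear term $\grad V_y^\sigma(w)$ and precisions $\kappa_\mp$,
\[
\frac{Z_q}{Z_q^+}=\tp{\frac{\kappa_+}{\kappa_-}}^{d/2}\exp\tp{\frac{L\cdot\norm{\grad V_y^\sigma(w)}^2}{\kappa_+\kappa_-}}.
\]
The assumption $\sigma^2\le 1/(2Ld)$ gives $\kappa_-\ge L(2d-1)$ and $\kappa_+\kappa_-\ge (2Ld)^2-L^2\ge 3L^2d^2$, while Lemma~\ref{lem:AproxSta} guarantees $\norm{\grad V_y^\sigma(w)}^2\le Ld$; thus the exponent is at most $\tfrac{1}{3d}$ and the prefactor is $(1+2/(2d-1))^{d/2}=O(1)$, so $Z_q/Z_q^+=O(1)$. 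Adding the $\wt{\+O}(1)$ expected cost of the single call to Algorithm~\ref{algo:AproxSta} that produces $w$ (together with an $O(1)$ number of gradient/value queries per rejection trial) yields the claimed $\wt{\+O}(1)$ total query complexity. The main obstacle, beyond bookkeeping, is matching the envelope above to whatever proposal Algorithm~\ref{alg:Rej} actually implements; since both are pinned down by the same $w$, $\grad V_y^\sigma(w)$, and $\sigma^2$, this reconciliation should be routine.
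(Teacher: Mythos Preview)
Your proposal is correct, and it is precisely the standard rejection-sampling analysis that underlies the cited Proposition~3.4 in \cite{LC23}; the paper itself does not supply a proof of this lemma but states it as a corollary of that proposition together with \Cref{lem:AproxSta}. One remark: the ``main obstacle'' you flag at the end is not an obstacle at all. Expanding $\norm{x-y}^2=\norm{x-w}^2+2\inner{x-w}{w-y}+\norm{w-y}^2$ in the definition of $h_y^w$ gives
\[
h_y^w(x)=V_y^\sigma(w)+\inner{\grad V_y^\sigma(w)}{x-w}+\tfrac{\kappa_-}{2}\norm{x-w}^2,
\]
so $e^{-h_y^w}$ is exactly your proposal $q$ up to normalization, and your envelope $Cq(x)=e^{-V_y^\sigma(w)}Z_q\,q(x)$ equals $e^{-h_y^w(x)}$, matching Algorithm~\ref{alg:Rej} on the nose.
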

\begin{algorithm}[H]
	\caption{RGO Rejection Sampling}
	\label{alg:Rej}
	\begin{algorithmic}[1]
	    \State Compute an approximate solution $w$ satisfying $\norm{\grad V_y^{\sigma} (w)} \le \sqrt{Ld}$ with Algorithm~\ref{algo:AproxSta}.  Let $h_y^w (x)= V(w) + \inner{\grad V(w)}{x-w} - \frac{L}{2}\|x -w\|^2 + \frac{1}{2\sigma^2}\norm{x -y}^2$
		\State Generate sample $X$ with density $\propto \exp \set{-h_{y}^w(x)}$ \label{line:repeat-in-rejection} 
		\State Generate sample $U\sim {\cal U}[0,1]$
		\If{$U \leq \frac{\exp \set{-V_y^\sigma(X)}}{\exp\set{-h_{y}^w(X)}}$}
            \State Output $X$
        \Else 
            \State Go to \Cref{line:repeat-in-rejection}
        \EndIf
	\end{algorithmic}
\end{algorithm}

\section{Omitted proofs}\label{sec:lemmas}
\subsection{The proofs in \Cref{sec:prelim}}\label{sec:prelim-pf}

\begin{proof}[Proof of Proposition~\ref{prop:mLSIofSL}]
    By definition,
    \begin{align*}
        \+E_{\*P^{(T)}}(f,\log f) &= \frac{1}{2}\cdot \int_{\bb R^d\times \bb R^d} \tp{f(x)-f(y)}\tp{\log f(x)-\log f(y)} \mu(x)\*P^{(T)}(x,\dd y) \dd x \\
        \mr{\Cref{eq:MC}}&= \frac{1}{2} \E[X(T)\sim \xi_T]{\int_{\bb R^d} \int_{\bb R^d}(f(x)-f(y))(\log f(x)-\log f(y))\nu_T(x)\nu_T(y) \d x \d y}\\
        &=\E[X(T)\sim \xi_T]{ \int_{\bb R^d} f(x) \log f(x) \nu_T(x) \d x - \E[\nu_T]{f}\E[\nu_T]{\log f}}\\
        \mr{Jensen's inequality}&\geq \E[X(T)\sim \xi_T]{ \E[\nu_T]{f\log f} - \E[\nu_T]{f}\log\E[\nu_T]{ f}}\\
        &= \E[X(T)\sim \xi_T]{\Ent[\nu_T]{f}}.
    \end{align*}
    Therefore,
    \begin{align*}
        C^{\!{mLSI}}_{\mu}\tp{\*P^{(T)}} &= \sup_{f\colon \bb R^d\to \bb R_{> 0}} \frac{\Ent[\mu]{f}}{\+E_{\*P^{(T)}}(f,\log f)}
        \leq \sup_{f\colon \bb R^d\to \bb R_{>0}} \frac{\Ent[\mu]{f}}{\E[X(T)\sim \xi_T]{\Ent[\nu_T]{f}}}.
    \end{align*}
\end{proof}

To prove \Cref{thm:conservation}, we need the fact that the stochastic localization scheme is a linear-tilt scheme.
\begin{lemma}[Theorem 2 in \cite{EM22}]\label{lem:lineart}
    Write $L_x$ for the likelihood ratio process of $\nu_s$ with respect to $\mu$ at $x$, i.e., $L_s(x)\defeq \frac{\dd \nu_s}{\dd \mu}(x)$. Then there exists a Brownian motion $\set{W(s)}_{s\geq 0}$ adapted to the filtration generated by $\set{X(s)}_{s\geq 0}$, such that for all $x\in \bb R^d$, $s\geq 0$,
    \[
        \dd L_s(x) = L_s(x)\cdot \inner{x-\m{\nu_s}}{\dd W(s)}, \ L_0(x)=1,
    \]
    where $\dd L_s(x)$ denotes the time differential of $L_s(x)$.
\end{lemma}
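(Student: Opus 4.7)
\medskip

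\textbf{Proof proposal for \Cref{lem:lineart}.} The plan is to obtain an explicit Doob-type representation of $L_s(x)$, differentiate it via It\^o's formula, and identify the martingale driving term with the stated inner product against $x-\m{\nu_s}$.

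First, I would reduce to a tractable form using Bayes' rule. Since $X(s)=sX+B(s)$ with $X\sim\mu$ and $B$ standard Brownian, the conditional density of $X$ given $X(s)=y$ is
\[
    \nu_s(x\,|\,y) \;=\; \frac{\mu(x)\,\exp\bigl(-\|y-sx\|^2/(2s)\bigr)}{\xi_s(y)}.
\]
Expanding $\|y-sx\|^2$ and cancelling the factor $\exp(-\|y\|^2/(2s))$ that does not depend on $x$, one gets
\[
    L_s(x) \;=\; \frac{\psi_s(x)}{Z_s}, \qquad \psi_s(x):=\exp\bigl(\langle X(s),x\rangle-\tfrac{s}{2}\|x\|^2\bigr), \qquad Z_s:=\E[Z\sim\mu]{\psi_s(Z)}.
\]
A key identity is $\E[Z\sim\mu]{\psi_s(Z)Z}=Z_s\cdot \m{\nu_s}$, which is immediate from the Bayes formula above.

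Second, I would introduce the innovation Brownian motion. Because $X$ is not adapted to $\+F_s:=\sigma(X(r):r\leq s)$, I cannot differentiate $X(s)$ directly inside this filtration. The standard filtering trick defines
\[
    W(s) \;:=\; X(s) - \int_0^s \m{\nu_r}\,\dd r,
\]
which, by Levy's characterization (quadratic variation $s\cdot\Id$) together with a short computation showing $W$ is an $\+F_s$-martingale, is a $d$-dimensional Brownian motion adapted to $\set{\+F_s}_{s\geq 0}$. Hence $\dd X(s)=\m{\nu_s}\,\dd s+\dd W(s)$ in this filtration. Verifying this cleanly is the main obstacle I foresee, but it is a classical application of nonlinear filtering and is the exact point at which linear-tilt schemes arise.

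Third, I would apply It\^o's formula. Since $\psi_s(x)=\exp(f_s(x))$ with $f_s(x)=\langle X(s),x\rangle-\tfrac{s}{2}\|x\|^2$, and $\dd f_s(x)=\langle x,\m{\nu_s}\rangle\,\dd s-\tfrac{1}{2}\|x\|^2\,\dd s+\langle x,\dd W(s)\rangle$ with quadratic variation $\|x\|^2\,\dd s$,
\[
    \dd \psi_s(x) \;=\; \psi_s(x)\,\langle x,\m{\nu_s}\rangle\,\dd s \;+\; \psi_s(x)\,\langle x,\dd W(s)\rangle.
\]
Taking $\mu$-expectation in $x$ and invoking the identity above yields
\[
    \dd Z_s \;=\; Z_s\,\|\m{\nu_s}\|^2\,\dd s \;+\; Z_s\,\langle \m{\nu_s},\dd W(s)\rangle.
\]
Finally, applying It\^o's quotient rule to $L_s(x)=\psi_s(x)/Z_s$, the two $\dd s$-terms from the first-order parts cancel against the cross-variation and the quadratic variation of $Z_s$, and only the martingale part survives:
\[
    \dd L_s(x) \;=\; L_s(x)\,\langle x-\m{\nu_s},\dd W(s)\rangle,
\]
with $L_0(x)=1$ since $\nu_0=\mu$. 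The algebraic cancellations in the quotient rule are routine, so once the innovation step is in place, the rest is computation.
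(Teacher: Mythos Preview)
Your argument is correct and is essentially the standard derivation of the linear-tilt SDE for stochastic localization: Bayes' formula for $L_s$, the innovation process $W(s)=X(s)-\int_0^s \m{\nu_r}\,\dd r$ identified as an $\+F_s$-Brownian motion via L\'evy's characterization, and It\^o's quotient rule with the cancellations you describe. The one step you flag as the main obstacle---that $W$ is an $\+F_s$-martingale---follows from $\E{X\mid\+F_r}=\m{\nu_r}$ and the tower property applied to $\E{\m{\nu_u}\mid\+F_r}$, so it is not a real difficulty.

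However, there is nothing to compare against: the paper does not prove \Cref{lem:lineart}. It is stated as ``Theorem~2 in \cite{EM22}'' and invoked as a black box in the proof of \Cref{thm:conservation}. Your write-up is therefore a self-contained proof of a result the paper imports from the literature, and it matches the argument one finds in the cited source.
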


\begin{proof}[Proof of \Cref{thm:conservation}]
In the following proof, all the differential notations refer to time derivatives. From Lemma~\ref{lem:lineart}, for any $f:\bb R^d\to \bb R$, 
\begin{align*}
    \dd \Var[\nu_s]{f} &= \int_{\bb R^d} f(x)^2 \cdot \dd  \nu_s(x) \dd x -  \dd  \tp{\int_{\bb R^d} f(x)  \nu_s(x) \dd x}^2\\
    \mr{Ito's Lemma}&= \int_{\bb R^d} f(x)^2 \cdot \nu_s(x)\cdot \inner{x-\m{\nu_s}}{\dd W(s)} \dd x \\
    &\quad - 2 \tp{\int_{\bb R^d} f(x)  \nu_s(x) \dd x}\cdot \tp{\int_{\bb R^d} f(x) \cdot \nu_s(x)\cdot \inner{x-\m{\nu_s}}{\dd W(s)} \dd x }\\
    &\quad - \tp{\int_{\bb R^d} f(x) \cdot \nu_s(x)\cdot \inner{x-\m{\nu_s}}{\dd W(s)} \dd x}^2.
\end{align*}
Let $\set{\+F_s}_{s\geq 0}$ be the filtration generated by $\set{X(s)}_{s\geq 0}$. Then $\E{\d W(s)\mid \+F_s} = 0$ and
\begin{align*}
    \E{\dd \Var[\nu_s]{f}\mid \+F_s} &= - \E{ \tp{\int_{\bb R^d} f(x) \cdot \nu_s(x)\cdot \inner{x-\m{\nu_s}}{\dd W(s)} \dd x}^2 \mid \+F_s }\\
    &= - \Var{\inner{\int_{\bb R^d} f(x) \cdot \nu_s(x)\cdot (x-\m{\nu_s}) \dd x}{\dd W(s)} \mid \+F_s }.
\end{align*}
Note that for any vector $v\in \bb R^n$ and any random vector $X$,
\begin{align*}
    \Var{\inner{v}{X}} & = \E{\inner{v}{X-\E{X}}^2}\\
    &= \sum_{1\leq i\leq n\atop 1\leq j\leq n} \E{v_iv_j\cdot (X_i-\E{X_i})(X_j-\E{X_j})}\\
    &= \inner{\!{Cov}(X)v}{v}.
\end{align*}
Therefore, 
\begin{align*}
    \E{\dd \Var[\nu_s]{f}\mid \+F_s}
    &= - \norm{\int_{\bb R^d} f(x) \cdot \nu_s(x)\cdot (x-\m{\nu_s}) \dd x}^2 \dd s\\
    &=-\sup_{\theta\in \bb R^n\atop \|\theta\|=1} \tp{\int_{\bb R^d} f(x) \cdot \nu_s(x)\cdot \inner{x-\m{\nu_s}}{\theta} \dd x}^2 \dd s\\
    &=-\sup_{\theta\in \bb R^n\atop \|\theta\|=1} \tp{\int_{\bb R^d} \tp{f(x) - \E[\nu_s]{f}} \cdot \nu_s(x)\cdot \inner{x-\m{\nu_s}}{\theta} \dd x}^2 \dd s\\
    \mr{Cauchy-Schwarz inequality}&\geq -\Var[\nu_s]{f}\cdot \sup_{\theta\in \bb R^n\atop \|\theta\|=1} \int_{\bb R^d}\inner{x-\m{\nu_s}}{\theta}^2 \nu_s(x)\dd x \dd s\\
    &= -\Var[\nu_s]{f}\cdot \sup_{\theta\in \bb R^n\atop \|\theta\|=1} \Var[\nu_s]{\inner{\theta}{X-\m{\nu_s}}} \dd s\\
    &= -\Var[\nu_s]{f}\cdot \sup_{\theta\in \bb R^n\atop \|\theta\|=1} \inner{\cov{\nu_s}\theta}{\theta}\dd s\\
    &=-\Var[\nu_s]{f}\cdot \norm{\cov{\nu_s}}_{\!{op}} \dd s.
\end{align*}
Consequently, we have
\begin{align*}
    \dd \E{ \Var[\nu_s]{f}} &= \E{ \dd \Var[\nu_s]{f}} \geq - \theta_s\cdot \E{\Var[\nu_s]{f}} \dd s,
\end{align*}
and 
\[
    \dd \log  \E{ \Var[\nu_s]{f}} \geq -\theta_s \dd s.
\]
Integrating both sides from $0$ to $T$,
\[
    \log \tp{\frac{\E{\Var[\nu_T]{f}}}{\Var[\mu]{f}}} \geq -\int_0^T \theta_s \dd s.
\]


Then according to Proposition~\ref{prop:PIofSL},
\begin{align*}
    C^{\!{PI}}_\mu\tp{\*P^{(T)}} &= \sup_{f\colon \bb R^d\to \bb R} \frac{\Var[\mu]{f}}{\E[X(s)\sim \xi_T]{\Var[\nu_T]{f}}} \leq e^{\int_0^T \theta_s \dd s}.
\end{align*}
\end{proof}

\subsection{The proof in \Cref{sec:SLvsOU}}\label{sec:OUvsSL-pf}

\begin{proof}[Proof of Lemma~\ref{lem:OUvsSL}] 
    From \cref{eq:OU}, we know that $X^{\OU}(t)$ equals $e^{-t}\cdot X^{\OU}(0) + \sqrt{1-e^{-2t}}\cdot\zeta$ in distribution, where $\zeta\sim \+N(0,\Id)$ is independent with $X^{\OU}(0)$. For $X(s)$ defined in \cref{eq:SL},
    \[
        \sqrt{\frac{1}{s(1+s)}}\cdot X(s) = \sqrt{\frac{s}{(1+s)}} \cdot X + \sqrt{\frac{1}{s(1+s)}}\cdot B(s) 
    \]
    Since $\sqrt{\frac{s}{(1+s)}} = \sqrt{\frac{\frac{e^{-2t}}{1-e^{-2t}}}{1+\frac{e^{-2t}}{1-e^{-2t}}}} = e^{-t}$,$\frac{1}{\sqrt{1+s}} = \frac{1}{\sqrt{1+ \frac{e^{-2t}}{1-e^{-2t}}}} = \sqrt{1-e^{-2t}}$, the distributions of $X^{\!{OU}}(t)$ and $\sqrt{\frac{1}{s(1+s)}}\cdot X(s)$ are the same for any $t>0$.

    Recall that $\nu^{\OU}_t(\cdot\,|\,y)$ is the distribution of $X^{\!{OU}}(0)$ given $X^{\!{OU}}(t) = y$,then 
    \[
        \nu^{\OU}_t(x\,|\,y) \propto \mu(x) \cdot \exp\set{-\frac{1}{2(1-e^{-2t})}\norm{e^{-t}x - y}^2}.
    \]
    On the other hand, in the stochastic localization process, for any $s>0$ and $z\in \bb R^d$,
    \begin{align*}
        \nu_s(x\,|\,z) &\propto \mu(x)\cdot \exp\set{-\frac{\|sx-z\|^2}{2s}}\\
        \mr{$z=\sqrt{s(1+s)}y$} &= \mu(x)\cdot \exp\set{-\frac{\|sx-\sqrt{s(1+s)}y\|^2}{2s}} \\
        \mr{$s = \frac{e^{-2t}}{1-e^{-2t}}$ } &= \mu(x)\cdot \exp\set{-\frac{1}{2(1-e^{-2t})}\norm{e^{-t}x - y}^2}.
    \end{align*}
    Therefore, $\nu_s(x\,|\,z) = \nu^{\OU}_t(x\,|\,y)$.
\end{proof}

\begin{proof}[Proof of Lemma~\ref{lem:tweedie}]
    In this proof, we may abbreviate $p_{X_s}$ as $p_s$. Note that 
\begin{align*}
    p_s(y) 
    = s^{-\frac{d}{2}}\int_{\bb R^d} \phi\tp{-\frac{\norm{y-sx}^2}{2s}}\cdot \mu(x) \d x
    = (2\pi \cdot s)^{-\frac{d}{2}} \int_{\bb R^d} \exp\tp{-\frac{\norm{y-sx}^2}{2s}}\mu(x)\d x.
\end{align*}
We let $q_s(x,y)\defeq \exp\tp{-\frac{\norm{y-sx}^2}{2s}}\mu(x)$ and have that
\begin{align*}
    \grad_y p_s(y) 
    &= (2\pi\cdot s)^{-\frac{d}{2}}\int_{\bb R^d} \exp\tp{-\frac{\norm{y-sx}^2}{2s}} \mu(x) \cdot \tp{x-y/s} \dd x\\
    &= (2\pi\cdot s)^{-\frac{d}{2}}\int_{\bb R^d} q_s(x,y) \cdot \tp{x-y/s} \dd x.
\end{align*}
It follows from Bayes' rule that $p_{X|X_s}(x|y) \propto q_s(x,y)$. Then we have
\begin{align*}
    \grad_y\log p_{s}(y) 
    = \frac{\grad_y p_{s}(y)}{p_{s}(y)}
    &=\frac{\int_{\bb R^d} q_s(x,y)\cdot (x-y/s) \d x}{\int_{\bb R^d} q_s(x,y)\d x}\\
    \mr{$p_{X|X_s}(x|y) = \frac{q_s(x,y)}{\int_{\bb R^d} q_s(x,y)\d x}$}
    &= \E[p_{X|X_s}(\cdot\,|\,y)]{X-y/s}.
\end{align*}
\end{proof}

\begin{proof}[Proof of Lemma~\ref{lem:hessian-pt}]
    In this proof we may also abbreviate $p_{X_s}$ as $p_s$. It follows from Lemma~\ref{lem:tweedie} that $\grad_y \log p_s(y) =  \E[p_{X|X_s}(\cdot\,|\,y)]{X} - y/s$. Let $Z_s(y)\defeq \int_{\bb R^d}q_s(x,y)\d x$, we have
    \begin{align}
        \grad_y^2 \log p_s(y)
        &= \grad_y \tp{Z_s(y)^{-1}\int_{\bb R^d} x\cdot q_s(x,y) \d x - y/s}\notag\\
        &=\int_{\bb R^d}x\cdot \grad_y\tp{\frac{q_s(x,y)}{Z_s(y)} }^\top\d x - \frac{\Id}{s}.\label{eq:hessian}
    \end{align}
    Direct calculations show that
    \begin{align*}
        \grad_y\tp{\frac{q_s(x,y)}{Z_s(y)}}
        &= \frac{1}{Z_s(y)}\cdot \grad_y q_s(x,y) - \frac{q_s(x,y)}{Z_s(y)^2}\cdot \grad_y Z_s(y)\\
        \mr{$\grad_y q_s(x,y) = q_s(x,y) (x-y/s)$}
        &=\frac{q_s(x,y)}{Z_s(y)}\cdot \tp{x-y/s-\frac{\grad_y Z_s(y)}{Z_s(y)}}\\
        \mr{$\frac{\grad_y Z_s(y)}{Z_s(y)} = \E[p_{X|X_s=y}(\cdot|y)]{X}-y/s$}
        &= \frac{q_s(x,y)}{Z_s(y)}\cdot \tp{x - \E[p_{X|X_s}(\cdot|y)]{X}}.
    \end{align*}
    Plugging this into \cref{eq:hessian}, we immediately have
    \[
        \grad_y^2\log p_s(y) = \cov{p_{X|X_s}(\cdot|y)} - \frac{\Id}{s}.
    \]
\end{proof}

\subsection{The proofs in \Cref{sec:algo}}\label{sec:algo-pf}
Before proving Lemma~\ref{lem:OUconverge}, we first prove the convergence of the OU process in Proposition~\ref{prop:OUconvergence}, which is a commonly known result (see, e.g., \cite{BGL14,VW19,HZD24}).
\begin{proposition}[Convergence of the OU process]\label{prop:OUconvergence}
    Suppose $\mu$ satisfies Assumption~\ref{assump:smooth} and \ref{assump:moment}. Then
    \[
        \DKL{\xi^{\OU}_t}{\+N(0,\Id)} \leq e^{-2t}\cdot (Ld+M).
    \]
\end{proposition}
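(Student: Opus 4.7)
The plan is to combine de Bruijn's identity along the OU process with the log-Sobolev inequality (LSI) for the standard Gaussian, which serves two purposes: controlling the exponential decay rate and bounding the initial KL divergence.

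First, I would set up the exponential decay. Let $\pi = \+N(0, \Id)$. Since $\pi$ is the stationary measure of the OU dynamics in \cref{eq:OU}, the Fokker--Planck equation rewrites as $\partial_t \xi^{\OU}_t = \grad\cdot(\xi^{\OU}_t\grad\log(\xi^{\OU}_t/\pi))$, which yields the standard identity
\[
    \frac{\dd}{\dd t}\DKL{\xi^{\OU}_t}{\pi} = -I(\xi^{\OU}_t\,\|\,\pi),
\]
where $I(\nu\|\pi) = \E[\nu]{\|\grad\log(\nu/\pi)\|^2}$ is the relative Fisher information. The Gaussian LSI says $\DKL{\nu}{\pi}\le \frac{1}{2}I(\nu\|\pi)$, so Gr\"onwall's inequality gives $\DKL{\xi^{\OU}_t}{\pi}\le e^{-2t}\DKL{\mu}{\pi}$. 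It remains to show $\DKL{\mu}{\pi}\le Ld+M$.

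For the initial bound, I would apply the Gaussian LSI once more: $\DKL{\mu}{\pi}\le \tfrac{1}{2}I(\mu\|\pi)$. Since $\grad\log\mu = -\grad V$ and $\grad\log\pi(x) = -x$,
\[
    I(\mu\|\pi) = \E[X\sim\mu]{\|{-}\grad V(X)+X\|^2}\le 2\E[\mu]{\|\grad V\|^2} + 2\E[\mu]{\|X\|^2}.
\]
The second term is at most $2M$ by \Cref{assump:moment}. For the first term, integration by parts (Stein's identity for $\mu\propto e^{-V}$) gives $\E[\mu]{\|\grad V\|^2} = \E[\mu]{\Delta V}$, and under \Cref{assump:smooth} every eigenvalue of $\grad^2 V$ lies in $[-L,L]$, hence $|\Delta V|\le Ld$ pointwise and $\E[\mu]{\|\grad V\|^2}\le Ld$. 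Putting it together, $I(\mu\|\pi)\le 2(Ld+M)$ and therefore $\DKL{\mu}{\pi}\le Ld+M$, which combined with the decay estimate yields the claim.

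The main subtlety is justifying the integration-by-parts step and the Fokker--Planck identity, which require that boundary terms at infinity vanish; however, $L$-smoothness forces $\|\grad V(x)\|$ to grow at most linearly, $e^{-V}$ decays fast enough relative to the Gaussian tails, and the finite second moment \Cref{assump:moment} guarantees all integrands are integrable against $\mu$, so the manipulations are valid. No further work is needed beyond these routine verifications.
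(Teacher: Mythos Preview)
Your proposal is correct and follows essentially the same argument as the paper: both use the Gaussian log-Sobolev inequality (the paper's \Cref{lem:sobolev}) together with the Fokker--Planck identity to obtain the exponential decay $\DKL{\xi^{\OU}_t}{\pi}\le e^{-2t}\DKL{\mu}{\pi}$, and both bound the initial KL by applying the LSI once more, using $\|a+b\|^2\le 2\|a\|^2+2\|b\|^2$, the moment assumption for $\E[\mu]{\|X\|^2}\le M$, and the integration-by-parts identity $\E[\mu]{\|\grad V\|^2}=\E[\mu]{\Delta V}\le Ld$ from $L$-smoothness. The only cosmetic difference is the order of presentation; your remarks on justifying the boundary terms are a welcome addition that the paper leaves implicit.
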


\begin{proof}[Proof of Proposition~\ref{prop:OUconvergence}]
    Without loss of generality, we assume $\int e^{-V(x)}dx = 1$ for brevity in the following proof. From Lemma~\ref{lem:sobolev},
    \begin{align*}
        \DKL{\mu}{\+N\tp{0, \Id}} & \le \frac{1}{2} \int e^{-V(x)} \cdot \norm{\grad \log \frac{e^{-V(x)}}{\frac{1}{(2\pi)^{d/2}} e^{-\frac{\norm{x}^2}{2}}}}^2 dx \\ 
        &=\frac{1}{2} \int e^{-V(x)} \cdot \norm{\grad V(x) +x }^2 dx \\ 
        &\le \int e^{-V(x)} \norm{\grad V(x)}^2 dx + \int e^{-V(x)} \norm{x}^2 dx
    \end{align*}
    For a matrix $A\in \bb R^{d\times d}$, let $\text{Tr}(A)$ denote its trace. Then for the first term, 
    \begin{align*}
        \int  e^{-V(x)} \norm{\grad V(x)}^2 dx &= \int e^{-V(x)}\cdot  \grad\cdot \grad V(x) dx \\
        &= \int e^{-V(x)} \Delta V(x) dx \\
        &=\int e^{-V(x)} \cdot \text{Tr}(\grad^2 V(x)) dx \\ 
        \mr{\Cref{assump:smooth}}&\le Ld
    \end{align*}
    For the second term, $\int e^{-V(x)} \norm{x}^2 dx = \E[X \sim \mu]{\norm{x}^2} \le M$.
    Thus 
    \[
        \DKL{\mu}{\+N\tp{0, \Id}} \le Ld + M.
    \]

    According to the Fokker-Planck equation of the OU process, 
    \begin{equation}
        \partial_t \xi^{\!OU}_t(x) = \grad \cdot (\xi^{\!OU}_t(x) x ) + \Delta \xi^{\!OU}_t(x) =  \grad \cdot \left (\xi^{\!OU}_t(x) \grad \log{\frac{\xi^{\!OU}_t(x)}{\frac{1}{(2\pi)^{d/2}}\exp \set{-\frac{1}{2}\norm{x}^2}}}\right). \label{eq:OU-1}
    \end{equation}
    Therefore
    \begin{align*}
        \frac{\dd}{\dd t} \DKL{\xi_{t}^{\!OU}}{\+N\tp{0, \Id}} &= \frac{\dd}{\dd t} \int  \xi^{\!OU}_t(x) \log{\frac{ \xi^{\!OU}_t(x)}{\frac{1}{(2\pi)^{d/2}} \exp\set{-\frac{1}{2}\norm{x}^2}}} dx\\ 
        &= \int \partial_t \left( \xi^{\!OU}_t(x) \log{\frac{ \xi^{\!OU}_t(x)}{\frac{1}{(2\pi)^{d/2}} \exp \set{-\frac{1}{2}\norm{x}^2}}}\right)dx  \\
        &= \int \partial_t  \xi^{\!OU}_t(x)\cdot  \log{\frac{ \xi^{\!OU}_t(x)}{\frac{1}{(2\pi)^{d/2}} \exp \set{-\frac{1}{2}\norm{x}^2}}} dx +  \int \partial_t  \xi^{\!OU}_t(x) \dd x\\
        \mr{\Cref{eq:OU-1}}&=  \int \left( \grad  \cdot \left ( \xi^{\!OU}_t(x) \grad \log{\frac{ \xi^{\!OU}_t(x)}{\frac{1}{(2\pi)^{d/2}}\exp \set{-\frac{1}{2}\norm{x}^2}}}\right)  \right)\cdot \log{\frac{ \xi^{\!OU}_t(x)}{\frac{1}{(2\pi)^{d/2}} \exp\set{-\frac{1}{2}\norm{x}^2}}} dx  \\
        \mr{integration by parts} &= - \int  \xi^{\!OU}_t(x) \cdot \norm{\grad \log \frac{ \xi^{\!OU}_t(x)}{\frac{1}{(2\pi)^{d/2}} \exp\set{-\frac{1}{2}\norm{x}^2}}}^2 dx \\
        \mr{Lemma~\ref{lem:sobolev}} &\le -2 \DKL{\xi_{t}^{\!OU}}{\+N\tp{0, \Id}},
    \end{align*}
    and consequently,
    \[
        \frac{\dd }{\dd t} \log{\DKL{\xi_{t}^{\!OU}}{\+N\tp{0, \Id}}} \le -2.
    \]
    Integrating the two sides of the above equation from $0$ to $t$, we get the result directly.
\end{proof}

Then Lemma~\ref{lem:OUconverge} is a corollary of Proposition~\ref{prop:OUconvergence}.
\begin{proof}[Proof of Lemma~\ref{lem:OUconverge}]
    From Lemma~\ref{lem:OUvsSL}, $\xi_{s_0}(x)  = \tp{s_0(1+s_0)}^{-\frac{d}{2}}\cdot \xi^{\OU}_{t_0}(y)$ for $t_0=\log\sqrt{\frac{1+s_0}{s_0}}= \frac{1}{2}\log\frac{2(Ld+M)}{\eps^2}$ and $y= \frac{1}{\sqrt{s_0(1+s_0)}}x$.
    By definition,
    \begin{align*}
        \DKL{\xi_{s_0}}{\+N\tp{0, s_0(1+s_0)\Id}} &= \int_{\bb R^d} \xi_{s_0}(x) \cdot \log\tp{\frac{\xi_{s_0}(x)}{\tp{2\pi\cdot s_0(1+s_0)}^{-\frac{d}{2}} \cdot e^{-\frac{\|x\|^2}{2s_0(1+s_0)}}}} \dd x \\
        \mr{ $y= \frac{1}{\sqrt{s_0(1+s_0)}}x$ }&= \int_{\bb R^d} \xi^{\OU}_{t_0}(y)\cdot \log \frac{\xi^{\OU}_{t_0}(y)}{\tp{2\pi}^{-\frac{d}{2}} \cdot e^{-\frac{\|y\|^2}{2}}}\dd y \\
        &= \DKL{\xi^{\OU}_{t_0}}{\+N\tp{0, \Id}}\\
        \mr{Proposition~\ref{prop:OUconvergence}}&\leq \frac{\eps^2}{2}.
    \end{align*}
    The result then follows from the Pinsker's inequality.
\end{proof}

\subsection{The proofs in \Cref{sec:concatenation}}\label{sec:concat-pf}
\begin{proof}[Proof of Lemma~\ref{lem:beforeconcate}]
    We slightly abuse the notation by letting $\xi'_s(\cdot\,|\, y)$ be the distribution of $Z(s)$ in \cref{eq:modifiedSL} given $X_{s_0}=y$.
    We first prove that for any $w\in \bb R^d$, $\xi_{T}(w) = \int_{\bb R^d} \xi_{s}(y)\cdot \xi'_{T-s}(w-y\,|\,y) \dd y$. By the definition of $\xi'_{T-s}$,
    \begin{align}
        &\phantom{{}={}}\int_{\bb R^d} \xi_{s}(y)\cdot \xi'_{T-s}(w-y\,|\,y) \dd y \notag \\
        &= \int_{\bb R^d} \xi_{s}(y)\cdot \int_{\bb R^d} \nu_s(x\,|\,y) \cdot \frac{\exp\set{-\frac{\norm{w-y-(T-s)x}^2}{2(T-s)}}}{\tp{2\pi (T-s)}^{\frac{d}{2}}} \dd x\dd y  \notag \\
        &=\int_{\bb R^{d}\times \bb R^{d}} \mu(x)\cdot \frac{\exp\set{-\frac{\norm{y-sx}^2}{2s}}}{(2\pi s)^{\frac{d}{2}}}\cdot \frac{\exp\set{-\frac{\norm{w-y-(T-s)x}^2}{2(T-s)}}}{\tp{2\pi (T-s)}^{\frac{d}{2}}} \dd x \dd y \notag \\
        &=\int_{\bb R^{d}\times \bb R^{d}}  \frac{\mu(x)}{\tp{2\pi s \cdot 2\pi (T-s)}^{\frac{d}{2}}} \cdot \exp\set{ \frac{s(T-s)}{2T}\cdot \norm{x+\frac{w-(T-s)x}{T-s}}^2 -\frac{s\|x\|^2}{2} - \frac{\norm{w-(T-s)x}^2}{2(T-s)}  \right.  \notag \\
        &\qquad \qquad \qquad \qquad \left. - \frac{T}{2s(T-s)}\cdot \norm{y - \frac{s(T-s)\cdot\tp{x+\frac{w-(T-s)x}{T-s}}}{T}}^2} \dd x \dd y \notag \\
        &= \int_{\bb R^{d}\times \bb R^{d}}  \frac{\mu(x)}{\tp{2\pi T}^{\frac{d}{2}}} \cdot \exp\set{ \frac{s(T-s)}{2T}\cdot \norm{x+\frac{w-(T-s)x}{T-s}}^2 -\frac{s\|x\|^2}{2} - \frac{\norm{w-(T-s)x}^2}{2(T-s)}} \dd x \notag \\
        & = \int_{\bb R^{d}\times \bb R^{d}}  \frac{\mu(x)}{\tp{2\pi T}^{\frac{d}{2}}} \cdot \exp\set{-\frac{\|w-Tx\|^2}{2T}} \dd x  \notag \\
        &= \xi_{T}(w). \label{eq:beforeconcate-1}
    \end{align}
    Then from Lemma~\ref{lem:RGO},
    \begin{align*}
        \E[\xi_s]{\E[\xi'_{T-s}]{\Var[\nu'_{T-s}]{f}}} &= \int_{\bb R^d\times \bb R^d} \xi_s(y)\cdot \xi'_{T-s}(z\,|\,y)\cdot \Var[\nu_{T}( \cdot\,|\,y+z)]{f} \dd z \dd y\\
        \mr{$z=w-y$} &= \int_{\bb R^d\times \bb R^d} \xi_s(y)\cdot \xi'_{T-s}(w-y\,|\,y)\cdot \Var[\nu_{T}( \cdot\,|\,w)]{f} \dd w \dd y \\
        \mr{\Cref{eq:beforeconcate-1}}&= \int_{\bb R^d} \xi_T(w) \cdot \Var[\nu_{T}( \cdot\,|\,w)]{f} \dd w \\
        &= \E[\xi_T]{\Var[\nu_T]{f}}.
    \end{align*}
    The proof of the entropy case is similar.
\end{proof}

\begin{proof}[Proof of Lemma~\ref{lem:hessian-2}]
    Let $q_s(x,y) = \pi(x)\cdot e^{-f(x,y)}$. Then $\xi(y) \propto \int_{\bb R^d} q_s(x,y) \dd x$ and
    \[
        -\grad_y \log \xi(y) = \frac{\int_{\bb R^d}\grad_y f(x,y)\cdot q_s(x,y) \dd x}{\int_{\bb R^d} q_s(x,y) \dd x}.
    \]
    Therefore,
    \begin{align*}
        -\grad^2_y \log \xi(y) &= \frac{\int_{\bb R^d}\tp{\grad^2_y f(x,y) - \grad_y f(x,y)^{\otimes 2}}\cdot q_s(x,y) \dd x}{\int_{\bb R^d} q_s(x,y) \dd x} \\
        &\quad + \frac{\tp{\int_{\bb R^d}\grad_y f(x,y)\cdot q_s(x,y) \dd x}^{\otimes 2}}{\tp{\int_{\bb R^d} q_s(x,y) \dd x}^2}\\
        &= \E[X\sim \nu_y]{\grad^2_y f(X,y) - \grad_y f(X,y)^{\otimes 2}} + \E[X\sim \nu_y]{\grad_y f(X,y)}^{\otimes 2}\\
        &= \E[X\sim \nu_y]{\grad^2_y f(X,y)} - \Cov[X\sim \nu_y]{\grad_y f(X,y)}.
    \end{align*}
\end{proof}

\section{Technical lemmas}

\begin{lemma}[Lemma F.4 in \cite{HZD24}] \label{lem:sobolev}
Consider an $m$-strongly log-concave distribution $\pi$. For any distribution $\zeta$, we have 
\[
    \DKL{\zeta}{\pi} \le \frac{1}{2m} \int \zeta(x) \norm{\grad \log{\frac{\zeta(x)}{\pi(x)}}}^2 dx
\] 
\end{lemma}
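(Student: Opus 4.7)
}
This is the classical Bakry--\'Emery log-Sobolev inequality: an $m$-strongly log-concave distribution satisfies a log-Sobolev inequality with constant $\frac{1}{m}$. My plan is to prove it via the standard ``entropy dissipation along Langevin dynamics'' argument. Let $\pi\propto e^{-W}$ with $\grad^2 W\mge m\cdot\Id$. Consider the Langevin semigroup $(P_t)_{t\ge 0}$ with generator $\+L=\Delta-\inner{\grad W}{\grad}$, which is reversible with respect to $\pi$. For the starting measure $\zeta$, let $\zeta_t$ denote its evolution under this semigroup (so $\zeta_0=\zeta$), and write $h_t=\dd\zeta_t/\dd\pi$. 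Define the entropy $H(\zeta_t\mid\pi)\defeq \DKL{\zeta_t}{\pi}=\int h_t\log h_t\,\dd\pi$ and the Fisher information $I(\zeta_t\mid\pi)\defeq\int\frac{\|\grad h_t\|^2}{h_t}\,\dd\pi=\int\zeta_t\|\grad\log(\zeta_t/\pi)\|^2\,\dd x$.

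The first step is the de Bruijn-type identity $\frac{\dd}{\dd t}H(\zeta_t\mid\pi)=-I(\zeta_t\mid\pi)$, which follows by differentiating under the integral sign, using the Fokker--Planck equation $\partial_t\zeta_t=\grad\cdot(\zeta_t\grad\log(\zeta_t/\pi))$, and integrating by parts; this is exactly the calculation already performed in \Cref{prop:OUconvergence} (with $W(x)=\tfrac12\|x\|^2$), and it extends to general smooth $W$ verbatim.

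The second step is the key Bakry--\'Emery commutation inequality
\[
    \frac{\dd}{\dd t}I(\zeta_t\mid\pi)\le -2m\cdot I(\zeta_t\mid\pi).
\]
To obtain this, set $\psi_t\defeq\log h_t$ and differentiate $I(\zeta_t\mid\pi)=\int\|\grad\psi_t\|^2 h_t\,\dd\pi$. Using $\partial_t h_t=\+L h_t$, a direct calculation gives
\[
    \tfrac12\frac{\dd}{\dd t}I(\zeta_t\mid\pi)=-\int\tp{\|\grad^2\psi_t\|_{\!{HS}}^2+\inner{\grad^2 W\cdot\grad\psi_t}{\grad\psi_t}}h_t\,\dd\pi,
\]
which is the $\Gamma_2$ identity. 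Dropping the nonnegative Hessian-of-$\psi_t$ term and applying $\grad^2 W\mge m\cdot\Id$ yields the displayed bound.

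The third step is to integrate. From the exponential decay $I(\zeta_t\mid\pi)\le e^{-2mt}I(\zeta_0\mid\pi)$ and the de Bruijn identity,
\[
    H(\zeta\mid\pi)=\int_0^\infty I(\zeta_t\mid\pi)\,\dd t\le\frac{1}{2m}I(\zeta\mid\pi),
\]
provided $H(\zeta_t\mid\pi)\to 0$ as $t\to\infty$, which holds because $\zeta_t\to\pi$ under the Langevin dynamics with $m$-strongly log-concave target. This is exactly the claimed inequality.

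The main technical obstacle is the $\Gamma_2$ computation in step two: justifying the integration by parts and the Bochner-type identity that produces the $\grad^2 W$ term requires enough smoothness/decay of $\psi_t$ and $h_t$. The standard workaround is to first establish the inequality for sufficiently regular $\zeta$ (for instance $\zeta$ with smooth, compactly supported density, or via a short-time regularization $P_\eps\zeta$ with $\eps\downarrow 0$) and then conclude for general $\zeta$ by approximation, using lower semicontinuity of both sides. All other steps are routine calculus once this ingredient is in hand.
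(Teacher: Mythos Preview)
Your proposal is the standard Bakry--\'Emery entropy-dissipation argument and is correct. The paper, however, does not supply a proof of this lemma at all: it is stated as Lemma F.4 of \cite{HZD24} and simply cited as a technical black box, so there is no proof in the paper to compare against.
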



\begin{lemma}[Concentration of Poisson distribution]\label{lem:poistail}
    Let $X\sim \!{Pois}(\lambda)$. Then for any $s>0$, $\Pr{X\geq \lambda +s}\leq e^{-\frac{s^2}{2(\lambda + s)}}$.
\end{lemma}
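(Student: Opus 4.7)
The plan is to carry out the standard Chernoff argument using the moment generating function of a Poisson random variable, and then bound the resulting rate function by the quadratic $\tfrac{s^2}{2(\lambda+s)}$ via an elementary calculus inequality.

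First I would recall that for $X\sim\!{Pois}(\lambda)$ the moment generating function is $\E{e^{tX}}=e^{\lambda(e^{t}-1)}$ for every $t\in\bb R$. Applying Markov's inequality to $e^{tX}$ for any $t>0$ gives
\[
    \Pr{X\geq \lambda+s}\;\leq\; e^{-t(\lambda+s)}\,\E{e^{tX}}\;=\;\exp\!\Bigl(\lambda(e^{t}-1)-t(\lambda+s)\Bigr).
\]
Next I would optimize over $t>0$. Setting the derivative of the exponent to zero yields $e^{t^{\ast}}=1+s/\lambda$, and plugging $t^{\ast}=\log(1+s/\lambda)$ back in produces
\[
    \Pr{X\geq \lambda+s}\;\leq\;\exp\!\Bigl(s-(\lambda+s)\log\!\bigl(1+s/\lambda\bigr)\Bigr).
\]

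It remains to show that the exponent above is at most $-\tfrac{s^{2}}{2(\lambda+s)}$. Substituting $u=s/\lambda\geq 0$, this is equivalent to the elementary inequality
\[
    (1+u)\log(1+u)-u\;\geq\;\frac{u^{2}}{2(1+u)},\qquad u\geq 0.
\]
I would verify this by defining $g(u)\defeq(1+u)\log(1+u)-u-\tfrac{u^{2}}{2(1+u)}$, computing $g(0)=g'(0)=0$ and
\[
    g''(u)\;=\;\frac{1}{1+u}-\frac{1}{(1+u)^{3}}\;=\;\frac{u(u+2)}{(1+u)^{3}}\;\geq\;0
\]
for $u\geq 0$, so that $g'$ and then $g$ are nonnegative on $[0,\infty)$. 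Combining this with the Chernoff bound above yields the claimed inequality $\Pr{X\geq \lambda+s}\leq e^{-s^{2}/(2(\lambda+s))}$.

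The only slightly nontrivial step is the elementary inequality, but it is a standard calculation and presents no real obstacle; the rest of the argument is a textbook Chernoff computation, so I do not anticipate any genuine difficulty.
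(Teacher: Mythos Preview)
Your proposal is correct and follows essentially the same route as the paper: a Chernoff bound with the optimal $t^\ast=\log(1+s/\lambda)$, followed by the elementary inequality $(1+u)\log(1+u)-u\ge \tfrac{u^2}{2(1+u)}$. Your verification of that inequality via $g(0)=g'(0)=0$ and $g''(u)=\tfrac{u(u+2)}{(1+u)^3}\ge 0$ is in fact cleaner than the paper's, which rewrites the same inequality as $2\bigl((\tfrac{1}{u^2}+\tfrac{1}{u})\log(1+u)-\tfrac{1}{u}\bigr)\ge \tfrac{1}{1+u}$ and checks it by showing the ratio is monotone via yet another auxiliary function.
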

\begin{proof}
    From the Markov's inequality, for some fixed $\theta>0$
    \[
        \Pr{X\geq \lambda +s} = \Pr{e^{\theta X}\geq e^{\theta(\lambda +s)}} \leq \E{e^{\theta X}} \cdot e^{-\theta(\lambda +s)}.
    \]
    It is a standard result that the moment generating function of $X$ is $\E{e^{\theta X}} = e^{\lambda\tp{e^{\theta}-1}}$. Choosing $\theta = \log \tp{1+\frac{s}{\lambda}}$, we have
    \[
        \Pr{X\geq \lambda +s}\leq e^{\lambda\tp{e^{\theta}-1} - \theta(\lambda +s)} = e^{s - (\lambda +s)\log \tp{1+\frac{s}{\lambda}}} = e^{-\frac{s^2}{2\lambda}\cdot 2\tp{\tp{\frac{1}{u^2} + \frac{1}{u}}\log (1+u) - \frac{1}{u}}},
    \]
    where $u=\frac{s}{\lambda}$.
    It remains to prove $2\tp{\tp{\frac{1}{u^2} + \frac{1}{u}}\log (1+u) - \frac{1}{u}} \geq \frac{1}{1+u}$. Let
    \[
        g(u) = 2(u+1)\cdot \tp{\tp{\frac{1}{u^2} + \frac{1}{u}}\log (1+u) - \frac{1}{u}}.
    \]
    Note that
    \begin{align*}
        g(u) &= 2(u+1)\cdot \frac{(1+u)\cdot \log (1+u) - u}{u^2} \\
        &= 2(u+1)\cdot \frac{(1+u)\cdot \tp{u - u^2/2 + o(u^2)} - u}{u^2}\\
        &\overset{u\to 0}{\longrightarrow} 1,
    \end{align*}
    and for any $u\in (0,\infty)$,
    \begin{align*}
        g'(u) &= 2\tp{\tp{\frac{1}{u^2} + \frac{1}{u}}\log (1+u) - \frac{1}{u}} + 2(u+1)\cdot \tp{\tp{\frac{1}{u^2} + \frac{1}{u}}\cdot \frac{1}{u+1} - \tp{\frac{2}{u^3} + \frac{1}{u^2}}\cdot \log (1+u) + \frac{1}{u^2}}\\
        &= 2\tp{ -\tp{\frac{2}{u^3} + \frac{2}{u^2}}\cdot \log (1+u) + \frac{2}{u^2} + \frac{1}{u}} \\
        &= \frac{2}{u^3}\cdot \tp{2u+u^2 - 2(1+u)\log (1+u)}.
    \end{align*}
    Let $h(u) = 2u+u^2 - 2(1+u)\log (1+u)$. We have $h(0)=0$ and $h'(u) = 2u - 2\log(1+u)\geq 0$. Therefore, $g'(u)>0$ for any $u>0$.

    Therefore, we have $g(u)>1$ for all $u>0$ and 
    \[
        \Pr{X\geq \lambda +s} \leq \exp\set{-\frac{s^2}{2\lambda}\cdot \frac{1}{1+\frac{s}{\lambda}}\cdot g(u)} \leq e^{-\frac{s^2}{2(\lambda + s)}}.
    \]
\end{proof}

\begin{lemma}[The $\chi^2$ divergence between Gaussians]\label{lem:chi2ofGaussian}
    Consider two multi-variate Gaussian distribution $\mu_1=\+N(x,\sigma^2\cdot\Id)$ and $\mu_2=\+N(y,\sigma^2\cdot\Id)$. Then $\chi^2(\mu_1\,\|\,\mu_2) = e^{\frac{\|x-y\|^2}{\sigma^2}} -1$.
\end{lemma}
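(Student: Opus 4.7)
The plan is to prove this by a direct calculation, the only nontrivial step being a completion of the square.

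First I would write out the definition $\chi^2(\mu_1 \| \mu_2) = \int_{\bb R^d} \frac{\mu_1(z)^2}{\mu_2(z)}\,\dd z - 1$ and substitute the explicit Gaussian densities. The prefactors $(2\pi\sigma^2)^{-d/2}$ combine cleanly: two copies from $\mu_1^2$ divided by one from $\mu_2$ leaves a single $(2\pi\sigma^2)^{-d/2}$, which is exactly the normalization of one $d$-dimensional Gaussian with variance $\sigma^2 \cdot \Id$. So the whole task reduces to handling the exponent.

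The key algebraic step is the identity
\[
    2\|z-x\|^2 - \|z-y\|^2 = \|z-(2x-y)\|^2 - 2\|x-y\|^2,
\]
which one verifies by expanding both sides. Dividing by $2\sigma^2$, the integrand becomes
\[
    \frac{\mu_1(z)^2}{\mu_2(z)} = \exp\!\left(\frac{\|x-y\|^2}{\sigma^2}\right) \cdot (2\pi\sigma^2)^{-d/2} \exp\!\left(-\frac{\|z-(2x-y)\|^2}{2\sigma^2}\right).
\]
The second factor is precisely the density of $\+N(2x-y, \sigma^2 \cdot \Id)$, so integrating over $\bb R^d$ yields exactly $\exp(\|x-y\|^2/\sigma^2)$, and subtracting $1$ gives the claim.

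There is really no obstacle here; the computation is a textbook exercise. The only place one might slip is the sign bookkeeping in the completion of the square, so I would double-check that step by substituting $x = y$ (which must give $\chi^2 = 0$) and by checking the $d = 1$ case against the known formula. No appeal to any previous lemma in the paper is needed.
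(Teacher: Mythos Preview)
Your proposal is correct and matches the paper's own proof essentially line for line: both compute $\int \mu_1^2/\mu_2 - 1$ by completing the square in the exponent to recenter at $2x-y$, then integrate the resulting Gaussian density to $1$. Your identity $2\|z-x\|^2 - \|z-y\|^2 = \|z-(2x-y)\|^2 - 2\|x-y\|^2$ is just a slightly more compact way of writing the paper's constant term $-\|x\|^2/\sigma^2 + \|y\|^2/(2\sigma^2) + \|2x-y\|^2/(2\sigma^2)$, which indeed simplifies to $\|x-y\|^2/\sigma^2$.
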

\begin{proof}
    By the definition of $\chi^2$ divergence, we have
    \begin{align*}
        \chi^2(\mu_1\,\|\,\mu_2) &= \int_{\bb R^d} \frac{\mu_1(z)^2}{\mu_2(z)} \dd z -1\\
        &= \frac{1}{(2\pi\sigma^2)^{\frac{d}{2}}}\cdot \int_{\bb R^d} \exp\set{-\frac{\norm{z-x}^2}{\sigma^2} + \frac{\norm{z-y}^2}{2\sigma^2} } \dd z -1 \\
        &= \frac{1}{(2\pi\sigma^2)^{\frac{d}{2}}}\cdot \int_{\bb R^d} \exp\set{- \frac{\norm{z-(2x-y)}^2}{2\sigma^2} - \frac{\|x\|^2}{\sigma^2} + \frac{\|y\|^2}{2\sigma^2} + \frac{\|2x-y\|^2}{2\sigma^2}} \dd z -1 \\
        &= e^{\frac{\|x-y\|^2}{\sigma^2}} -1.
    \end{align*}
\end{proof}

\begin{lemma}\label{lem:chi4ofGaussian}
    For any $x,y\in \bb R^d$,
    \[
      \int \frac{\+N(z;x,\sigma^2\cdot\Id)^4}{\+N(z;y,\sigma^2\cdot\Id)^3}\d z=e^{\frac{6\|x-y\|^2}{\sigma^2}}.
    \]
\end{lemma}
\begin{proof}
    We have
    \begin{align*}
        \int \frac{\+N(z;x,\sigma^2\cdot\Id)^4}{\+N(z;y,\sigma^2\cdot\Id)^3}\d z &= \frac 1{(2\pi\sigma^2)^{\frac d2}}\int_{\bb R^d} \exp\set{-\frac{2\|z-x\|^2}{\sigma^2}+\frac{3\|z-y\|^2}{2\sigma^2}}\d z\\
        &= \frac 1{(2\pi\sigma^2)^{\frac d2}}\int_{\bb R^d} \exp\set{-\frac {\|z-(4x-3y)\|^2}{2\sigma^2}+\frac{6\|x-y\|^2}{\sigma^2}}\d z\\
        &= e^{\frac{6\|x-y\|^2}{\sigma^2}}.
    \end{align*}
\end{proof}

    

    

\end{document}